\spnewtheorem{observation}[theorem]{Observation}{\bfseries}{\itshape}
\spnewtheorem{observationapp}[theorem]{$\star$ Observation}{\bfseries}{\itshape}
\spnewtheorem{theoremapp}[theorem]{$\star$ Theorem}{\bfseries}{\itshape}
\spnewtheorem{myclaim}[theorem]{Claim}{\bfseries}{\itshape}
\spnewtheorem{mydefinition}[theorem]{Definition}{\bfseries}{\itshape}
\spnewtheorem{mylemma}[theorem]{Lemma}{\bfseries}{\itshape}
\spnewtheorem{mylemmaapp}[theorem]{$\star$ Lemma}{\bfseries}{\itshape}
\spnewtheorem{mycorollary}[theorem]{Corollary}{\bfseries}{\itshape}
\newcolumntype{L}[1]{>{\raggedright\let\newline\\\arraybackslash\hspace{0pt}}m{#1}}
\newcolumntype{C}[1]{>{\centering\let\newline\\\arraybackslash\hspace{0pt}}m{#1}}
\newcolumntype{R}[1]{>{\raggedleft\let\newline\\\arraybackslash\hspace{0pt}}m{#1}}
\numberwithin{equation}{section}
\DeclareRobustCommand{\rchi}{{\mathpalette\irchi\relax}}
\newcommand{\irchi}[2]{\raisebox{\depth}{$#1\chi$}} 
\newcommand{\C}{\mathcal{C}}
\newcommand{\Q}{\mathcal{Q}}
\newcommand{\M}{\mathcal{M}}
\newcommand{\F}{\mathcal{F}}
\newcommand{\R}{\mathbb{R}}
\newcommand{\N}{\mathbb{N}}
\newcommand{\Z}{\mathbb{Z}}
\newcommand{\powerset}[1]{\mathscr{P}(#1)}
\renewcommand{\P}{\mathcal{P}}
\renewcommand{\S}{\mathcal{S}}
\newcommand{\order}[1]{\mathcal{O}(#1)}
\newcommand{\ordernopoly}[1]{\tilde{\mathcal{O}}(#1)}
\newcommand{\ordernoinput}[1]{\mathcal{O}^*(#1)}
\newcommand{\yes}{\textup{\textsc{yes}}\xspace}
\newcommand{\no}{\textup{\textsc{no}}\xspace}
\newcommand{\im}[1]{\textup{Im}(#1)\xspace}
\newcommand{\openneighbour}[2]{N_{#1}(#2)}
\newcommand{\closedneighbour}[2]{N_{#1}[#2]}
\newcommand{\pdci}{{\rchi_\delta^S}}
\newcommand{\fpt}{{\sf FPT}\xspace}
\newcommand{\nph}{\textsf{NP}-hard\xspace}
\newcommand{\wih}{\textsf{W}-hard\xspace}
\newcommand{\DS}{\textup{\textsc{Dominating Set}}\xspace}
\newcommand{\HS}{\textup{\textsc{Hitting Set}}\xspace}
\newcommand{\Col}{\textup{\textsc{Graph Coloring}}\xspace}
\newcommand{\ListCol}{\textup{\textsc{List Coloring}}\xspace}
\newcommand{\PreCol}{\textup{\textsc{Pre-Coloring Extension}}\xspace}
\newcommand{\DC}{\textup{\textsc{DomCol}}\xspace}
\newcommand{\DomCol}{\textup{\textsc{Dominator Coloring}}\xspace}
\newcommand{\npsubconp}{\textup{\textsf{NP}}\subseteq\textup{\textsf{coNP}/poly}\xspace}
\newcommand{\CDCol}{\textsc{Class Domination Coloring}\xspace}
\newcommand{\classdcol}{CD coloring\xspace}
\newcommand{\pclassdcol}{partial CD coloring\xspace}
\newcommand{\CDC}{\textup{\textsc{CD Coloring}}\xspace}
\newcommand{\pddci}{{\rchi^S}}
\newcommand{\DCCVD}{\textup{\textsc{DomCol-CVD}}\xspace}
\newcommand{\DCinstance}{(G,\ell)}
\newcommand{\HSinstance}{(U,\F,\kappa)}
\newcommand{\domcol}{dominator coloring\xspace}
\newcommand{\pdc}{partial dominator coloring\xspace}
\newcommand{\pdomcolinstance}{\pdci}
\newcommand{\defproblem}[3]{
	\begin{tcolorbox}[colback=gray!5!white,colframe=gray!75!black]
		\vspace{-1mm}
			\begin{tabular*}{\textwidth}{@{\extracolsep{\fill}}lr} #1   \\ \end{tabular*}
			{\bf{Input:}} #2  \\
			{\bf{Question:}} #3
		\end{tcolorbox}
	}
\Crefname{observation}{Observation}{Observations}
\Crefname{observationapp}{Observation}{Observations}
\Crefname{theoremapp}{Theorem}{Theorems}
\Crefname{mycorollary}{Corollary}{Corollaries}
\Crefname{mylemma}{Lemma}{Lemmas}
\Crefname{mycorollaryapp}{Corollary}{Corollaries}
\Crefname{mylemmaapp}{Lemma}{Lemmas}
\begin{document}
\title{Dominator Coloring and CD Coloring in Almost Cluster Graphs\footnote{A preliminary version of this paper appeared in th 18th Algorithms and Data Structures Symposium (WADS 2023).}}
\author{
	Aritra Banik \inst{1} \and
	Prahlad Narasimhan Kasthurirangan \inst{2} \and
	Venkatesh Raman \inst{3} 
}
\authorrunning{Banik et. al.}
\institute{
	National Institute of Science Education and Research, HBNI, Bhubaneswar, India \email{aritra@niser.ac.in}\\ \and
	Stony Brook University, SUNY, New York, USA
	\email{prahladnarasim.kasthurirangan@stonybrook.edu}
	\and
	The Institute of Mathematical Sciences, HBNI, Chennai, India 		
	\email{vraman@imsc.res.in}
}

\maketitle              

\begin{abstract}
In this paper, we study two popular variants of \Col{} -- \DomCol{} and \CDCol. In both problems, we are given a graph $G$ and a $\ell \in \N$ as input and the goal is to properly color the vertices with at most $\ell$ colors with specific constraints. In \DomCol, we require for each $v \in V(G)$, a color $c$ such that $v$ dominates all vertices colored $c$. In \CDCol, we require for each color $c$, a $v \in V(G)$ which dominates all vertices colored $c$. These problems, defined due to their applications in social and genetic networks, have been studied extensively in the last 15 years. While it is known that both problems are fixed-parameter tractable (\fpt{}) when parameterized by $(t,\ell)$ where $t$ is the treewidth of $G$, we consider strictly structural parameterizations which naturally arise out of the problems' applications. 

We prove that \DomCol is \fpt when parameterized by the size of a graph's \textit{cluster vertex deletion} (CVD) set and that \CDCol is \fpt parameterized by CVD set size plus the number of remaining cliques. En route, we design simpler and faster \fpt algorithms when the problems are parameterized by the size of a graph's \textit{twin cover}, a special CVD set. When the parameter is the size of a graph's \textit{clique modulator}, we design a randomized single-exponential time algorithm for the problems. These algorithms use an inclusion-exclusion based polynomial sieving technique and add to the growing number of applications using this powerful algebraic technique.

\end{abstract}

\section{Introduction}
\noindent
Graphs motivated by applications in bio-informatics, social networks, and machine learning regularly define edges between data points based on some notion of similarity. As a consequence, we are often interested in how ``close" a given graph is to a (special type of) \textit{cluster graph} -- a graph where every component is a clique. A popular measure of this ``closeness" is the \textit{cluster editing distance}. A graph $G$ has cluster-editing distance $k$ if it is the smallest number such that there exists a set of $k$ edges whose addition to or deletion from $G$ results in a cluster graph. As an introduction to the extensive literature surrounding this parameter, we refer the reader to  \cite{Bocker2009,Guo2009,Bocker2011,Fomin2014}.

Another popular parameter of this type is the \textit{cluster vertex deletion set size} (CVD set size) \cite{Huffner2010,DouchaKratochvil2012,MajumdarRaman2017,Goyal2018}. A CVD set in a graph is a subset of vertices whose deletion leaves a cluster graph. The CVD set size of a graph is the size of a smallest sized CVD set. Note that a graph with cluster-editing distance $k$ has a CVD set of size $2k$. Thus, CVD set size is a smaller parameter than cluster-editing distance. 

In this paper we use CVD set size to study the (parameterized) complexity of two variants \Col{} -- \DomCol and \CDCol. A \textit{coloring} of a graph $G$ is a function $\rchi \colon V(G) \to C$, where $C$ is a set of \textit{colors}. A \textit{proper coloring} of $G$ is a coloring of $G$ such that $\rchi(u) \neq \rchi(v)$ for all $(u,v) \in E(G)$. The set of all vertices which are colored $c$, for a $c \in C$, is called the \textit{color class} $c$. We sometimes refer to the color $c$ itself as a color class. We let $|\rchi|$ denote $|\im{\rchi}|$, the size of the image of $\rchi$. A vertex $v \in V(G)$ \textit{dominates} $S \subseteq V(G)$ if $S \subseteq \closedneighbour{G}{v}$. A \textit{\domcol} $\rchi$ of $G$ is a proper coloring of the graph such that for all $v \in V(G)$, $v$ dominates a color class $c \in \im{\rchi}$. A \textit{\classdcol} $\rchi$ of $G$ is a proper coloring of the graph such that for all $c \in \im{\rchi}$ there exist a $v \in V(G)$ such that $v$ dominates all vertices in the color class $c$. We are now ready to define our problems of interest.
\defproblem{\textsc{\DomCol} (\DC)}
{A graph $G$; an integer $\ell$}
{Does there exist a \domcol{} $\rchi$ of $G$ with $|\rchi|\leq \ell$?}

\defproblem{\textsc{\CDCol} (\CDC)}
{A graph $G$; an integer $\ell$}
{Does there exist a \classdcol{} $\rchi$ of $G$ with $|\rchi|\leq \ell$?}
We use $\DCinstance$ to denote an instance of both these problems since it will be clear from context which problem we are referring to. While both problems have a rich theoretical history (see \Cref{relatedwork}), \CDC has garnered renewed interest due to its practical applications in social networks and genetic networks -- the problem is equivalent to finding the minimum number of (i)  \textit{stranger groups} with a common \textit{friend} in social network graphs \cite{Chen2014}; and (ii) \textit{gene groups} that do not directly regulate each other but are regulated by a common gene in genetic networks \cite{KlavzarTavakoli2021}.

\subsection{Notations}\label{Notations}
\subsubsection{Graph Notations}
Let $G$ be a graph. We use $V(G)$ and $E(G)$ to denote the set of vertices and edge of $G$, respectively. Throughout the paper we use $n$ to denote $|V(G)|$.  For a vertex $v$, we use $\openneighbour{G}{v}$ to denote the set of its neighbors and $\closedneighbour{G}{v}$ is defined to be $\openneighbour{G}{v} \cup \{v\}$. For any graph $G$ and a set of vertices $M \subseteq V(G)$, we denote the subgraph of $G$ induced by $M$ by $G[M]$. We let $G - M$, for a $M \subseteq V(G)$, denote $G[V(G) \setminus M]$. A \textit{matching} $\M$ is a subset of edges with no common endpoints. If $(u,v) \in \M$, we let $\M(u) = v$. Most of the symbols and notations used for graph theoretical concepts are standard and taken from \cite{Diestel2012}.

\subsubsection{Parameterized Complexity and Algorithms.} 
The goal of parameterized complexity is to find ways of solving \nph problems more efficiently than brute force: here the aim is to restrict the combinatorial explosion to a parameter that is hopefully much smaller than the input size. Formally, a {\em parameterization} of a problem is assigning a positive integer parameter $k$ to each input instance and we say that a parameterized problem is {\em fixed-parameter tractable} (\fpt) if there is an algorithm that solves the problem in time $f(k)\cdot \vert I \vert ^{O(1)}$, where $|I|$ is the size of the input and $f$ is an arbitrary computable function that depends only on the parameter $k$. We use $\ordernoinput{f(k)}$ to denote the running time  of such an algorithm. Such an algorithm is called an \fpt algorithm and such a running time is called \fpt running time. There is also an accompanying theory of parameterized intractability using which one can identify parameterized problems that are unlikely to admit \fpt algorithms. These are essentially proved by showing that the problem is \wih. We refer the interested readers to books such as \cite{Cygan2015,Fedor2019} for an introduction to the theory of parameterized algorithms.

\subsection{Related Work} \label{relatedwork}
\DC was introduced by Gera \textit{et al.} in 2006 \cite{Gera2006} while \CDC was introduced by Merouane \textit{et al.} in 2012 \cite{Merouane2015} (the problem was termed \textsc{Dominated Coloring} here). These papers proved that \DC and \CDC are \nph (even for a fixed $\ell \geq 4$). Unlike \Col, both \DC and \CDC can be solved in polynomial time when $\ell=3$ \cite{Chellali2012,Merouane2015}. These problems, which marry two of the most well-studied problems in graph theory -- \Col and \DS, have been studied in several papers in the last 15 years. Results in these papers can be broadly categorized into two.

First, there have been several crucial results which establish lower and upper bounds on the size of an optimal \domcol and \classdcol of graphs belonging to special graph classes. For example, refer papers \cite{Gera2007,Arumugam2011,PandaPandey2015,Henning2015,Bagan2017,Chen2017} for results on \DC and \cite{Chen2014,Merouane2015,Bagan2017,KlavzarTavakoli2021} for results on \CDC. The second (seemingly more sparse) are algorithmic results on these two problems. Even for simple graph classes such as trees, algorithmic results have been hard to obtain -- indeed, after Gera \textit{et al.} showed that \DC can be solved in constant-time for paths in \cite{Gera2006}, it took close to a decade and incremental works \cite{Chellali2012,ChellaliMerouane2012} before a polynomial time algorithm was developed for trees in \cite{MerouaneChellali2015}! It is still unknown if \DC restricted to forests is polynomial time solvable. While \DC and \CDC seem extremely similar on the surface, we note a striking dichotomy in complexity results involving the two problems -- \DC is \nph restricted to \textit{claw-free graphs} while \CDC is polynomial time solvable for the same graph class \cite{Bagan2017}. 

The parameterized complexity of \DC and \CDC were first explored by Arumugam \textit{et al.} in 2011 \cite{Arumugam2011} and by Krithika \textit{et al.} in 2021 \cite{Krithika2021} respectively. The authors expressed the problems in \textit{Monodic Second-Order Logic} (MSOL) and used a theorem due to Courcelle and Mosbah \cite{CourcelleMosbah1993} to prove that \DC and \CDC parameterized by $(t,\ell)$, where $t$ is the \textit{treewidth} of the input graph, is \fpt. Their expression of these problems in MSOL immediately also shows (by \cite{CourcelleOlariu2000}) that \DC and \CDC parameterized by $(w,\ell)$, where $w$ is the \textit{clique-width} of the input graph, is \fpt. However, both problems have remained unexplored when viewed through the lens of other \textit{structural parameters} that measure the distance (commonly vertex deletion) from a tractable graph class. Such parameters have become increasing popular in the world of parameterized algorithms since they are usually small in practice. We refer the interested reader to the following survey by Fellows \textit{et al.} for an overview of structural parameterization \cite{Fellows2013} and \cite{Jansen2013,Goyal2018} for its use in studying \Col and \DS. 
Our paper initiates the study of structural parameterizations of \DC and \CDC.

\subsection{Our Results, Techniques, and Organization of the Paper} \label{our results}

As a graph with bounded vertex cover has bounded treewidth, using results from \cite{Arumugam2011,Krithika2021}, it is easy to show that the \DC and \CDC are \fpt parameterized by a graph's \textit{vertex cover}. We give details in \Cref{VC}. Due to its general nature, the algorithm has a large runtime. We design faster algorithms for more natural (and smaller) parameters. Our main results are tabulated in \Cref{resulttable}. Our overarching result is the following: \DC parameterized by CVD set size is \fpt. This is shown through an involved branching algorithm. We also show that \CDC parameterized by $(k,q)$, where $q$ the number of cliques that remain on deleting a CVD (which is of size $k$) is \fpt. We design much faster algorithms for larger parameters (i.e., special CVD sets).

We consider two well-studied parameters of this type -- the size of a \textit{clique modulator} (CLQ) and that of a \textit{twin cover} (TC). In \Cref{clq}, we design randomized algorithms for the two problems which run in $\ordernoinput{c^k}$-time for a small constant $c$ when the parameter is CLQ. We show that our algorithm for \CDC is optimal unless the \textit{Exponential-Time Hypothesis} fails. These algorithms use an inclusion-exclusion based polynomial sieving technique in addition to an exact single-exponential algorithm to solve \DC that we develop in \Cref{exactalg}. We believe that this algebraic method holds great potential for use in other \Col variants.  

\begin{table}[ht]
	\centering
	\begin{tabular}{|c|c|c|c|c|c|}
		\hline
		& Exact& CLQ & TC & CVD Set \\ \hline
		\DC &  $\ordernopoly{4^n}$ $\clubsuit$& $\ordernoinput{16^k}$ $\clubsuit$&    $\ordernoinput{2^{\order{k \log{k}}}}$ $\clubsuit$          &   $\ordernoinput{2^{\order{2^k}}}$ $\clubsuit$           \\ \hline
		\CDC &  $\ordernopoly{2^n}$  \cite{Krithika2021}& $\ordernoinput{2^k}$ $\clubsuit$ &  $\ordernoinput{2^{\order{k \log{k}}}}$ $\clubsuit$& $\ordernoinput{2^{\order{2^kkq\log{q}}}}$ $\clubsuit$\\ \hline
	\end{tabular}
	\caption{A summary of results. Cells marked $\clubsuit$ are proved in this paper.}
	\label{resulttable}
\end{table}

We show that \DC and \CDC admit $\ordernoinput{2^{\order{k \log{k}}}}$-time algorithms when $k$ is the size of a twin cover in \Cref{tc}. For this purpose, we introduce the notion of a \pdc and a \pclassdcol and show that their corresponding extension problems (similar to \PreCol) can be solved quickly. We prove that the extension problem involving \DC can be solved using a relationship between \DC and \ListCol that we establish. On the other hand, we formulate the \CDC extension problem as an \textit{Integer Linear Program} which, in turn, can be solved using well known methods. We then show that an optimal-sized \domcol (resp. \classdcol) can be obtained as an extension of a small number of \pdc{}s (resp. \classdcol{}s). Since optimal CVD sets, TCs, and CLQs can be found quickly \cite{Huffner2010,Ganian2015,Gutin2021}, we implicitly assume that these sets are also given as input. \Cref{lowerbounds} establishes some lower bounds for \DC and \CDC with respect to these parameters.

\section{Parameterized by Vertex Cover Size}\label{VC}
We show that \DC and \CDC parameterized by the size of the input graph's vertex cover (a subset of vertices whose deletion leaves an edgeless graph) is \fpt as a consequence \cite{Arumugam2011} and \cite{Krithika2021} respectively. These papers \cite{Arumugam2011,Krithika2021} showed that  \DC and \CDC parameterized by $(t,\ell)$ are \fpt. Let $M = \{v_1,v_2 \dots v_k\}$ be a vertex cover of size $k$ of the input graph $G$. Note that every isolated vertex of $G$ must get its own color in a valid \domcol (\classdcol) of $G$. Thus, removing all isolated vertices from $G$ and reducing $\ell$ by the number of such vertices gives us equivalent instances of the problems. If $\ell > k$ in a connected graph $G$ then this is a \yes instance of \DC \cite{Gera2006,Henning2015}. Similarly, it is easy to see that if $\ell > 2k$ in a connected graph $G$ then this is a \yes instance of \CDC{} -- color $v_i \in  M$ with $c_i$ and use color $c_{k+i}$ to color a $v \in V(G) \setminus M$ which has $v_i$ as the vertex with smallest index in $\closedneighbour{G}{v} \cap M$. Moreover, since $t \leq k$ \cite{Jansen2013,Cygan2015}, by the results referred to above, we have an algorithm which solves \DC and \CDC in $f(k) \cdot |V(G)|^{\order{1}}$-time for some computable $f$.  
\begin{observation}
	\DC and \CDC, parameterized by the vertex cover number, are \fpt. 
\end{observation}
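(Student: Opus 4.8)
The plan is to leverage the existing fixed-parameter tractability of \DC and \CDC under the combined parameter $(t,\ell)$, where $t$ is treewidth, established in \cite{Arumugam2011} and \cite{Krithika2021} respectively. A vertex cover of size $k$ witnesses $t \leq k$, so the single missing ingredient is a bound on $\ell$ by a function of $k$. Once both $t$ and $\ell$ are $\order{k}$, the cited $(t,\ell)$-algorithms run in $f(k)\cdot |V(G)|^{\order{1}}$-time, which is exactly \fpt in $k$.

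First I would preprocess the instance by deleting all isolated vertices and decreasing $\ell$ by their number. This is sound because an isolated vertex $v$ has $\closedneighbour{G}{v}=\{v\}$, so it must form a singleton color class in any valid coloring: in a \domcol it can only dominate its own class, and in a \classdcol the dominator $w$ of its class satisfies $v \in \closedneighbour{G}{w}$, which for isolated $v$ forces $w=v$ and hence makes the class $\{v\}$. The reduced instance is therefore equivalent, and we may assume $G$ has no isolated vertex.

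Next I would establish the threshold dichotomy that bounds $\ell$. Writing $M$ for the given vertex cover, the set $V(G)\setminus M$ is independent. For \DC, assign the $k$ vertices of $M$ pairwise distinct colors and give all of $V(G)\setminus M$ one further color: this is proper, each $M$-vertex dominates its own singleton class, and each vertex of $V(G)\setminus M$—being non-isolated after preprocessing—has a neighbor in $M$ whose singleton class it dominates. Thus $k+1$ colors always suffice, so $\ell > k$ yields \yes. For \CDC, I would instead color each $v\in V(G)\setminus M$ by the index of its smallest-index neighbor in $M$, using $k$ additional colors; each such class is dominated by the corresponding $M$-vertex and each $M$-singleton is self-dominated, so $2k$ colors suffice and $\ell > 2k$ yields \yes. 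In the complementary regime we have $\ell \leq k$ for \DC and $\ell \leq 2k$ for \CDC, i.e.\ $\ell = \order{k}$.

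Finally, combining $t \leq k$ with $\ell = \order{k}$ and invoking the algorithms of \cite{Arumugam2011,Krithika2021} gives the claimed running time $f(k)\cdot |V(G)|^{\order{1}}$, establishing the observation. The only step needing care is the large-$\ell$ argument: one must check that the two explicit colorings are simultaneously proper and satisfy their respective domination constraints, and that these $\order{k}$ bounds survive without a connectivity hypothesis once isolated vertices have been removed (connectivity, invoked in the referenced bounds, is only used to guarantee that every outside vertex has an $M$-neighbor, which preprocessing already secures). Everything else follows directly from the cited $(t,\ell)$-tractability results.
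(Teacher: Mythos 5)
Your proposal is correct and follows essentially the same route as the paper: delete isolated vertices (adjusting $\ell$), bound $\ell$ by $\order{k}$ via the thresholds $\ell > k$ for \DC and $\ell > 2k$ for \CDC, and then invoke the $(t,\ell)$-parameterized algorithms of \cite{Arumugam2011,Krithika2021} together with $t \leq k$. The only (minor, and welcome) differences are that you prove the \DC threshold by an explicit $(k+1)$-coloring where the paper cites \cite{Gera2006,Henning2015}, and you correctly observe that non-isolation plus the vertex-cover property already guarantees every outside vertex an $M$-neighbor, so the connectivity hypothesis the paper invokes is not actually needed.
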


\section{Exact Algorithm for DomCol} \label{exactalg}
We present an inclusion-exclusion based algorithm to solve \DC. First, we require some definitions. Let $\mathcal{U}$ be an arbitrary set of elements and $\S \subseteq \powerset{\mathcal{U}}$. We assume that $\S$ can be enumerated in $\ordernopoly{2^{|\mathcal{U}|}}$-time. An $\ell$\textit{-partization} of the system $(\mathcal{U}, \S)$ is a set $\{S_1,  S_2 \dots S_\ell\} \subseteq \S$ such that (i) $\cup^\ell_{i=1}S_i = \mathcal{U}$ and (ii) $S_i \cap S_j = \emptyset$ for $i \neq j$. The following theorem, which forms the bedrock of popular inclusion-exclusion based algorithms, was proved in \cite{Bjorklund2009}. We refer an interested reader to Chapter 4 of \cite{Fomin2010} for a discussion on this algorithmic technique. We use $\ordernopoly{\cdot}$ to hide polynomial terms.
\begin{theorem} \label{inclusion_exclusion}
	One can decide, in $\ordernopoly{2^{|\mathcal{U}|}}$-time and exponential space, if there exists an $\ell$-partization of $(\mathcal{U},\S)$.
\end{theorem}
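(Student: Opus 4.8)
The plan is to reduce the existence question to a counting problem and then evaluate that count by inclusion-exclusion over subsets of $\mathcal{U}$, following the set-partitioning paradigm of \cite{Bjorklund2009}. First I would observe that an $\ell$-partization exists if and only if the number of \emph{ordered} tuples $(S_1,\dots,S_\ell) \in \mathcal{S}^\ell$ whose members are pairwise disjoint and satisfy $\bigcup_i S_i = \mathcal{U}$ is nonzero: passing between ordered tuples and unordered sets changes this count only by a factor of at most $\ell!$, and a partition into $\ell$ pairwise-disjoint parts forces $\ell \le |\mathcal{U}|$. Hence it suffices to \emph{count} such tuples and test positivity.

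The first key step is to count \emph{covers} rather than partitions, which inclusion-exclusion handles cleanly. For $X \subseteq \mathcal{U}$, let $a(X) = |\{S \in \mathcal{S} : S \subseteq X\}|$ be the number of sets in $\mathcal{S}$ avoiding $\mathcal{U} \setminus X$. Classifying an arbitrary tuple by the set of elements it misses yields, by the standard alternating-sum identity, that the number of ordered $\ell$-tuples over $\mathcal{S}$ whose union is exactly $\mathcal{U}$ equals
$$\sum_{X \subseteq \mathcal{U}} (-1)^{|\mathcal{U} \setminus X|}\, a(X)^\ell.$$

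The second key step -- and the crux of the argument -- is to isolate the disjoint tuples among all covering tuples by means of a cardinality-generating weight. Since any tuple covering $\mathcal{U}$ satisfies $\sum_i |S_i| \ge |\mathcal{U}|$, with equality precisely when the $S_i$ are pairwise disjoint, I would attach to each set $S$ the monomial $z^{|S|}$. Writing $A_X(z) = \sum_{S \in \mathcal{S},\, S \subseteq X} z^{|S|}$ and repeating the same inclusion-exclusion at the level of generating functions gives
$$P(z) = \sum_{X \subseteq \mathcal{U}} (-1)^{|\mathcal{U} \setminus X|}\, A_X(z)^\ell,$$
in which $[z^{m}] P(z)$ counts the ordered covering tuples of total size exactly $m$. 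Taking $m = |\mathcal{U}|$ selects precisely the tuples whose union is $\mathcal{U}$ and whose sizes sum to $|\mathcal{U}|$, i.e. the disjoint ones; thus the system admits an $\ell$-partization if and only if $[z^{|\mathcal{U}|}] P(z) > 0$.

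It then remains to account for the time and space. The coefficient vectors of all polynomials $A_X(z)$ can be produced simultaneously as a size-graded subset (zeta) transform of the indicator of $\mathcal{S}$, at cost $\ordernopoly{2^{|\mathcal{U}|}}$ in both time and space; this is where the exponential space is incurred. For each of the $2^{|\mathcal{U}|}$ subsets $X$, forming $A_X(z)^\ell$ truncated at degree $|\mathcal{U}|$ needs only $\order{\mathrm{poly}(|\mathcal{U}|) \log \ell}$ arithmetic operations by repeated squaring, and since $\ell \le |\mathcal{U}|$ the integers involved have $\mathrm{poly}(|\mathcal{U}|)$ bits, so arithmetic stays within the hidden polynomial factors. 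Summing with the appropriate signs and reading off $[z^{|\mathcal{U}|}]$ decides the instance in $\ordernopoly{2^{|\mathcal{U}|}}$ time overall. I expect the only delicate point to be the disjointness-versus-cardinality correspondence underlying the second step; the transform computation and polynomial bookkeeping are routine once that identity is in place.
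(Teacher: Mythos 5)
Your proof is correct and takes essentially the same route as the paper's source: the paper does not prove this theorem itself but cites it to \cite{Bjorklund2009}, and the argument there is precisely your two steps --- the alternating-sum count of ordered covering tuples over subsets $X \subseteq \mathcal{U}$, combined with the $z^{|S|}$ size-grading so that the coefficient of $z^{|\mathcal{U}|}$ sieves out exactly the pairwise-disjoint covers --- evaluated via subset (zeta) transforms in $\ordernopoly{2^{|\mathcal{U}|}}$ time and exponential space. The only point worth tidying is the corner case $\emptyset \in \mathcal{S}$, where an ordered disjoint cover may repeat the empty set and thus not literally yield $\ell$ distinct parts (fix it by deleting $\emptyset$ from $\mathcal{S}$ or allowing empty parts, as \cite{Bjorklund2009} does); this technicality is harmless for the paper's application.
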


Given an instance $\DCinstance$ of \DC, we use \Cref{inclusion_exclusion} to decide if this instance is a \yes in $\ordernopoly{4^n}$-time. To the best of our knowledge, this is the first algorithm described for the problem that betters the trivial  $\ordernopoly{n^n}$-time algorithm. 

Consider an instance $\DCinstance$ of \DC. Let $V(G) = \{v_1,v_2 \dots v_n\}$ where the vertices are ordered arbitrarily. Let  $\mathcal{I}_G \subseteq \powerset{V(G)}$ denote the collection of independent sets of $G$ and $V'(G) = \{v'_i\}^n_{i=1}$ be a copy of $V(G)$. Define $\mathcal{U}_G = V(G) \cup V'(G)$ and $\S_G = \{ I \cup \Delta(I) \mid I \in \mathcal{I}_G \text{ and } \Delta(I) \subseteq \{v'_i \in V'(G) \mid \closedneighbour{G}{v_i} \supseteq I\} \}$. That is, $\S_G$ consists of all tuples where the first component is an independent set of $G$ and the second component is a (possibly empty) set of vertices which dominate all the vertices in the first component. Note that $\mathcal{I}_G$ can be constructed in $\ordernopoly{2^n}$-time and therefore $\S_G$ in $\ordernopoly{4^n}$-time.

\begin{restatable}{observation}{ExactAlgEq}\label{equivalence of DC and partization}
	$\DCinstance$ is a \yes instance of \DC if, and only if, there exists an $\ell$-partization of $(\mathcal{U}_G, \S_G)$. 
\end{restatable}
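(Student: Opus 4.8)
The plan is to establish a two-way correspondence between a \domcol{} of $G$ using at most $\ell$ colors and an $\ell$-partization of $(\mathcal{U}_G, \S_G)$, by reading the ``coloring'' data off the $V(G)$-coordinates of each part and the ``domination'' data off the $V'(G)$-coordinates. The guiding intuition is that a set $I \cup \Delta(I) \in \S_G$ bundles a single color class (the independent set $I$) together with a list of vertices, encoded by $\Delta(I) \subseteq V'(G)$, that are permitted to use $I$ as the class they dominate. So the two defining conditions of a partization should translate into the two defining conditions of a \domcol: covering/disjointness on $V(G)$ becomes ``the color classes form a proper coloring,'' and covering/disjointness on $V'(G)$ becomes ``every vertex is assigned a class it dominates.''

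For the forward direction I would start from a \domcol{} $\rchi$ with color classes $C_1, \dots, C_m$, $m \le \ell$. Each $C_j$ is independent since $\rchi$ is proper, hence lies in $\mathcal{I}_G$. Because $\rchi$ is a \domcol, every $v_i \in V(G)$ dominates at least one class; fix one and call its index $\sigma(i)$, so $\closedneighbour{G}{v_i} \supseteq C_{\sigma(i)}$. Putting $\Delta_j = \{ v'_i : \sigma(i) = j \}$ and $S_j = C_j \cup \Delta_j$, the relation $\closedneighbour{G}{v_i} \supseteq C_{\sigma(i)}$ is exactly what certifies $S_j \in \S_G$. The classes partition $V(G)$, and $\sigma$ places each $v'_i$ into exactly one $\Delta_j$, so $\{S_1, \dots, S_m\}$ is a family of pairwise disjoint members of $\S_G$ covering $\mathcal{U}_G$.

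For the converse, given an $\ell$-partization $\{S_1, \dots, S_\ell\}$ I would write $S_j = I_j \cup \Delta_j$ with $I_j \in \mathcal{I}_G$ and $\Delta_j \subseteq \{v'_i : \closedneighbour{G}{v_i} \supseteq I_j\}$. Restricting the partition to $V(G)$ shows the $I_j$ partition $V(G)$ into independent sets, which defines a proper coloring $\rchi$ (assign color $j$ to $I_j$) with at most $\ell$ colors. Restricting to $V'(G)$ shows the $\Delta_j$ partition $V'(G)$, so each $v'_i$ lies in a unique $\Delta_{j(i)}$, and membership forces $\closedneighbour{G}{v_i} \supseteq I_{j(i)}$, i.e.\ $v_i$ dominates the class $I_{j(i)}$. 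Hence every vertex dominates a class and $\rchi$ is a \domcol.

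The point needing care -- and the step I expect to be the main obstacle -- is the bookkeeping around degenerate parts and the exact cardinality $\ell$. In the converse, ``$v_i$ dominates $I_{j(i)}$'' only witnesses a genuine class when $I_{j(i)} \neq \emptyset$; if the empty independent set were admitted, the single part $\emptyset \cup V'(G)$ could absorb all of $V'(G)$ and fully decouple the domination constraint from the coloring, breaking the equivalence (e.g.\ $P_6$ would then admit a $3$-partization despite needing four colors). I would therefore take $\mathcal{I}_G$ to consist of the nonempty independent sets, so every part contributes an honest class and both restrictions above yield true partitions. Finally, to match ``$|\rchi| \le \ell$'' with the requirement that a partization use exactly $\ell$ parts, I would note that splitting a color class preserves both properness and the domination property, since any subclass of $C_{\sigma(i)}$ is still contained in $\closedneighbour{G}{v_i}$; thus an $m$-color \domcol{} can be padded to exactly $\ell$ classes whenever $m \le \ell \le n$, and since the all-singletons coloring is always a \domcol{} we may assume $\ell \le n$ throughout.
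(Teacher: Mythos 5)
Your proof is correct and takes essentially the same route as the paper's: color classes give the $V(G)$-parts and a per-vertex choice of dominated class gives the $V'(G)$-parts (the paper disjointifies the sets $\Delta'_i$ of all dominators by first-fit subtraction rather than via your choice function $\sigma$, an immaterial difference). You are in fact more careful than the paper, which silently treats all parts as nonempty color classes and conflates a \domcol with $|\rchi| \le \ell$ with an exactly-$\ell$ partization; your exclusion of $\emptyset$ from $\mathcal{I}_G$ (blocking the degenerate part $\emptyset \cup V'(G)$) and the class-splitting padding argument supply precisely the bookkeeping the paper's proof glosses over.
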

\begin{proof}
	Assume that $\DCinstance$ is a \yes instance of \DC. Let $\rchi$ be a \domcol of $G$ with $|\rchi| \leq \ell$. Order the colors in $\im{\rchi}$ arbitrarily. For a $c_i \in \im{\rchi}$ let $I_i = \rchi^{-1}(c_i)$ and note that it is an independent set of $G$. Let $\Delta'_i = \{v'_j \in V'(G) \mid \closedneighbour{G}{v_j} \supseteq I_i\}$. Let $\Delta_i = \Delta'_i \setminus \cup^{i-1}_{j=1}\Delta'_j$. Since $\Delta'_i$s form a cover of $V(G)$ (every vertex must dominate a color class), $\Delta_i$s form a partition of $V(G)$. Let $S_i = I_i \cup \Delta_i$. Thus, $\{S_1,S_2 \dots S_{|\rchi|}\} \subseteq \S$ is a partization of $\mathcal{U}$. This proves the forward direction of our claim. Now, assume that there is a partization of $\mathcal{U}$ into $S_1, S_2 \dots S_\ell$ where $S_i = I_i \cup \Delta_i$. Color each vertex in $I_i$ with $c_i$. This is a proper coloring since each of the $I_i$s are independent. Since the $\Delta_i$s cover $V'(G)$ every vertex dominates a color class. Thus, $\DCinstance$ is a \yes instance. 
\end{proof}
From \Cref{inclusion_exclusion} and \Cref{equivalence of DC and partization}, we have the following theorem.
\begin{theorem}\label{c^n algo for DC}
	There exists an algorithm running in $\ordernopoly{4^n}$-time and exponential space to solve \DC.
\end{theorem}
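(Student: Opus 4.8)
The plan is to combine the two results the paper has already assembled. \Cref{inclusion_exclusion} gives us, as a black box, an algorithm that decides the existence of an $\ell$-partization of any set system $(\mathcal{U}, \S)$ in $\ordernopoly{2^{|\mathcal{U}|}}$-time, provided $\S$ can be enumerated in $\ordernopoly{2^{|\mathcal{U}|}}$-time. \Cref{equivalence of DC and partization} establishes that a \DC instance $\DCinstance$ is a \yes instance precisely when $(\mathcal{U}_G, \S_G)$ admits an $\ell$-partization. So the theorem is essentially immediate: I would first invoke the equivalence to reduce deciding \DC to deciding the existence of an $\ell$-partization, then invoke the inclusion--exclusion theorem to do the latter within the claimed bound.

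The only real content is the bookkeeping on the size of $\mathcal{U}_G$ and the enumeration cost of $\S_G$. Here I would point out that $\mathcal{U}_G = V(G) \cup V'(G)$ has exactly $2n$ elements, since $V'(G)$ is a disjoint copy of the $n$ vertices. Therefore \Cref{inclusion_exclusion}, applied to this system, runs in $\ordernopoly{2^{|\mathcal{U}_G|}} = \ordernopoly{2^{2n}} = \ordernopoly{4^n}$-time. The hypothesis of \Cref{inclusion_exclusion} requires that $\S_G$ be enumerable in $\ordernopoly{2^{|\mathcal{U}_G|}} = \ordernopoly{4^n}$-time, and this is exactly what the paragraph preceding \Cref{equivalence of DC and partization} has already argued: $\mathcal{I}_G$ is built in $\ordernopoly{2^n}$-time and $\S_G$ in $\ordernopoly{4^n}$-time. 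Thus the enumeration cost does not dominate, and the overall running time remains $\ordernopoly{4^n}$.

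I do not anticipate a genuine obstacle, since both ingredients are already in hand; the proof is a one-line chaining of the two. If anything, the step requiring the mildest care is verifying that the enumeration bound on $\S_G$ matches the time bound demanded by \Cref{inclusion_exclusion} — that is, confirming that constructing the set system is no more expensive than running the partization algorithm on it, so that the $\ordernopoly{4^n}$ figure is not inadvertently increased. The exponential-space claim likewise carries over directly from \Cref{inclusion_exclusion}, which already asserts exponential space. I would therefore write the proof as a short paragraph: cite \Cref{equivalence of DC and partization} to translate the question, note $|\mathcal{U}_G| = 2n$ and the $\ordernopoly{4^n}$ enumerability of $\S_G$, and apply \Cref{inclusion_exclusion} to conclude the stated time and space bounds.
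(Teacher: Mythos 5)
Your proposal is correct and matches the paper's proof exactly: the paper derives \Cref{c^n algo for DC} directly by combining \Cref{inclusion_exclusion} with \Cref{equivalence of DC and partization}, using $|\mathcal{U}_G| = 2n$ and the $\ordernopoly{4^n}$-time enumerability of $\S_G$ established just before the observation. Your bookkeeping on the enumeration cost is the same check the paper makes implicitly, so there is nothing to add.
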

We leave, as an open problem, the existence an algorithm which solves \DC in $\ordernopoly{2^n}$-time.

\section{Parameterized by Clique Modulator Size} \label{clq}

Consider a graph $G$. A subset $M$ of $V(G)$ is a \textit{clique modulator} if $G-M$ is a clique. We use $Q$ to denote this clique and let $k = |M|$. For convenience, we sometimes use $Q$ in place of $V(Q)$. Our goal for this section is to prove the following: there exists randomized algorithms to solve \DC and \CDC in $\ordernoinput{16^k}$ and $\ordernoinput{2^k}$-time respectively. Under the Exponential-Time Hypothesis, neither of these problems can be solved in $2^{o(k)}$-time (\Cref{ETH lowerbound}). 

Our algorithms follow a similar randomized strategy using polynomial sieving and \textit{Schwartz-Zippel Lemma}, as that of \cite{Gutin2021} for \ListCol parameterized by clique modulator. The overall idea of this technique is as follows: we design a \textit{weighted Edmonds matrix} $A$ whose entries are polynomials over $\R$ such that $G$ admits a \domcol (resp. \classdcol) of size $\ell$ if, and only if, a specified monomial containing $2k$-many (resp. $k$-many) variables divides a term $T$ of $\det{A}$. We use \Cref{Polynomial which is not 0 everywhere}, Lemma 2.5 in \cite{Gutin2021}, with $|J|=2k$ (resp. $|J| = k$) followed by \Cref{Random choice} \cite{Zippel1979,Schwartz1980} to determine whether such a $T$ exists in $\det{A}$.
\begin{theorem}\label{Polynomial which is not 0 everywhere}
	 Let $J \subseteq \{x_1,x_2 \dots x_n\}$ and $P(x_1,x_2 \dots x_n)$ be a polynomial over $\R$. Then, there is a polynomial $Q(x_1,x_2 \dots x_n)$ whose evaluation takes $\order{2^{|J|}}$-time such that $Q \not\equiv 0$ if, and only if, $P$ contains a term divisible by $\prod_{x_j \in J}x_j$.
\end{theorem}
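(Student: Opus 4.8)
The plan is to build $Q$ from $P$ by an inclusion–exclusion sieve over the subsets of $J$. For each $S \subseteq J$, let $P_S$ be the polynomial obtained from $P$ by substituting $x_j = 0$ for every $x_j \in J \setminus S$, leaving all other variables intact, and define
\[
Q := \sum_{S \subseteq J} (-1)^{|J \setminus S|}\, P_S.
\]
Note that $Q$ need not be written out explicitly (it may have exponentially many monomials); instead we treat it as an evaluation oracle. A single evaluation of $Q$ at a point reduces to $2^{|J|}$ evaluations of $P$, one per subset $S$, each at a point where the coordinates indexed by $J \setminus S$ are forced to $0$. Counting each evaluation of $P$ as unit cost (hiding the polynomial factors), this yields the claimed $\order{2^{|J|}}$ evaluation time.

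First I would track a single monomial $m = c \prod_{i} x_i^{a_i}$ of $P$ through the sieve. The substitution defining $P_S$ kills $m$ unless every $J$-variable occurring in $m$ already lies in $S$; writing $T_m := \{x_j \in J : a_j \geq 1\}$ for the set of $J$-variables dividing $m$, the monomial $m$ survives in $P_S$ precisely when $T_m \subseteq S$. Hence the coefficient of $m$ in $Q$ is
\[
c \sum_{S \,:\, T_m \subseteq S \subseteq J} (-1)^{|J \setminus S|}.
\]
The main step is to evaluate this alternating sum. Reindexing via $S' = S \setminus T_m \subseteq J \setminus T_m$ rewrites it as $c\,(-1)^{|J \setminus T_m|}\sum_{S' \subseteq J \setminus T_m}(-1)^{|S'|}$, and the binomial identity $\sum_{S' \subseteq X}(-1)^{|S'|} = [X = \emptyset]$ shows it vanishes unless $T_m = J$, in which case it equals $c$.

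Putting this together, $Q$ is exactly the sum of those monomials of $P$ that are divisible by $\prod_{x_j \in J} x_j$, each keeping its original coefficient, while every other monomial cancels. Consequently $Q \not\equiv 0$ if, and only if, $P$ contains a term divisible by $\prod_{x_j \in J} x_j$, which is what we want. I expect the only delicate point to be the cancellation step, i.e. verifying that any monomial missing even one variable of $J$ is annihilated by the alternating signs; once the binomial identity above is in place this is immediate, and the rest is bookkeeping.
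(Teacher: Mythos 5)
Your proof is correct, and it is precisely the standard inclusion--exclusion sieve underlying this statement: the paper does not prove the theorem itself but cites it as Lemma~2.5 of Gutin et al.~\cite{Gutin2021}, whose proof proceeds exactly as yours does (setting the variables of $J \setminus S$ to zero, alternating signs over subsets $S \subseteq J$, and cancelling every monomial not divisible by $\prod_{x_j \in J} x_j$ via the binomial identity). Your bookkeeping of the surviving coefficients and of the $2^{|J|}$ evaluation cost (with each evaluation of $P$ at unit cost) matches the cited argument, so there is nothing to correct.
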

\begin{theorem}\label{Random choice}
	Let $P(x_1,x_2 \dots x_n)$ be a non-zero polynomial over a field $\mathbb{F}$ with maximum degree $d$. Then, 
	$
	\Pr{[P(r_1,r_2 \dots r_n) = 0]} \leq \frac{d}{|\mathbb{F}|} 
	$ if $\{r_1,r_2 \dots r_n\}$ is picked randomly from $\mathbb{F}$.
\end{theorem}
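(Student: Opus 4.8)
The plan is to prove this by induction on the number of variables $n$; this is the classical argument underlying the Schwartz--Zippel Lemma, which the paper cites as \cite{Zippel1979,Schwartz1980}. Throughout I assume $\mathbb{F}$ is finite (or, more generally, that each $r_i$ is drawn uniformly from a fixed finite subset $S \subseteq \mathbb{F}$, with $|\mathbb{F}|$ read as $|S|$), since otherwise the ratio $d/|\mathbb{F}|$ is not meaningful.

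For the base case $n = 1$ I would invoke the fundamental fact that a non-zero univariate polynomial of degree at most $d$ over a field has at most $d$ roots. Hence a uniformly random $r_1$ is a root with probability at most $d/|\mathbb{F}|$, which is exactly the claimed bound.

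For the inductive step I would single out the variable $x_1$ and write $P = \sum_{i=0}^{k} x_1^i Q_i(x_2, \dots, x_n)$, where $k \geq 0$ is the largest power of $x_1$ occurring in $P$ and $Q_k \not\equiv 0$; note that $\deg Q_k \le d - k$. The key step is a case analysis on the random evaluation of the leading coefficient $Q_k$. Applying the inductive hypothesis to the non-zero $(n-1)$-variable polynomial $Q_k$ gives $\Pr[Q_k(r_2, \dots, r_n) = 0] \le (d-k)/|\mathbb{F}|$. Conditioned on the complementary event $Q_k(r_2, \dots, r_n) \ne 0$, the one-variable polynomial $P(x_1, r_2, \dots, r_n)$ is non-zero of degree exactly $k$, so by the base-case reasoning it vanishes at the random $r_1$ with probability at most $k/|\mathbb{F}|$.

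Finally I would combine the two cases. Writing $E$ for the event $Q_k(r_2, \dots, r_n) = 0$, the law of total probability yields $\Pr[P = 0] \le \Pr[E] + \Pr[P = 0 \mid \overline{E}] \le (d-k)/|\mathbb{F}| + k/|\mathbb{F}| = d/|\mathbb{F}|$, completing the induction. The only delicate point is bounding the conditional probability correctly: one must use that $r_1$ is independent of $(r_2, \dots, r_n)$, so that after fixing a tail making $Q_k$ non-zero the remaining randomness in $r_1$ alone governs whether the resulting univariate polynomial vanishes. I expect this conditioning to be the main (and only mildly subtle) obstacle to state cleanly; the degree bookkeeping $\deg Q_k \le d - k$ that makes the two error terms sum to exactly $d/|\mathbb{F}|$ is routine.
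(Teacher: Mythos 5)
Your proposal is correct: it is the classical induction-on-variables argument for the Schwartz--Zippel Lemma, including the right degree bookkeeping ($\deg Q_k \le d-k$), the correct use of independence of $r_1$ from $(r_2,\dots,r_n)$ in the conditioning step, and the appropriate caveat that the bound only makes sense when the $r_i$ are drawn from a finite field (or finite subset $S$, with $|\mathbb{F}|$ read as $|S|$). Note that the paper itself gives no proof of this statement --- it imports the lemma by citation to \cite{Zippel1979,Schwartz1980} --- and your argument is exactly the standard proof found in those sources, so there is nothing to reconcile.
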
  

\subsection{\DC Parameterized by Clique Modulator Size}\label{DC-CLQ}
Assume that we have an instance $\DCinstance$ of \DC where the size of $Q$ is at most $k$. Then, $|V(G)| \leq 2k$ and therefore, by the proof of \Cref{c^n algo for DC}, $\DCinstance$ can be solved in $\order{16^k}$-time.  Now, consider the case where the size of clique is at least $k+1$. Then, in any proper coloring of $G$, there exists a color which is used (exactly once) in $Q$ but not in $M$. An important observation ensues.

\begin{observation}\label{clique vertices dominate a color class}
	If $|Q| > k$, every vertex in $Q$ dominates a color class in any proper coloring of $G$.
\end{observation}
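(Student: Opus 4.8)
The plan is to exhibit, for the entire clique $Q$ simultaneously, a single color class that lies wholly inside $Q$; since $Q$ is a clique, every vertex of $Q$ contains all of $Q$ in its closed neighbourhood and therefore dominates that class. This converts the ``for every vertex'' quantifier into a single pigeonhole argument.

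First I would fix an arbitrary proper coloring $\rchi$ of $G$ and count colors. Because $Q = V(G) \setminus M$ induces a clique, the restriction of $\rchi$ to $Q$ uses $|Q|$ pairwise distinct colors, so at least $k+1$ distinct colors appear on $Q$. On the other hand, $|M| = k$, so at most $k$ distinct colors appear on $M$. By the pigeonhole principle there is a color $c^\star$ that is used on some vertex of $Q$ but on no vertex of $M$; this is precisely the color asserted to exist in the sentence preceding the statement.

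Next I would observe that the color class $c^\star$ is contained in $Q$: since $c^\star$ appears on no vertex of $M$, every vertex colored $c^\star$ lies in $V(G) \setminus M = Q$. Finally, for an arbitrary $v \in Q$ the clique property gives $Q \subseteq \closedneighbour{G}{v}$, so the class $c^\star \subseteq Q \subseteq \closedneighbour{G}{v}$; that is, $v$ dominates $c^\star$. As $v$ was arbitrary, every vertex of $Q$ dominates the color class $c^\star$, which is what we want.

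The argument is a short pigeonhole computation, so I do not expect a serious obstacle. The only point requiring care is to secure \emph{one} color class that works for all of $Q$ at once, rather than a separate dominated class for each vertex; this is exactly what the step ``$c^\star$ avoids $M$'' guarantees, since it forces the whole class $c^\star$ into the clique where it is visible to every vertex of $Q$.
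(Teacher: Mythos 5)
Your proof is correct and matches the paper's own reasoning: the paper justifies the observation via the sentence immediately preceding it, namely that since $|Q| > k = |M|$, some color is used (exactly once) in $Q$ but not in $M$, and that single-vertex color class lies inside the clique, hence is dominated by every vertex of $Q$. Your pigeonhole argument is exactly this, with the minor (harmless) generalization that you only use containment of the class $c^\star$ in $Q$ rather than its being a singleton.
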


Our main theorem in this subsection is \Cref{DCCLQ is FPT}. We prove this by designing a weighted Edmonds matrix $A$ whose elements are polynomials containing the variables $X = \{x_v \mid v \in M\}$ and $Y = \{y_v \mid v \in M\}$ with the following property: the existence of $x_v$ in a term $T$ of $\det{A}$ will mean that $v$ has been properly colored and $y_v$ would mean that $v$ dominates a color class. We will therefore need to look for a term of $\det{A}$ which is divisible by  $\prod_{v \in M}x_vy_v$.

\begin{theorem}\label{DCCLQ is FPT}
	\DC can be solved in $\ordernoinput{16^k}$-time where $k$ is the size of a clique modulator of the input graph.
\end{theorem}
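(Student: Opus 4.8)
The plan is to dispose of the easy regime first and then reduce the hard regime to a single determinant test. If $|Q|\le k$ then $|V(G)|\le 2k$, so \Cref{c^n algo for DC} already decides the instance in $\ordernopoly{4^{2k}}=\ordernoinput{16^k}$-time; hence I would assume $|Q|>k$ (and that $\ell\ge |Q|$, as otherwise no proper colouring exists and the instance is trivially \no). In this regime \Cref{clique vertices dominate a color class} tells us that every vertex of $Q$ dominates a colour class for free in any proper colouring, so a proper colouring with at most $\ell$ colours is a \domcol if and only if, in addition, every modulator vertex $v\in M$ dominates some colour class. The goal is thus to test, via one polynomial identity, the existence of a proper $(\le\ell)$-colouring in which each $v\in M$ is both properly coloured and dominates a class.

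The first structural step is to observe that a proper colouring is essentially controlled by the clique: since $Q$ is a clique, each colour occurs at most once on $Q$, so the restriction of the colouring to $Q$ is an injection of $Q$ into the palette $\{1,\dots,\ell\}$, while each $v\in M$ either reuses the colour of a non-adjacent clique vertex or receives a colour avoided by $Q$. I would model such a colouring as a perfect matching of an auxiliary bipartite graph $H$ whose right side holds the colours (together with dummy slots that soak up unused colours) and whose left side holds the clique vertices and, for every $v\in M$, a \emph{colour token} and a \emph{domination token}. The clique vertices and colour tokens are matched to the colours they may legally take, the edge from the colour token of $v$ carrying the variable $x_v$, and the domination token of $v$ is matched to a colour that $v$ dominates, its edge carrying $y_v$. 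A perfect matching of $H$ that saturates every colour token and every domination token then corresponds precisely to a proper $(\le\ell)$-colouring witnessing domination for all of $M$.

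Next I would turn $H$ into a weighted Edmonds matrix $A$ over $\R$ in the variable sets $X=\{x_v\}_{v\in M}$ and $Y=\{y_v\}_{v\in M}$, placing $x_v$ and $y_v$ on the token edges and fresh generic variables elsewhere, so that the monomials of $\det A$ are in bijection with the perfect matchings of $H$. By the previous step, $G$ has a \domcol with at most $\ell$ colours if and only if some term of $\det A$ is divisible by $\prod_{v\in M}x_v y_v$. Applying \Cref{Polynomial which is not 0 everywhere} with $J=X\cup Y$, so that $|J|=2k$, yields a polynomial whose evaluation costs $\order{2^{2k}}$ determinant evaluations, each polynomial-time; a single random evaluation over a large field then decides non-vanishing with high probability by \Cref{Random choice}. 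This settles the $|Q|>k$ case in $\ordernoinput{4^k}$-time, and combined with the small-clique case the overall running time is $\ordernoinput{16^k}$.

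The main obstacle is the domination token. Properness of the clique is a clean matching constraint, but ``$v$ dominates some colour class'' is a global condition: a class may contain several vertices and $v$ must be adjacent to all of them, so the admissible endpoints of the domination token are not obviously determined by local data. The crux is therefore to localise this condition. Because $|Q|>k$, at least $|Q|-k$ clique colours are never reused on $M$, and I expect to show that one may always take the class dominated by $v$ to be such a clique singleton adjacent to $v$ (falling back on $v$'s own class, when it is a singleton, for modulator vertices lacking a suitable clique neighbour). Establishing this normal form, and checking that it makes the legal edges of the domination token depend only on the adjacency between $v$ and the clique vertex owning each colour (hence only on the clique-colouring edges of the same matching), is the delicate heart of the argument.
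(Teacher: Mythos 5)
Your outer skeleton is the paper's: the same case split (if $|Q|\le k$ then $n\le 2k$ and \Cref{c^n algo for DC} gives $\ordernopoly{4^{2k}}=\ordernoinput{16^k}$, which is in fact the bottleneck of the final bound), the same use of \Cref{clique vertices dominate a color class}, and the same plan of an Edmonds matrix sieved with $J=X\cup Y$, $|J|=2k$, via \Cref{Polynomial which is not 0 everywhere,Random choice}. But your matrix construction has a genuine gap, and it sits exactly where you flag it. First, a perfect matching assigns each colour to at most one left-side vertex, so your one-colour-token-per-modulator-vertex scheme cannot represent any colour class containing two or more modulator vertices (e.g.\ a class $\{u_1,u_2\}\subseteq M$ of non-adjacent vertices, which optimal solutions may use): both tokens would need the same colour endpoint. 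Second, the legality you want for a token edge depends on which clique vertex owns that colour, i.e.\ on \emph{another edge of the same matching}; but the monomials of $\det A$ are products of independently chosen entries, so one entry's admissibility cannot be conditioned on another entry's choice. Third, your proposed normal form for domination is false. Take $Q$ of size $3$, $M=\{u,v\}$ with the single edge $(u,v)$ and no edges to $Q$, and $\ell=4$: colouring $v$ alone in a fresh colour and letting $u$ share a colour with a clique vertex is a valid \domcol ($u$ dominates the class $\{v\}$, $v$ dominates its own class), yet your normal form (clique singleton adjacent to $v$, or $v$'s own singleton class) forces \emph{both} $u$ and $v$ into singleton classes, which needs $5$ colours — so your sieve would answer \no on a \yes instance. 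In general a modulator vertex may need to dominate a class wholly inside $M$ that is not its own, or a mixed class $S_1\cup\{q\}$ with $q\in Q$ and $S_1\subseteq M$, where it must see all of $S_1\cup\{q\}$.

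The paper resolves all three difficulties with one device that your proposal lacks: the entry $P(v,c)$ is a sum over pairs $(S_1,S_2)$ where $S_1\subseteq M$ is the entire set of modulator vertices placed in class $c$ alongside $v$ (an independent set, non-adjacent to $v$) and $S_2=\{u\in M \mid \closedneighbour{G}{u}\supseteq S_1\cup\{v\}\}$ is the \emph{full} dominator set of that class. Thus every $x$-variable of a class and every $y$-variable of its dominators appear inside a single matrix entry, making the certificate local by construction rather than by a normal-form argument; modulator vertices whose colour is owned by a clique vertex are parked at private artificial colours $c'_v$ so the matching stays perfect. This folding costs $\ordernoinput{2^k}$ per entry but is what makes the determinant characterization (\Cref{Computing Zero of Determinant is Enough}) true. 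Your token graph, as described, cannot be repaired by refining the normal form alone — the class-composition information must travel inside the entries, not along separate token edges.
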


Let $C=\{c_1,c_2 \dots c_{\ell}\}$ denote a set of $\ell$-many colors. Moreover, let $C' = \{c'_v \mid v \in M\}$ be a set of $k$-many artificial colors. If $|V(G)| > |C \cup C'|$ (that is, $|Q| + k > \ell + k$), this is a \no instance of \DC as vertices in $Q$ must get different colors. Pad $V(G)$ with $(|C \cup C'| - |V(G)|)$-many artificial vertices. Let this set be $V'(G)$. We construct a balanced bipartite graph $B$ with bipartition $(V(G) \cup V'(G), C \cup C')$ by defining its edges as follows. Every vertex in $V(G)$ is connected all vertices in $C$. In addition, $v \in M$ is also connected to the artificial color $c'_v \in C'$ corresponding to it. Finally, every vertex in $V'(G)$ is connected to all vertices in $C \cup C'$. Each edge $(v,c) \in E(B)$ is associated with a $\S_{(v,c)} \subseteq \P(M)^2$ by the following relation:
\begin{itemize}
	\item If $(v,c) \in Q \times C$, $\S_{(v,c)}$ is the collection of sets $S = (S_1,S_2) \in \P(M)^2$ where $S_1 \cup \{v\}$ is an independent set of $G$ \textbf{and} $S_2 = \{u \in M \mid \closedneighbour{G}{u} \supseteq S_1 \cup \{v\}\}$.
	\item If $(v,c) \in M \times C$, $\S_{(v,c)}$ is the collection of sets $S = (S_1,S_2) \in \P(M)^2$ where $S_1$ is an independent set of $G$ containing $v$ \textbf{and} $S_2 = \{u \in M \mid \closedneighbour{G}{u} \supseteq S_1\}$.
	\item If $v$ or $c$ is an artificial vertex or color, $\S_{(v,c)} = \emptyset$.
\end{itemize} 

We now define our matrix $A$ whose entries are polynomials over $\R$ with dimensions $|V(G) \cup V'(G)| \times |C \cup C'|$. Its rows labeled are by $V(G) \cup V'(G)$ and columns by $C \cup C'$. In addition to the sets of variables $X$ and $Y$, let $Z = \{z_{(v,c)} \mid (v,c) \in E(B)\}$ be a set of variables indexed by edges in $E(B)$. For each $(v,c) \in E(B)$, we define:
\[
	P(v,c) = \sum_{S \in \S_{(v,c)}}\left(\prod_{s \in S_1}x_s \cdot \prod_{s \in S_2}y_s\right) \text{ and } A(v,c) = z_{(v,c)}\cdot P(v,c)
\]
	
Here, we assume that the empty product equals 1. All other entries of $A$ are 0. The proof of the following crucial theorem has a similar flavor as Lemma 3.2 of \cite{Gutin2021}.

\begin{restatable}{theorem}{ComputingZeroofDet}\label{Computing Zero of Determinant is Enough}
	$\DCinstance$ is a \yes instance of \DC if, and only if, $\det A$ contains a monomial divisible by $\prod_{x \in X}x \cdot \prod_{y \in Y}y$.
\end{restatable}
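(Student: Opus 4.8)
The plan is to run a weighted-Edmonds-matrix argument, adapted so that the two variable families $X$ and $Y$ record the two bookkeeping roles (``properly colored'' and ``dominates a color class''). First I would establish the no-cancellation property that underlies every such proof: since each edge $(v,c)\in E(B)$ carries its own dedicated variable $z_{(v,c)}$, the two products of $Z$-variables coming from two distinct perfect matchings of $B$ are distinct, so monomials from different matchings cannot cancel; moreover, within a single matching $\sigma$ every monomial of $\prod_v P(v,\sigma(v))$ has a positive coefficient, so no cancellation occurs there either. Consequently $\det A$ contains a monomial divisible by $\prod_{x\in X}x\cdot\prod_{y\in Y}y=\prod_{u\in M}x_uy_u$ if, and only if, there is a perfect matching $\sigma$ of $B$ together with a choice $(S_1^{(v)},S_2^{(v)})\in\S_{(v,\sigma(v))}$ for every row $v$ such that $\prod_v\prod_{s\in S_1^{(v)}}x_s\prod_{s\in S_2^{(v)}}y_s$ is divisible by $\prod_{u\in M}x_uy_u$. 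This reduces the theorem to a correspondence between such matchings-with-selections and dominator colorings of $G$ using at most $\ell$ colors.

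For the forward direction I would start from a \domcol $\rchi$ with $|\rchi|\le\ell$. Since $|Q|>k$, the clique receives $|Q|$ pairwise distinct colors, so every used color class contains at most one vertex of $Q$; call a class \emph{clique-type} if it contains a clique vertex and \emph{modulator-type} otherwise. I would match each clique vertex to its color, match one chosen representative of each modulator-type class to that color, match every remaining modulator vertex $u$ to its private artificial color $c'_u$, and let the padding vertices $V'(G)$ absorb the leftover columns; a short counting check confirms this is a perfect matching of $B$. For a row matched to a real color $c$ I select the pair with $S_1$ the modulator vertices colored $c$ and $S_2$ the modulator vertices dominating class $c$ — exactly the pairs permitted by $\S_{(\cdot,c)}$. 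Because each modulator vertex lies in exactly one color class, every $x_u$ appears; because $\rchi$ is a \domcol, each modulator vertex dominates some (necessarily active) class, so every $y_u$ appears, and the monomial is divisible by the target.

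For the converse I would take a matching $\sigma$ and selection witnessing divisibility and read off a coloring: for each real color $c$ whose matched row $r$ is a genuine (non-padding) vertex, color $r$ (when $r\in Q$) and every vertex of $S_1^{(r)}$ with $c$. In $B$ the clique vertices are adjacent only to real colors, so all of $Q$ is colored with $|Q|$ distinct colors, and divisibility by $\prod_{u}x_u$ forces each modulator vertex into at least one such $S_1$, so the whole graph is colored with at most $\ell$ colors. Since each selected $S_1\cup\{r\}$ is an independent set of $G$, the coloring is proper. The dominator condition then splits in two: every clique vertex dominates a color class by \Cref{clique vertices dominate a color class}, while divisibility by $\prod_u y_u$ places each modulator vertex in some $S_2^{(r)}$, which by definition of $\S_{(r,c)}$ means it dominates the class of $c$; hence the coloring is a \domcol.

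The main obstacle I anticipate is the bookkeeping in the converse: the vertex-to-color assignment must be made well defined when a modulator vertex appears in several first-components $S_1^{(r)}$, and I must check that keeping such a vertex in only one class preserves \emph{both} properness and the dominator property at once — properness is immediate as each class shrinks to a subset of an independent set, and domination survives because removing a vertex from a color class only makes that class easier to dominate while the relevant $S_2^{(r)}$ still certifies the required dominator. A secondary point needing care is the exact interplay of the padding set $V'(G)$, the artificial colors $C'$, and the balanced bipartite structure of $B$, which must be verified in both directions to guarantee that each constructed assignment is a genuine perfect matching rather than merely a partial one.
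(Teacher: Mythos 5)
Your route is the same as the paper's: the no-cancellation reduction (dedicated $z_{(v,c)}$ per edge, positive coefficients within a matching) to ``perfect matching plus selections,'' the forward matching built exactly as in the paper (clique vertices to their own colors, one modulator representative per modulator-only class, leftover modulator vertices to their private artificial colors, padding absorbing the remaining columns) with the selections $(M_c, M^{-1}_c)$, and the converse reading a coloring off the matching, with \Cref{clique vertices dominate a color class} handling the clique side. You also correctly read the artificial edges as carrying the trivial selection (so that $P(v,c)=1$ there), which is clearly the intended construction.

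However, the resolution you offer for the obstacle you flag in the converse is not sufficient as stated, and this is exactly where the paper is more careful. You allow an arbitrary tie-break when a modulator vertex lies in several first components $S_1^{(r)}$, and argue that domination survives because ``removing a vertex from a color class only makes that class easier to dominate.'' That is true, but shrinking can \emph{empty} the class: suppose $y_u$'s only witness is the edge $(r,c)$ with $r \in M$, and your tie-break reassigns every vertex of $S_1^{(r)}$ --- including $r$ itself, which lies in $S_1^{(r)}$ by the definition of $\S_{(r,c)}$ for $M \times C$ edges --- to other colors. Then $c \notin \im{\rchi}$, and $u$ no longer dominates any color class; dominating an empty class does not count, since the \domcol condition quantifies over $c \in \im{\rchi}$, and $u$ need not dominate the full classes into which $S_1^{(r)}$ was dispersed. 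The paper forecloses this by fixing the tie-break: every vertex matched to a real color keeps that color ($\rchi(v) = \M(v)$ whenever $\M(v) \in C$), and only the remaining modulator vertices are placed by a first-edge rule; hence the row vertex of each witness edge stays in class $c$, which is why the paper can assert $\emptyset \subsetneq \rchi^{-1}(c) \subseteq \closedneighbour{G}{v}$. Adopting this convention (and then re-checking properness, as the paper does) closes the gap; the rest of your argument goes through unchanged.
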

\begin{proof}
	As noted in \cite{Gutin2021}, no cancellation happens in $\det A$. Moreover, a perfect matching $\M$ of $B$ contributes $\sigma(\M) \cdot \prod_{(v,c) \in \M}z_{(v,c)}\cdot P(v,c)$ where $\sigma$ is the sign of $\M$ (when it is considered as a permutation).
	
	Assume that $\DCinstance$ is a \yes instance of \DC. Then, there exists a \domcol $\rchi$ of $G$ with $|\rchi| = \ell$. Assume, without loss in generality, that $C$ is the image of $\rchi$. Order the vertices in $G$ arbitrarily, say $\{v_1,v_2 \dots v_{n-k} \}$ for vertices in $Q$ and $\{v_{n-k+1},v_{n-k+2} \dots v_n\}$ for those in $M$. We define a perfect matching $\M$ of $B$ using $\rchi$ as follows:
	\begin{itemize}
		\item If $v_i \in Q$, set $\M(v_i) = \rchi(v_i)$.
		\item If $v_i \in M$ and $\rchi(v_i)$ is unmatched, $\M(v_i) = \rchi(v_i)$. Otherwise, $\M(v_i) = c'_{v_i}$.
		\item If $v_i \notin V(G)$, let $\M(v_i)$ be an arbitrary unmatched color.
	\end{itemize}
	Note that $\M$ is a perfect matching of $B$. For a $c \in C$, let $M_c = M \cap \rchi^{-1}(c)$ and $M^{-1}_c \subseteq M$ be the subset of vertices that dominate the color class $c$ in $\rchi$. Clearly, for a $(v,c) \in \M$ 	with $c \in C$, $(M_c,M^{-1}_c) \in \S_{(v,c)}$. Let $p_X(v,c) = \prod_{v \in M_c}x_v$ and $p_Y(v,c) = \prod_{v \in M^{-1}_c}y_v$. By construction, $p_X(v,c) \cdot p_Y(v,c)$ is a term of $P(v,c)$. Therefore,
	\[
	T = \alpha \cdot \sigma(\M) \cdot \prod_{(v,c) \in \M}z_{(v,c)}\cdot p_X(v,c)\cdot p_Y(v,c)
	\]
	is a monomial of $\det A$ for some $\alpha > 0$. To complete our proof, we show that for every $u \in M$, there exists edges $(v_X,c_X)$ and $(v_Y,c_Y)$ of $\M$ such that $x_u$ divides $p_X(v_X,c_X)$ and $y_u$ divides $p_Y(v_Y,c_Y)$. Fix a $u \in M$. Let $c_X = \rchi(u)$ and $c_Y$ be a color class that $u$ dominates in $\rchi$. If $c_X \notin \rchi(Q)$, set $v_X = u$ and if $c_X \in \rchi(Q)$, set $v_X$ as the vertex in $Q$ that is colored $c_X$. By construction, $(v_X,c_X) \in \M$ and $x_u$ divides $p_X(v_X,c_X)$. Let $v_Y \in V(G)$ be the vertex with $\M(v_Y) = c_Y$. This proves that $\prod_{x \in X}x \cdot \prod_{y \in Y}y$ divides $T$.
	
	Now, assume that $\det A$ contains a monomial $T$ divisible by $\prod_{x \in X}x \cdot \prod_{y \in Y}y$. Then, there exists a perfect matching $\M$ of $B$ which corresponds to this monomial \cite{Gutin2021}. We can assume that $T = \alpha \cdot \prod_{(v,c) \in \M}z_{(v,c)}p(v,c)$ where $\alpha$ is some constant and $p(v,c)$ is a term of $P(v,c)$ for every $(v,c) \in \M$. We define a mapping $\rchi \colon V(G) \to C$ from this matching and prove that it is a \domcol of $G$. For all $v \in V(G)$ such that $\M(v) \in C$, let $\rchi(v) = \M(v)$. Note that all the vertices in the clique and some vertices in the modulator are now colored. Let $v \in M$ be an uncolored vertex in the modulator. Let $v_i$ be the first vertex in our ordering such that $x_v$ divides $p(v_i,\M(v_i))$ -- such a vertex indeed exists by our assumption. Color $v$ with $c$.
	
	First, we prove that $\rchi$ is a proper coloring of $G$. Let $u$ and $v$ be two vertices such that $c=\rchi(u) = \rchi(v)$. If both $u$ and $v$ are vertices in $M$, then $x_u$ and $x_v$ are terms of $p(v',c)$ for some $v' \in V(G)$. By construction, $u$ and $v$ are part of an independent set. Now, assume that $u \in Q$. Then, $v \in M$ as otherwise $\M$ would contain two edges incident on $c$. Hence $x_v$ is in the term $p(u,c)$ which implies that $(u,v) \notin V(G)$ by construction. This proves that $\rchi$ is a proper coloring.
	
	Finally, we show that every vertex of $V(G)$ dominates a color class in $\rchi$.  Since $\rchi$ is a proper coloring of $V(G)$ and $|Q| > k$, by \Cref{clique vertices dominate a color class}, every vertex of $Q$ dominates a color class. Now, consider a vertex $v \in M$. Then, $y_v$ is a term in $p(u,c)$ for some $(u,c) \in \M$. Moreover, by construction, $u \in V(G)$, $c \in C$, and $\emptyset \subsetneq \rchi^{-1}(c) \subseteq \closedneighbour{G}{v}$. Hence, for all $u \in \rchi^{-1}(c)$, $x_u \in p(v',c)$. This proves that $v$ dominates the color class $c$. 
\end{proof}

We apply \Cref{Random choice} to the polynomial $Q$ obtained when \Cref{Polynomial which is not 0 everywhere} is applied to $\det{A}$ and $J = X \cup Y$. Since $|J| = 2k$, we have a randomized algorithm (whose correctness follows from \Cref{Computing Zero of Determinant is Enough}) which runs in $\ordernoinput{16^k}$-time to solve \DC when restricted to instances where the size of the clique is greater than $k$. With a more complicated polynomial sieving method as in \cite{Gutin2021}, we can improve this to an algorithm which runs in $\ordernoinput{4^k}$-time. Combining this result with the discussion preceding \Cref{clique vertices dominate a color class} gives us the proof of \Cref{DCCLQ is FPT}. Note, therefore, that the bottleneck in the running time of \Cref{DCCLQ is FPT} is due to \Cref{c^n algo for DC}.

\subsection{\CDC Parameterized by Clique Modulator Size}\label{CD-CLQ}
As in \Cref{DC-CLQ}, we setup the bipartite matching framework used by Gutin \textit{et al.}. The bipartite graph $B$ is constructed as in \Cref{DC-CLQ}. However, the collections associated with the edges of $B$ differ. Each edge $(v,c) \in E(B)$ is associated with a $\S_{(v,c)} \subseteq \P(M)$ by the following relation:
\begin{itemize}
	\item If $(v,c) \in Q \times C$, $\S_{(v,c)}$ is the collection of sets $S \subseteq M$ where $S \cup \{v\}$ is an independent set of $G$ \textbf{and} there exists a $u \in V(G)$ such that $\closedneighbour{G}{u} \supseteq S \cup \{v\}$.
	\item If $(v,c) \in M \times C$, $\S_{(v,c)}$ is the collection of sets $S \subseteq M$ where $S$ is an independent set of $G$ which contains $v$ \textbf{and} there exists a $u \in V(G)$ such that $\closedneighbour{G}{u} \supseteq S$.
	\item If $v$ or $c$ is an artificial vertex or color, $\S_{(v,c)} = \{\emptyset\}$.
\end{itemize} 

We now define a matrix $A$ with dimensions $|V(G) \cup V'(G)| \times |C \cup C'|$, with rows labeled by $V(G) \cup V'(G)$ and columns by $C \cup C'$, whose entries are polynomials. Let $X = \{x_v \mid v \in M\}$ be a set of variables indexed by vertices in $M$ and $Z = \{z_{(v,c)} \mid (v,c) \in E(B)\}$ be a set of variables indexed by edges in $E(B)$. For each $(v,c) \in E(B)$, we define a polynomial $P(v,c)$ as follows:
$
P(v,c) = \sum_{S \in \S_{(v,c)}}\left(\prod_{s \in S}x_s\right)
$. Here, we assume that the empty product equals 1. We let $A(v,c)$, for a $(v,c) \in E(B)$, be $z_{(v,c)}\cdot P(v,c)$. All other entries of this matrix are 0.

We take a moment to reflect on the differences between our setup here and that in \Cref{DC-CLQ}. Since we can ensure that a color class is dominated by a vertex by setting up $\S_{(v,c)}$ carefully, we do not require a set of variables which track if every color is dominated by a vertex (as $Y$ in \Cref{DC-CLQ}). Note that this is important in two fronts -- firstly, this speeds up our algorithm to one that runs in $\ordernoinput{2^k}$-time, and secondly ensures that the running time does not depend on $|Q|$. The proof of \Cref{Computing Zeo of Determinant is Enough Domd Call} flows exactly like \Cref{Computing Zero of Determinant is Enough} and Lemma 3.2 of \cite{Gutin2021}.
\begin{theorem}\label{Computing Zeo of Determinant is Enough Domd Call}
	$(G,\ell)$ is a \yes instance of \CDC if, and only if, $\det A$ contains a monomial divisible by $\prod_{x \in X}x$.
\end{theorem}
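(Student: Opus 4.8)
The plan is to mirror the structure of the proof of \Cref{Computing Zero of Determinant is Enough}, exploiting the fact that the determinant $\det A$ has no cancellations (as in \cite{Gutin2021}), so that each monomial of $\det A$ corresponds bijectively to a perfect matching $\M$ of $B$ together with a choice of one term $p(v,c)$ from each $P(v,c)$ along the matched edges. The key simplification relative to the \DC case is that the domination requirement is now baked directly into the definition of $\S_{(v,c)}$ via the existential clause ``there exists a $u \in V(G)$ with $\closedneighbour{G}{u} \supseteq S$'', so I only need to track properness of the coloring (the $X$-variables), not domination (no $Y$-variables). The target monomial is therefore divisible by $\prod_{x \in X}x$ alone.

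\textbf{Forward direction.} Given a \classdcol $\rchi$ of $G$ with $|\rchi| = \ell$, I would first assume (without loss of generality) that $C = \im{\rchi}$, order $V(G)$ as before with $Q$-vertices first, and define a perfect matching $\M$ of $B$ by setting $\M(v) = \rchi(v)$ for $v \in Q$; for $v \in M$, setting $\M(v) = \rchi(v)$ if that color is still unmatched and $\M(v) = c'_v$ otherwise; and matching padding vertices in $V'(G)$ to leftover colors arbitrarily. For each matched edge $(v,c)$ with $c \in C$, the set $M_c = M \cap \rchi^{-1}(c)$ is an independent set, and since $\rchi$ is a \classdcol there exists $u \in V(G)$ dominating the whole color class $c$ (hence dominating $M_c$, and $M_c \cup \{v\}$ when $v \in Q$); thus $M_c \in \S_{(v,c)}$. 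This makes $\prod_{s \in M_c} x_s$ a term of $P(v,c)$, so the product over $\M$ yields a monomial $T$ of $\det A$. I would then verify that every $u \in M$ contributes $x_u$: letting $c = \rchi(u)$, the edge matched to the color $c$ (either $(u,c)$ itself or the $Q$-vertex colored $c$) carries the term containing $x_u$, so $\prod_{x \in X}x$ divides $T$.

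\textbf{Reverse direction.} Starting from a monomial $T$ of $\det A$ divisible by $\prod_{x \in X}x$, I would recover the perfect matching $\M$ and the per-edge terms $p(v,c)$, then define $\rchi(v) = \M(v)$ for every $v$ with $\M(v) \in C$, and color each remaining uncolored $v \in M$ with the color $c$ of the first matched edge $(v_i, \M(v_i))$ in the ordering whose term $p(v_i,\M(v_i))$ is divisible by $x_v$. Properness follows exactly as in \Cref{Computing Zero of Determinant is Enough}: two vertices sharing a color lie in a common independent set recorded by some $S$. The genuinely new point, and \emph{the main obstacle}, is checking the class-domination condition: I must show that for each color $c \in \im{\rchi}$ there is a single vertex $u$ dominating the \emph{entire} color class $\rchi^{-1}(c)$. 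The subtlety is that $\rchi^{-1}(c)$ may gather $M$-vertices colored $c$ from the ``overflow'' rule together with the $Q$- or $M$-vertex matched to $c$, and the independent set $S$ witnessing membership in $\S_{(v,c)}$ must equal exactly $M \cap \rchi^{-1}(c)$ for the existential $u$ to dominate all of them. I would argue that because the $X$-variables are untwinned (each $x_v$ appears for a unique $v \in M$) and no cancellation occurs, the term $p(v,c)$ picked at the edge matched to $c$ records precisely the set $S = M \cap \rchi^{-1}(c)$, whose defining clause supplies a $u \in V(G)$ with $\closedneighbour{G}{u} \supseteq S \cup \{v\}$ (or $\supseteq S$); since the clique vertices outside $Q$ are all dominated within $Q$ and every vertex of $Q$ lies in some vertex's closed neighborhood, this $u$ dominates all of $\rchi^{-1}(c)$. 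Closing this gap carefully — ensuring the recorded $S$ is neither too small nor too large relative to the actual color class — is where the argument requires the most care.
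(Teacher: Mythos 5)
Your overall route is the same as the paper's: the paper in fact gives no standalone proof of this theorem, saying only that it ``flows exactly like'' \Cref{Computing Zero of Determinant is Enough} and Lemma 3.2 of Gutin \textit{et al.}, and your forward and reverse constructions are exactly that adaptation (forward: build the matching from $\rchi$ and note $M_c = M \cap \rchi^{-1}(c)$ lies in $\S_{(v,c)}$ because the \classdcol supplies a dominating witness; reverse: read off the coloring from the matching and the chosen terms). However, the way you close what you call the main obstacle contains a wrong claim. You assert that the term $p(v,c)$ at the edge matched to $c$ ``records precisely the set $S = M \cap \rchi^{-1}(c)$,'' justified by the variables being untwinned and the determinant having no cancellation. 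That exactness is false in general: a vertex $w$ can belong to the chosen set $S$ at the edge for $c$ even though $x_w$ already divides the term of an \emph{earlier} matched edge, in which case your own coloring rule assigns $w$ that earlier color, so $w \notin \rchi^{-1}(c)$ and $S$ strictly contains $M \cap \rchi^{-1}(c)$. Nothing about non-cancellation prevents a variable from occurring in the terms of several matched edges.

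Fortunately exactness is neither true nor needed; only the one-sided containment matters, and it is immediate from your rule: every $w \in M$ with $\rchi(w) = c$ has $x_w$ dividing the term at the unique edge $(v,c) \in \M$ incident to $c$, hence $M \cap \rchi^{-1}(c) \subseteq S$ (and $v \in S$ when $v \in M$, since the $M \times C$ clause forces $v \in S$). The defining clause of $\S_{(v,c)}$ then yields a $u$ with $\closedneighbour{G}{u} \supseteq S \cup \{v\}$ (for $v \in Q$) or $\closedneighbour{G}{u} \supseteq S$ (for $v \in M$), and since $\rchi^{-1}(c) \subseteq S \cup \{v\}$ (resp.\ $\subseteq S$), this $u$ dominates the whole color class --- dominating a superset suffices. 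Your final sentence about ``clique vertices outside $Q$'' is likewise muddled but unnecessary: the $Q$-vertex colored $c$ is covered simply because the $Q \times C$ clause demands $\closedneighbour{G}{u} \supseteq S \cup \{v\}$, so $v$ itself lies in $u$'s closed neighborhood. With that one-line repair the argument is complete and coincides with the paper's intended proof.
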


As a corollary, we can use \Cref{Random choice,Polynomial which is not 0 everywhere} to obtain the main theorem of this section.
\begin{theorem}\label{CDCCLQ is FPT}
	\CDC can be solved in $\ordernoinput{2^k}$-time where $k$ is the size of a clique modulator of the input graph.
\end{theorem}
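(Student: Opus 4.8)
The plan is to derive \Cref{CDCCLQ is FPT} as a corollary of the determinant characterization in \Cref{Computing Zeo of Determinant is Enough Domd Call}, combined with the algebraic toolkit of \Cref{Polynomial which is not 0 everywhere,Random choice}, exactly as the \DC case was handled in \Cref{DC-CLQ}. By \Cref{Computing Zeo of Determinant is Enough Domd Call}, deciding whether $(G,\ell)$ is a \yes instance of \CDC is equivalent to deciding whether $\det A$ contains a monomial divisible by $\prod_{x \in X}x$. Because $X = \{x_v \mid v \in M\}$ has size exactly $|M| = k$, this is precisely a divisibility test against a product of $k$ distinguished variables, which is the form handled by \Cref{Polynomial which is not 0 everywhere} with $J = X$.

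First I would apply \Cref{Polynomial which is not 0 everywhere} to the polynomial $P = \det A$ and the set $J = X$. This produces a polynomial $Q$ over the variables $X \cup Z$ such that $Q \not\equiv 0$ if, and only if, $\det A$ contains a term divisible by $\prod_{x \in X}x$, and whose evaluation at any point costs $\order{2^{|J|}} = \order{2^k}$-time (up to polynomial factors; the inclusion--exclusion framework of \cite{Gutin2021} carries out the required evaluations of $\det A$ under zeroing subsets of $X$ within this budget via a subset-sum transform). The one point to check is that $\det A$ can indeed be evaluated within this budget: once field values are substituted for all variables, each entry $A(v,c) = z_{(v,c)} \cdot P(v,c)$ is obtained by summing $\prod_{s \in S}x_s$ over the family $\S_{(v,c)} \subseteq \P(M)$, and the resulting numeric matrix has dimension $|C \cup C'| = \ell + k$, polynomial in the input, whose determinant is computed by Gaussian elimination over $\mathbb{F}$.

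Next I would test whether $Q \not\equiv 0$ using \Cref{Random choice}. For this I need a degree bound: every $S \in \S_{(v,c)}$ satisfies $S \subseteq M$, so $P(v,c)$ has degree at most $k$ and each entry $A(v,c)$ degree at most $k+1$; consequently $\det A$, and hence $Q$, has degree polynomial in the input. Choosing $\mathbb{F}$ to be a field whose size exceeds twice this degree and sampling all variable values uniformly at random, \Cref{Random choice} guarantees one-sided error: if $(G,\ell)$ is a \no instance then $Q \equiv 0$ and every evaluation returns $0$, whereas if it is a \yes instance then $Q \not\equiv 0$ and a random evaluation is nonzero with probability at least $1/2$. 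A constant number of independent repetitions drives the failure probability arbitrarily low, and the total running time is $\order{2^k}$ per repetition up to polynomial factors, i.e.\ $\ordernoinput{2^k}$.

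Since the substantive content of the section lies entirely in \Cref{Computing Zeo of Determinant is Enough Domd Call}, the only steps demanding care in this corollary are the degree bound and field-size choice feeding \Cref{Random choice}, and the verification that a single determinant evaluation stays within the $\order{2^k}$ budget. I would finally emphasize---mirroring the discussion preceding the theorem---that here $J = X$ has size exactly $k$ regardless of the clique size $|Q|$, so the $2^k$ factor is independent of $|Q|$ and no separate small-clique subroutine is needed; this is in contrast with the \DC case, where the auxiliary variable set $Y$ forced $|J| = 2k$ and the overall bound was dictated by the exact algorithm of \Cref{c^n algo for DC}.
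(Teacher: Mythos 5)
Your proposal is correct and follows essentially the same route as the paper, which derives \Cref{CDCCLQ is FPT} in a single line as a corollary of \Cref{Computing Zeo of Determinant is Enough Domd Call} via \Cref{Polynomial which is not 0 everywhere} (with $J = X$, $|J| = k$) and \Cref{Random choice}. Your added details---the degree bound and field-size choice for Schwartz--Zippel, the cost of evaluating $\det A$ under the sieve's zeroing of subsets of $X$, and the observation that, unlike the \DC case, no separate small-clique subroutine is needed since $|J| = k$ independently of $|Q|$---are all consistent with the paper's setup and its explicit remark that the \CDC running time does not depend on $|Q|$.
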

\section{Parameterized by Twin Cover Size} \label{tc}

Consider a graph $G$. A subset $M$ of $V(G)$ is a \textit{twin cover} if for all $(u,v) \in E(G)$ either (i) $u \in M$ or $v \in M$ \textbf{or} (ii) $\closedneighbour{G}{u} = \closedneighbour{G}{v}$ (note that (ii) can be rephrased as ``$u$ and $v$ are \textit{true twins}"). This parameter was introduced by Ganian in 2015 \cite{Ganian2015} and has been used extensively in the world of parameterized complexity since. As before, we let $k$ denote the size of a twin cover throughout this section. We use the following observation due to Ganian \cite{Ganian2015}.

\begin{observation} \label{Twin Cover Clique Neighbourhood}
	Let $G$ be a graph. $M \subseteq V(G)$ is a twin cover of $G$ if, and only if, $G - M$ is a cluster graph and, if $\Q = \{Q_i\}^q_{i=1}$ are its connected components, for any $Q_i \in \Q$ and any $u,v \in Q_i$, $\closedneighbour{G}{u} = \closedneighbour{G}{v}$.
\end{observation}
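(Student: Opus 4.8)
The plan is to prove both directions directly from the definition of a twin cover, using the standard fact that a graph is a cluster graph if and only if it is $P_3$-free (i.e.\ contains no induced path on three vertices).

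For the forward direction, I would suppose $M$ is a twin cover and first argue that $G - M$ is a cluster graph by ruling out an induced $P_3$. If $u, v, w \in V(G) \setminus M$ formed such a path with $(u,v), (v,w) \in E(G)$ but $(u,w) \notin E(G)$, then both edges have both endpoints outside $M$, so applying the twin cover property to each edge forces $\closedneighbour{G}{u} = \closedneighbour{G}{v} = \closedneighbour{G}{w}$. But then $w \in \closedneighbour{G}{v} = \closedneighbour{G}{u}$, which (since $u \neq w$) gives $(u,w) \in E(G)$, a contradiction. Hence $G - M$ is $P_3$-free and therefore a cluster graph, so each connected component $Q_i$ is a clique. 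For the neighbourhood condition, I would take any $u, v \in Q_i$: if $u \neq v$ they are adjacent (as $Q_i$ is a clique) and both lie outside $M$, so the twin cover property yields $\closedneighbour{G}{u} = \closedneighbour{G}{v}$ immediately, and the case $u = v$ is trivial.

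For the backward direction, I would suppose $G - M$ is a cluster graph satisfying the stated neighbourhood condition, and let $(u,v) \in E(G)$ be arbitrary. If either endpoint lies in $M$, condition (i) of the twin cover definition holds and there is nothing to check. Otherwise $u, v \in V(G) \setminus M$, so $(u,v)$ is an edge of $G - M$ and hence $u$ and $v$ lie in the same component $Q_i$. The neighbourhood hypothesis then gives $\closedneighbour{G}{u} = \closedneighbour{G}{v}$, which is exactly condition (ii). Thus every edge of $G$ meets the twin cover requirement, so $M$ is a twin cover.

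Since each direction reduces to a single application of the defining property of a twin cover, I do not expect any genuine obstacle; the only point requiring care is the cluster-graph claim in the forward direction, where one must go through the $P_3$-free characterization rather than attempt to exhibit the clique structure of each component directly.
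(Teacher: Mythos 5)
Your proof is correct and complete. Note that the paper does not prove this observation at all---it is stated as a known fact cited from Ganian \cite{Ganian2015}---so there is no in-paper argument to compare against; your route, ruling out an induced $P_3$ in $G-M$ via two applications of the twin-cover condition and then reading off the neighbourhood equality from adjacency within a clique (with the converse being a one-line case split on whether an edge meets $M$), is exactly the standard argument one would give.
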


For a $Q \in \Q$ and an arbitrary $v \in Q$, we let $\closedneighbour{G}{Q} = \closedneighbour{G}{v}$ and say that $u \in V(G)$ is dominated by $Q$ if $v$ dominates it. These notions are well defined as a consequence of \Cref{Twin Cover Clique Neighbourhood}. We prove that \DC and \CDC are \fpt parameterized by the size of a twin cover -- specifically, that they both admit algorithms running in $\ordernoinput{2^{\order{k \log{k}}}}$-time.

A \textit{partial coloring} $\rchi^S \colon S \to C$ is a proper coloring of $G[S]$ (note that the superscript of $\rchi$ specifies the domain of the function). A proper coloring of $G$ is an \textit{extension} of $\rchi^S$ if it agrees with $\rchi^S$ on $S$. In the well-studied \PreCol problem (first introduced in \cite{Biro1992}), we are given a partial coloring $\rchi^S$ of a graph $G$ and an $\ell \in \N$, and ask if there exists an extension $\rchi$ of $\rchi^S$ with $|\rchi| \leq \ell$. It is known that \PreCol is \fpt parameterized by the size of a twin cover \cite{Ganian2015}.

We introduce a similar notion of a \textit{\pclassdcol} and a \textit{\pdc} and prove that their corresponding extension problems can be solved quickly -- in \fpt time in case of \CDC and in polynomial time in case of \DC. We will then prove that an optimal \domcol (resp. \classdcol) can be constructed as an extension of an $\order{2^{\order{k \log{k}}}}$-sized set of \pdc{}s (resp. \pclassdcol{}s). Furthermore, we show that such a set can be be constructed in \fpt time. This, indeed, proves the running time of the aforementioned \fpt algorithms for the two problems.
\subsection{\DC Parameterized by Twin Cover Size} \label{DCTCsect}

We now introduce the notions of a \pdc of a graph and an extension of a \pdc. We then prove that an optimal extension of a \pdc can be found in polynomial time.
\begin{mydefinition}
	A \textit{\pdc} is a tuple $(\rchi^S,\delta)$ where $\rchi^S$ is a partial coloring of $G$ and $\delta \colon V(G) \to \im{\rchi^S}$ such that for all $v \in V(G)$, $v$ dominates the color class $\delta(v)$. We denote $(\rchi^S,\delta)$ by $\pdci$.
\end{mydefinition}
\begin{mydefinition}
	 A \domcol $\rchi$ of $G$ is an \textit{extension} of a partial \domcol $\pdci$, if $\rchi$ agrees with $\pdci$ on $S$, and for all $v\in V(G)$,  $v$ dominates $\delta(v)$ with respect to $\rchi$.
\end{mydefinition}
\begin{theorem}\label{extension in polytime}
	Let $\pdci$ be a \pdc of $G$ where $M \subseteq S$. Then, given an $\ell \in \N$, we can decide in polynomial time if there exists a extension $\rchi$ of $\pdci$ with $|\rchi| \leq \ell$.
\end{theorem}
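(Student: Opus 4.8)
The plan is to reduce the extension problem to an instance of \ListCol on the cluster graph induced by the uncolored vertices, and then to solve that instance in polynomial time by exploiting the twin structure guaranteed by \Cref{Twin Cover Clique Neighbourhood}. First I would use the hypothesis $M \subseteq S$: every uncolored vertex lies in some clique $Q \in \Q$ of the cluster graph $G - M$, and by \Cref{Twin Cover Clique Neighbourhood} all vertices of a fixed $Q$ are true twins, so they share the closed neighborhood $\closedneighbour{G}{Q}$. For an uncolored vertex $v \in Q$ I would pin down exactly which colors it may receive in an extension. Coloring $v$ with a color $c \in \im{\rchi^S}$ is admissible precisely when (i) $c$ is not used on any neighbour of $Q$ in $M$ (properness) and (ii) every $u \in V(G)$ with $\delta(u) = c$ satisfies $v \in \closedneighbour{G}{u}$, equivalently $\{u : \delta(u) = c\} \subseteq \closedneighbour{G}{Q}$ (so that appending $v$ to color class $c$ does not destroy the domination prescribed by $\delta$). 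Because all vertices of $Q$ are twins, (i) and (ii) depend only on $Q$; hence every uncolored vertex of $Q$ has the same list $L_Q \subseteq \im{\rchi^S}$ of admissible old colors. A vertex may alternatively take a fresh colour, which is always admissible since fresh colours are never values of $\delta$ and never appear in $M$.

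Next I would formalize the equivalence. Let $H = G[V(G) \setminus S]$, a disjoint union of cliques $\{A_Q\}$ (the uncolored parts of the $Q \in \Q$), and give each $v \in A_Q$ the list $L(v) = (L_Q \setminus B_Q) \cup N$, where $B_Q$ is the set of colours already used on $Q$ by $\rchi^S$ and $N$ is a sufficiently large pool of fresh colours. I would prove that extensions of $\pdci$ correspond to proper list colorings of $H$ from $L$: properness inside $H$ is exactly the clique constraint, removing $B_Q$ and forbidding $M$-neighbour colours encodes properness against the already-coloured vertices, and condition (ii) guarantees that reusing an old colour never violates the domination requirement of the extension. The number of colours used by the resulting extension is $|\im{\rchi^S}|$ plus the number of distinct fresh colours used.

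Finally I would solve this \ListCol instance in polynomial time, circumventing the general hardness of \ListCol by using the cluster-plus-twin structure. Since the lists are uniform on each clique $A_Q$, maximizing the number of vertices of $A_Q$ that receive old colours is a bipartite matching on a complete bipartite graph, yielding $\min(|A_Q|, |L_Q \setminus B_Q|)$; the remaining $t_Q := \max(0, |A_Q| - |L_Q \setminus B_Q|)$ vertices must take fresh colours. Since distinct $A_Q, A_{Q'}$ are non-adjacent components of $H$ and fresh colours occur on no vertex of $M$, fresh colours may be reused across cliques, so the minimum number of fresh colours over all extensions is exactly $\max_{Q} t_Q$. Hence the smallest extension has size $|\im{\rchi^S}| + \max_Q t_Q$, which is computable in polynomial time, and I would answer \yes iff this quantity is at most $\ell$.

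The main obstacle I anticipate is condition (ii) and its interaction with colour reuse: I must argue that reusing a single old colour $c$ across several cliques, while appending uncolored vertices to the class $c$, never breaks the global guarantee that every $u$ dominates $\delta(u)$. This is what forces the admissibility condition $\{u : \delta(u) = c\} \subseteq \closedneighbour{G}{Q}$ to be stated per clique and verified for every clique that uses $c$; getting this condition exactly right — neither too strong, which would cost completeness, nor too weak, which would cost soundness — is the crux, and it is precisely where the twin-cover hypothesis (and $M \subseteq S$) is essential.
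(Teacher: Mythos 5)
Your proposal is correct, and at its core it is the same reduction the paper uses: both turn the extension question into a list coloring problem on the cluster graph $G-M$, with the list of an uncolored vertex $v$ consisting of the fresh colors $c_{\kappa+1},\dots,c_\ell$ together with those old colors $c$ for which $v \in \closedneighbour{G}{V^c_\delta}$, i.e.\ $\delta^{-1}(c)\subseteq\closedneighbour{G}{Q}$ --- your condition (ii) is literally the paper's membership test $v\in\closedneighbour{G}{V^c_\delta}$. The differences are in the endgame and in one correctness detail. The paper keeps the already-colored clique vertices in the \ListCol instance with singleton lists (where you delete them and subtract $B_Q$ instead --- equivalent) and then simply invokes \Cref{lemma:ListClusterPoly}, a matching-based polynomial algorithm for \ListCol on cluster graphs; that route does not need the lists to be uniform within a clique, so it is stated as a reusable corollary. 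You instead exploit the true-twin structure of \Cref{Twin Cover Clique Neighbourhood} to observe that all of $A_Q$ shares one list, collapse the per-clique matching to $\min(|A_Q|,|L_Q\setminus B_Q|)$, and get the closed-form optimum $|\im{\rchi^S}|+\max_Q t_Q$; your argument that old colors may be reused across cliques (each clique independently certifying $\delta^{-1}(c)\subseteq\closedneighbour{G}{Q}$, and distinct components never being adjacent) and that fresh colors are harmless because $\im{\delta}\subseteq\im{\rchi^S}$ is exactly the justification the paper leaves implicit in \Cref{ListCol is easy => Ext is easy}. This buys you a self-contained and arguably simpler proof, at the cost of not yielding the more general cluster-graph \ListCol subroutine.

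One point where you are actually more careful than the paper's written construction: your admissibility condition (i), excluding from $L_Q$ any color used by $\rchi^S$ on a vertex of $\openneighbour{G}{Q}\cap M$. The list function displayed in the paper imposes only the domination condition, and since $\closedneighbour{G}{V^c_\delta}=V(G)$ for $c\notin\im{\delta}$, it would as literally written permit an uncolored vertex to receive the color of an adjacent modulator vertex, breaking properness of the combined coloring. This exclusion is clearly intended (the equivalence in \Cref{ListCol is easy => Ext is easy} needs it), and your proposal makes it explicit --- correctly identifying it, together with cross-clique reuse of old colors, as the crux of soundness.
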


We use \ListCol, a well known problem related to \Col, to prove \Cref{extension in polytime}. For more information on the (parameterized) complexity of \ListCol, we refer the reader to \cite{Banik2020,Fiala2011,Gutin2021} and the references therein. We now define \ListCol. Let $G$ be a graph and $C = \{c_1, c_2 \dots c_{\ell}\}$ be a set of colors. A \textit{list function} is a function $L \colon V(G) \to \mathscr{P}(C)$. Given $G$, $C$, and a list function $L$, \ListCol asks the following question -- does there exist a proper coloring $\rchi$ of $G$ such that $\rchi(v) \in L(v)$ for all $v \in V(G)$? It is known that \ListCol can be solved in polynomial time when restricted to cliques \cite{Arora2018}. The proof, which uses a simple matching argument, can be readily extended to cluster graphs. 

\begin{restatable}{mycorollary}{ListClusterPoly} \label{lemma:ListClusterPoly}
	\ListCol, restricted to cluster graphs, is polynomial time solvable.
\end{restatable}

Let $\pdci$ be a \pdc of a $G$ with $M\subseteq S$. If $|\pdci| = \kappa > \ell$, then there obviously exists no extension of it which uses at most $\ell$ colors. Assume that $C = \{c_1, c_2 \dots c_ \ell\}$ is the codomain of $\pdci$ and $\im{\pdci} = \{c_1, c_2 \dots c_\kappa\}$.  Note that $H = G - M$ is a cluster graph. We carefully design a list function $L$ on $V(H)$ so that any proper coloring of $H$ which respects $L$ can be used to color $G$ and the coloring so obtained is an extension of $\pdomcolinstance$.

For a $c \in \im{\delta}$, let $V^c_{\delta}$ be the set of vertices in $G$ that are to dominate $c$ -- that is, $V^c_{\delta} = \delta^{-1}(c)$. Define $\closedneighbour{G}{V_\delta^c} = \cap_{v \in V^c_{\delta}} \closedneighbour{G}{v}$. Observe that $\closedneighbour{G}{V_\delta^c}$ is exactly the set of vertices with the following property: if any subset of vertices in $\closedneighbour{G}{V_\delta^c}$ is colored with $c$ then all the vertices in $V_\delta^c$ will continue dominating the color class $c$. For ease of notation, for a $c \notin \im{\delta}$, we let $\closedneighbour{G}{V_\delta^c} = V(G)$. We now construct $L$. For a $v \in V(H) \cap S$, let $L(v) = \{\pdci(v)\}$. For $v \in V(H) \setminus S$, set $L(v)= \{c \mid v\in \closedneighbour{G}{V^c_\delta}\} \cup \{c_{\kappa + 1}, c_{\kappa+2} \dots c_{\ell}\}$. By construction, we have the following observation.

\begin{observation}\label{ListCol is easy => Ext is easy}
		There exists a list coloring of $H$ respecting $L$ if, and only if, there exists a extension $\rchi$ of $\pdci$ with $|\rchi| \leq \ell$.
\end{observation}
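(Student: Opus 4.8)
The plan is to read the claimed equivalence as a faithful translation between colorings of the cluster graph $H = G - M$ and extensions of $\pdci$. Since $M$ and $V(H)$ partition $V(G)$ and every vertex of $M$ is already fixed by $\rchi^S$ (because $M \subseteq S$), any extension $\rchi$ is completely determined by its restriction to $V(H)$, and the only genuine freedom lies in the vertices of $V(H) \setminus S$. The list function $L$ is built precisely to record which colors each such vertex may receive, so I would prove the two implications by (i) restricting a given extension to $V(H)$, and (ii) gluing a given list coloring of $H$ to $\rchi^S$ on $M$.

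Assuming an extension $\rchi$ with $|\rchi| \le \ell$ exists, I would take $\rchi$ restricted to $V(H)$. This is automatically a proper coloring of $H$ since $H$ is induced, and as $|\rchi|\le\ell$ we may assume $\im{\rchi} \subseteq C$. Vertices of $V(H) \cap S$ are colored by $\rchi^S$, so they meet their singleton lists; a vertex $v \in V(H) \setminus S$ colored with a fresh color lies in $L(v)$ trivially, while a vertex colored with some $c \in \im{\delta}$ lies in $L(v)$ because every $w \in V^c_\delta = \delta^{-1}(c)$ dominates the class $c$ under $\rchi$, forcing $v \in \rchi^{-1}(c) \subseteq \closedneighbour{G}{w}$ and hence $v \in \closedneighbour{G}{V^c_\delta}$. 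For the converse I would set $\rchi = \rchi^S$ on $M$ and $\rchi = \rchi_H$ on $V(H)$; the singleton lists guarantee these agree on $V(H) \cap S$, so $\rchi$ extends $\pdci$ and uses at most $\ell$ colors. Domination is preserved exactly because a vertex of $V(H) \setminus S$ is permitted the color $c \in \im{\delta}$ only when $v \in \closedneighbour{G}{V^c_\delta}$, which by the remark preceding the statement is precisely the condition guaranteeing that no vertex of $V^c_\delta$ loses $c$ from its dominated class.

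The main obstacle is establishing that the glued coloring $\rchi$ is proper, specifically across the boundary between $M$ and $V(H)$. Edges inside $M$ are handled by properness of $\rchi^S$ on $G[S]$, edges inside $H$ by $\rchi_H$ being a proper coloring of $H$, and an edge with endpoints $u \in M$ and $v \in V(H) \cap S$ again by properness of $\rchi^S$. The delicate case is an edge with $u \in M$ and $v \in V(H) \setminus S$: nothing in the domination condition alone prevents $\rchi_H(v)$ from coinciding with $\rchi^S(u)$, so for the equivalence to hold the admissible list for such a $v$ must additionally forbid the colors on its neighbours in the modulator, i.e.\ $L(v)$ must be intersected with $C \setminus \{\rchi^S(u) : u \in \closedneighbour{G}{v} \cap M\}$. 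With this ingredient the boundary case is immediate, and conversely the restriction in the forward direction still lands in $L(v)$ because a proper extension never colors $v$ with a neighbour's color. Assembling the three verifications — at most $\ell$ colors, domination preservation, and properness — then yields both implications.
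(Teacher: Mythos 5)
Your proposal follows exactly the restrict-and-glue route that the paper intends: the paper offers no explicit proof (it asserts the observation ``by construction''), and your two directions --- restricting an extension to $V(H)$, and gluing a list coloring of $H$ onto $\rchi^S$ --- together with your use of the defining property of $\closedneighbour{G}{V^c_\delta}$ to transfer the domination requirements of $\delta$, is precisely the intended argument. Your case analysis in the forward direction is also correct (colors $c \leq c_\kappa$ with $c \notin \im{\delta}$ are admissible everywhere by the convention $\closedneighbour{G}{V^c_\delta} = V(G)$, and fresh colors $c_{\kappa+1}, \dots, c_\ell$ are free by fiat).

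More importantly, the obstacle you isolate in your final paragraph is a genuine gap in the paper's construction of $L$, not a defect of your write-up. As literally defined, $L(v)$ for $v \in V(H) \setminus S$ encodes only the domination constraints and never forbids the colors $\rchi^S$ assigns to $\openneighbour{G}{v} \cap M$, so the backward implication of the observation fails. A concrete witness: let $G = K_3$ on $\{u,v,v'\}$ with twin cover $M = S = \{u\}$, $\rchi^S(u) = c_1$, $\delta \equiv c_1$ (every vertex dominates the class $\{u\}$), and $\ell = 2$. Then $V^{c_1}_\delta = V(G)$ and $\closedneighbour{G}{V^{c_1}_\delta} = V(G)$, so $L(v) = L(v') = \{c_1, c_2\}$ and the single edge $H = G - M$ admits a list coloring ($v \mapsto c_1$, $v' \mapsto c_2$), yet no extension with two colors exists since $K_3$ has chromatic number $3$; gluing the list coloring violates properness on the edge $(u,v)$ exactly as you predict. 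Your repair --- replacing $L(v)$ by $L(v) \setminus \{\rchi^S(u) \mid u \in \openneighbour{G}{v} \cap M\}$ --- restores the equivalence: the backward direction becomes proper across the $M$--$H$ boundary, the forward direction still lands in the shrunken lists because a proper extension never gives $v$ a neighbor's color, and nothing downstream breaks, since \Cref{lemma:ListClusterPoly} handles arbitrary list functions on cluster graphs (indeed, by \Cref{Twin Cover Clique Neighbourhood} the deleted color sets are even uniform within each clique). In the example above the corrected lists become $L(v) = L(v') = \{c_2\}$, correctly certifying that no two-color extension exists. So your proof, with this amendment, is the correct version of \Cref{ListCol is easy => Ext is easy}, and the amendment should be regarded as a necessary fix to the paper's list construction rather than an optional strengthening.
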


Hence, by \Cref{ListCol is easy => Ext is easy} and \Cref{lemma:ListClusterPoly}, we have the proof of \Cref{extension in polytime}.

Assume that there exists $\Gamma$, a collection of \pdc{}s, with the following properties: 
\begin{enumerate}[(i)]
	\item For all $\pdci \in \Gamma$, $M \subseteq S$.
	\item Given a \domcol{} $\rchi'$ of $G$, there exists an extension $\rchi$ of a $\pdci \in \Gamma$ with $|\rchi| \leq |\rchi'|$.
	\item $|\Gamma| = \order{2^{\order{k \log{k}}}}$ and $\Gamma$ can be constructed in $\ordernoinput{2^{\order{k \log{k}}}}$-time.
\end{enumerate}  
Then, we have the main theorem of this section.
\begin{theorem}\label{dctcfpt}
	Assume that a $\Gamma$ satisfying (i), (ii), and (iii) exists. \DC can be solved in $\ordernoinput{2^{\order{k\log{k}}}}$-time, where $k$ is the size of a twin cover of the input graph. 
\end{theorem}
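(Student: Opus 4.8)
The plan is to combine the three assumed properties of $\Gamma$ with the polynomial-time extension procedure of \Cref{extension in polytime} in a straightforward enumerate-and-test fashion. First I would invoke property (iii) to construct $\Gamma$ in $\ordernoinput{2^{\order{k \log{k}}}}$-time. Then, iterating over each \pdc $\pdci \in \Gamma$ (of which there are $\order{2^{\order{k \log{k}}}}$ many), I would apply \Cref{extension in polytime} to decide, in polynomial time, whether $\pdci$ admits an extension $\rchi$ with $|\rchi| \leq \ell$; note that property (i) guarantees $M \subseteq S$ for every such $\pdci$, which is precisely the hypothesis \Cref{extension in polytime} requires. The algorithm returns \yes if and only if at least one \pdc in $\Gamma$ passes this test.

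For correctness I would argue both directions. If the test succeeds for some $\pdci \in \Gamma$, then, since an extension of a \pdc is by definition a \domcol of $G$, the returned $\rchi$ directly witnesses that $\DCinstance$ is a \yes instance. Conversely, suppose $\DCinstance$ is a \yes instance and fix a \domcol $\rchi'$ with $|\rchi'| \leq \ell$. Property (ii) then supplies a $\pdci \in \Gamma$ together with an extension $\rchi$ satisfying $|\rchi| \leq |\rchi'| \leq \ell$, so \Cref{extension in polytime} succeeds on this particular $\pdci$ and the algorithm outputs \yes. This establishes the equivalence.

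For the running time, constructing $\Gamma$ costs $\ordernoinput{2^{\order{k \log{k}}}}$-time by (iii), and each of the $\order{2^{\order{k \log{k}}}}$ extension tests runs in $n^{\order{1}}$-time by \Cref{extension in polytime}; the product is absorbed into the claimed bound $\ordernoinput{2^{\order{k \log{k}}}}$.

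The main point to stress is that this theorem is not where the genuine difficulty lies: all of the real work has been deferred into properties (ii) and (iii), namely that an optimal \domcol can always be recovered as an extension of one of only $\order{2^{\order{k \log{k}}}}$ carefully chosen \pdc{}s, and that this collection is efficiently constructible. Granting those, the remainder is a clean assembly, and the only steps demanding care are verifying that the objects produced by \Cref{extension in polytime} are genuinely \domcol{}s (immediate from the definition of extension) and that the inequality $|\rchi| \leq |\rchi'|$ in property (ii) correctly lines up with the threshold $\ell$ used in the extension test. I expect establishing the existence of $\Gamma$ (done separately) to be the genuine obstacle, while the present proof reduces to the bookkeeping above.
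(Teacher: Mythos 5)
Your proposal is correct and follows essentially the same route as the paper's proof: enumerate the \pdc{}s in $\Gamma$ (using property (iii) for the construction time), test each one for an extension of size at most $\ell$ via \Cref{extension in polytime} (with property (i) supplying its hypothesis $M \subseteq S$), and use property (ii) for the converse direction of correctness. Your version is in fact slightly more explicit than the paper's about both directions of the equivalence, but there is no substantive difference.
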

\begin{proof}
	Consider a collection of \pdc{}s $\Gamma$ with the three properties elucidated above. By \Cref{extension in polytime}, for each $\pdci \in \Gamma$ we can decide in polynomial time if there exists an extension $\rchi$ of $\pdci$ with $|\rchi| \leq \ell$. By the third property of $\Gamma$, we can decide in $\ordernoinput{2^{\order{k\log{k}}}}$-time if there exists an extension $\rchi$ of one of the \pdc{}s in $\Gamma$ with $|\rchi| \leq \ell$. By its second property, $(G,\ell)$ is a \yes instance if, and only if, such an extension exists.
\end{proof}

All we are left to do is to prove that such a collection of \pdc{}s exist. We first prove a few results that will turn out to be useful in the forthcoming paragraphs.

\begin{observation}\label{Twin Cover Clique Vertex Domination}
	Let $v \in Q_i$, for some $Q_i \in \Q$, be a vertex of $G$. Then, $v$ cannot dominate a color that has been used in both $M$ and $V(G) \setminus M$ in any \domcol of $G$.
\end{observation}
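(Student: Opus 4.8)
The plan is to argue by contradiction, using the structural characterization of twin covers from \Cref{Twin Cover Clique Neighbourhood}. Fix a \domcol $\rchi$ of $G$ and suppose, towards a contradiction, that $v \in Q_i$ dominates a color class $c$ that is used on some $w \in M$ \uprightand on some $u \in V(G) \setminus M$. Since $v$ dominates $c$, every vertex colored $c$ lies in $\closedneighbour{G}{v}$; in particular, both $w$ and $u$ lie in $\closedneighbour{G}{v}$. The strategy is to show that $u$ must in fact be a true twin of $v$, and then to transport $w$ into the closed neighbourhood of the same-colored vertex $u$, contradicting properness.

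The first step is to locate $u$. Because $u \in \closedneighbour{G}{v}$, either $u = v$ or $(u,v) \in E(G)$. In the latter case, $u$ and $v$ are two adjacent vertices both lying outside $M$, so the edge $(u,v)$ cannot be covered by $M$; the twin cover condition therefore forces $\closedneighbour{G}{u} = \closedneighbour{G}{v}$. In the former case this equality holds trivially. Hence, in all cases, $\closedneighbour{G}{u} = \closedneighbour{G}{v}$.

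The second step derives the contradiction. Since $w \in \closedneighbour{G}{v} = \closedneighbour{G}{u}$ and $w \neq u$ (as $w \in M$ while $u \notin M$), we conclude $(u,w) \in E(G)$. But $\rchi(u) = \rchi(w) = c$, contradicting that $\rchi$ is a proper coloring. This rules out the existence of such a $v$ and proves the observation.

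I do not expect a genuine obstacle here; the one subtle point I would flag as the crux is the realization that a $c$-colored neighbour of $v$ lying outside $M$ must be a \emph{true twin} of $v$. This is precisely where the twin cover definition (rather than that of an arbitrary CVD set) is invoked, and it is what lets us pull the modulator vertex $w$ into the closed neighbourhood of a vertex that shares its color.
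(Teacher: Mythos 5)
Your proof is correct and matches the paper's argument in essence: both hinge on showing $\closedneighbour{G}{u} = \closedneighbour{G}{v}$ via the twin-cover property and then transferring $w$'s adjacency status between $u$ and $v$ to contradict properness. The only cosmetic differences are that you argue by contradiction (which lets the hypothesis $u \in \closedneighbour{G}{v}$ replace the paper's explicit case split on $u \in Q_j$, $j \neq i$, versus $u \in Q_i$) and that you invoke the edge-based definition of a twin cover directly rather than the characterization in \Cref{Twin Cover Clique Neighbourhood}.
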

\begin{proof}
	Let $c$ be a color that has been used to color a $w \in M$ and $u \in V(G) \setminus M$. If $u \in Q_j$, for $i \neq j$, then $v$ cannot dominate $c$. Now, assume that $u \in Q_i$. Since a \domcol is a proper coloring of $G$, $w \notin \closedneighbour{G}{u}$. Since $M$ is a twin cover, $\closedneighbour{G}{u} = \closedneighbour{G}{v}$. Hence, $w \notin \closedneighbour{G}{v}$. Therefore, $v$ cannot dominate $c$. 
\end{proof}
\begin{mycorollary}\label{Twin Cover Unique Colors Required}
	Let $v \in Q$, for some $Q \in \Q$, be a vertex of $G$. If, in a \domcol $\rchi$ of $G$, $v$ does not dominate any color that is used only to color vertices in $M$, then there exists a vertex $u \in Q$ which gets a color $c$ such that $|\rchi^{-1}(c)|=1$.
\end{mycorollary}
\begin{observation}\label{switching twins domcol}
	Let $\rchi$ be an arbitrary \domcol of $G$ and let $u,v \in V(G)$ such that $\closedneighbour{G}{u} = \closedneighbour{G}{v}$. Then, $\rchi'$, the mapping which switches the colors assigned to $u$ and $v$ by $\rchi$ and retains the coloring of the rest of the vertices, is a \domcol of $G$ with $|\rchi| = |\rchi'|$.
\end{observation}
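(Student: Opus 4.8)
The plan is to verify the three defining requirements of a \domcol{} for $\rchi'$: that it is a proper coloring, that every vertex dominates some color class, and that $|\rchi'| = |\rchi|$. Throughout I would exploit the defining symmetry of true twins. Since $\closedneighbour{G}{u} = \closedneighbour{G}{v}$ and $u \in \closedneighbour{G}{u}$, the vertices $u$ and $v$ are adjacent; moreover, for every $w \notin \{u,v\}$ we have $w \in \closedneighbour{G}{u} \iff w \in \closedneighbour{G}{v}$, i.e. $w$ is adjacent to $u$ exactly when it is adjacent to $v$. These two facts are the only structural inputs needed.

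The count $|\rchi'| = |\rchi|$ is immediate: because $(u,v) \in E(G)$ and $\rchi$ is proper, $\rchi(u) \neq \rchi(v)$, so $\rchi'$ merely permutes the two colors $\rchi(u)$ and $\rchi(v)$ between $u$ and $v$ while fixing every other vertex; hence $\im{\rchi'} = \im{\rchi}$. For properness I would check only the edges whose endpoints could have changed color, i.e. edges incident to $u$ or $v$. The edge $(u,v)$ stays bichromatic since its two distinct colors are simply swapped. For an edge $(u,w)$ with $w \neq v$, the twin symmetry gives $(v,w) \in E(G)$, so $\rchi'(u) = \rchi(v) \neq \rchi(w) = \rchi'(w)$ by properness of $\rchi$; the case $(v,w)$ with $w \neq u$ is symmetric, and every remaining edge is untouched.

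The main obstacle, and the only step requiring care, is re-establishing domination, since swapping colors moves $u$ and $v$ between their color classes. Writing $c_u = \rchi(u)$ and $c_v = \rchi(v)$, I would observe that the only color classes differing between $\rchi$ and $\rchi'$ are $c_u$ (which loses $u$ and gains $v$) and $c_v$ (which loses $v$ and gains $u$); all other classes are identical as sets. Fix any vertex $w$ and let $c$ be a class it dominates under $\rchi$, so $\rchi^{-1}(c) \subseteq \closedneighbour{G}{w}$. If $c \notin \{c_u, c_v\}$ the class is unchanged and $w$ still dominates it. If $c = c_u$, then $u \in \rchi^{-1}(c_u) \subseteq \closedneighbour{G}{w}$, and I would argue $v \in \closedneighbour{G}{w}$ as well: when $w \notin \{u,v\}$ this follows from the twin symmetry applied to $w \in \closedneighbour{G}{u}$, while the cases $w = u$ and $w = v$ follow from $v \in \closedneighbour{G}{u} = \closedneighbour{G}{v}$. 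Since the new class is $(\rchi^{-1}(c_u) \setminus \{u\}) \cup \{v\}$, it stays inside $\closedneighbour{G}{w}$, so $w$ dominates $c_u$ under $\rchi'$; the case $c = c_v$ is symmetric.

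Combining the three parts, every vertex still dominates a color class, the coloring remains proper, and the number of colors is unchanged, so $\rchi'$ is a \domcol{} of $G$ with $|\rchi| = |\rchi'|$. I do not anticipate any hidden difficulty beyond the bookkeeping of which color classes move under the swap, which the twin symmetry resolves uniformly.
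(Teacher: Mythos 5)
Your proposal is correct. The paper states this observation without any proof, treating it as routine, and your verification --- properness via the true-twin symmetry (noting $u$ and $v$ are adjacent since $u \in \closedneighbour{G}{u} = \closedneighbour{G}{v}$, so the swap is between distinct colors), the unchanged image, and the preservation of domination because only the classes $\rchi(u)$ and $\rchi(v)$ change as sets and each remains inside $\closedneighbour{G}{w}$ for any vertex $w$ that dominated it --- is precisely the argument the paper implicitly relies on.
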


\subsubsection{Constructing a $\Gamma$}

We now show that such a collection of \pdc{}s exists by describing its construction in two steps. Let $\mathscr{P}$ denote the set of partitions of $M$. For each partition $\P = \{P_1, P_2 \dots P_\kappa\} \in \mathscr{P}$ ($\kappa$ is at most $k$), check if each part $P_i \in \P$ is independent. If so, color all vertices in each $P_i$ by $c_i$. If not, reject this partition of $M$. Note that we have now obtained a collection $\Gamma'$ of $\order{2^{k \log{k}}}$-many partial colorings of $G$. Moreover, constructing $\Gamma'$ required $\ordernoinput{2^{k \log{k}}}$-time. Note that the domain of each partial coloring $\rchi^S \in \Gamma'$ is $M$ by construction.

Consider a $\rchi^M \in \Gamma'$. Let $\mathscr{P}'$ be the set of ``partitions" of $\im{\rchi^M}$ into two sets. Partitions is in quotations here since we allow for $\im{\rchi^M}$ to be split into $\{\im{\rchi^M},\emptyset\}$. Consider a $\P' = \{P'_0,P'_1\} \in \mathscr{P}'$. We interpret colors in $P'_0$ as those that are only used in $M$ and those in $P'_1$ as those that can also be used in $G - M$.

\begin{figure}[ht]
	\centering
	\includegraphics[width=0.67\textwidth]{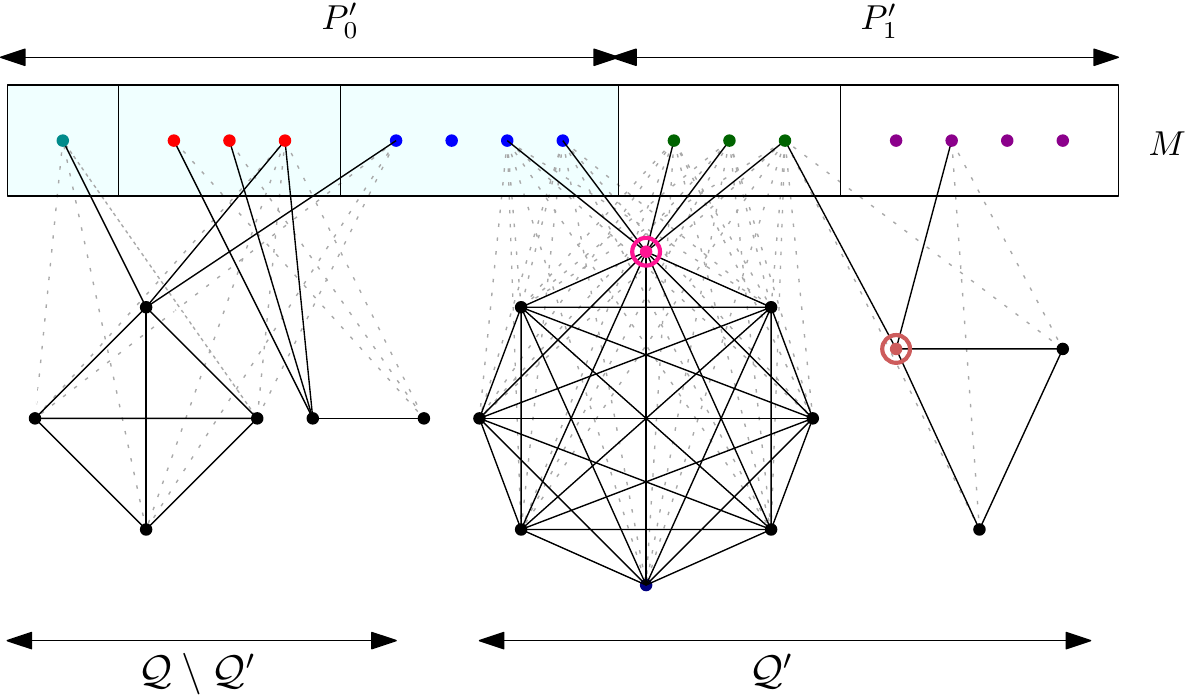}
	\caption{Given a $\rchi^M \in \Gamma'$ and a $\{P'_0,P'_1\} \in \P'$, we define $\Q'$ to be the collection of cliques which do not dominate a color class in $P'_0$. An arbitrary vertex in each of these cliques is colored a unique color.}
	\label{fig:twin1}
\end{figure}  

Let $\Q' \subseteq \Q$ denote the set of cliques in $G-M$ which \textbf{do not} dominate any set in $\{{\rchi^M}^{-1}(c) \mid c \in P'_0\}$. That is, $\Q'$ is exactly the set of cliques that do not dominate a color class that is used exclusively in $M$ (refer \Cref{fig:twin1}). In each $Q \in \Q'$, color an arbitrary vertex $v \in Q$ with a unique color (this is motivated by \Cref{Twin Cover Unique Colors Required} and \Cref{switching twins domcol}). Let the set of these colors be $C_{\Q'}$. 

To obtain a \pdc from this partial coloring, we need to define a domination function $\delta$. At this point, note that every vertex in $G - M$ dominates a color class! A $v \in Q$, when $Q \in \Q'$, dominates the unique color used in that clique. Let $\delta(v)$ be that color. When $Q \notin \Q'$, $v$ dominates a color class in $P'_0$. Let $\delta(v)$ be an arbitrary such color in $P'_0$. Since $|M| = k$, we can assume, without loss in generality, that the color class that a $v \in M$ dominates must be in $\widetilde{C} = \{c_1, c_2 \dots c_\kappa, c_{\kappa + 1} \dots c_{\kappa + k}\}$. Consider the following subset of $M$: 
\[
	M_0 = \{v \in M \mid v \text{ dominates a color class in }P'_0 \cup C_{\Q'}\}
\] 
For a $v \in M_0$, let $\delta(v)$ be an arbitrary color in $P'_0 \cup C_{\Q'}$ that $v$ dominates. By definition, a vertex in $M \setminus M_0$ can only dominate a color class used in $P'_1$ or a color class that has not been used to color any vertex yet. We define the following set of functions:
\[
	\Delta' = \{\delta' \colon M \setminus M_0 \to \widetilde{C} \mid v \text{ dominates all vertices colored }\delta'(v)\}
\]
Note that the above definition allows for a vertex in $M$ to dominate a color class which has not been used to color any vertex in $G$ yet. Fix a $\delta' \in \Delta'$. Let $M_1 = \{v \in M \mid \delta'(v) \in P'_1\}$ and $M_2 = \{v \in M \mid \delta'(v) \notin P'_1\}$. Clearly, $M = M_0 \cup M_1 \cup M_2$. For a $v \in M_1 \cup M_2$, we let $\delta(v) = \delta'(v)$. 

We have one last step to complete before we have a \pdc of $G$ -- we must color at least one vertex (in $G - M$) with each color in $\delta(M_2)$. Consider the bipartite graph $B$ with bipartition $(V(\Q),\delta(M_2))$ where $(v,c) \in E(B)$ if, and only if, $v \in \closedneighbour{G}{V^c_\delta}$ (notation defined prior to \Cref{ListCol is easy => Ext is easy}). If there exists no matching saturating $\delta(M_2)$ in $B$, we do not proceed further with this choice of $\delta'$. If such a matching exists, use each $c \in \delta(M_2)$ to color its matched vertex. If $S$ is the set of vertices of $G$ that have been colored, note that $\pdci$ is indeed a \pdc of $G$. Let $\Gamma$ be the collection of \pdc{}s obtained by varying over $\rchi^M \in \Gamma'$, $\P' \in \mathscr{P}'$, and $\delta' \in \Delta'$.

\subsubsection{$\Gamma$ Satisfies the Three Properties}

Since $|\mathscr{P}'| \leq 2^k$ and $|\Delta'| \leq 2^{k \log{k}}$, $|\Gamma| \leq |\Gamma'| \cdot 2^k \cdot 2^{k \log{k}}$. Thus, $|\Gamma| \in \order{2^{\order{k \log{k}}}}$. Moreover, $\Gamma$ took $\ordernoinput{2^{\order{k \log{k}}}}$-time to construct. Note that $M \subseteq S$ for all $\pdci \in \Gamma$ by construction. To complete our result, we must prove property (ii) for $\Gamma$ -- given a \domcol{} $\rchi'$ of $G$, there exists an extension $\rchi$ of a $\pdci \in \Gamma$ with $|\rchi| \leq |\rchi'|$.

Consider a \domcol $\rchi'$ of $G$. Let $\P = \{P_1, P_2 \dots P_\kappa\}$ be the partition of $M$ such that for two vertices $u$ and $v$, $\rchi'(u) = \rchi'(v)$ if, and only if, $u,v \in P_i$ for some $P_i \in \P$. We can, without loss in generality, assume that the vertices in $P_i$ are colored $c_i$. Let $\rchi^M \in \Gamma'$ be this partial coloring of $G$. Let $P'_1$ denote the set of colors used in both $M$ and $\Q$ by $\rchi'$. Let $P'_0$ be the rest of the colors. Let $\Q'$ be the set of cliques that do not dominate a color class in $P'_0$. By \Cref{Twin Cover Unique Colors Required}, each of these cliques must have a vertex which gets an unique color. Collect one such vertex from each $Q \in \Q'$. By \Cref{switching twins domcol}, we can assume that the set of vertices that our procedure colors with unique colors is exactly this set. Let $C_{\Q'}$ denote the set of colors used here. Let $M_0 \subseteq M$ denote the set of vertices which dominate a color class in $P'_0 \cup C_\Q'$. 

Since $\rchi'$ is a \domcol of $G$, every vertex in $G$, and in particular $M \setminus M_0$, dominates a color class. For a $v \in M \setminus M_0$, let $\delta'(v)$ be an arbitrary color class that $v$ dominates. Up to renaming, $\im{\delta'} \subseteq \widetilde{C}$. Consider the $\pdci \in \Gamma$ that the algorithm obtains by considering $\rchi^M \in \Gamma'$, $\P' = \{P'_0,P'_1\}$, and $\delta' \in \Delta'$. Consider a clique $Q \in \Q$ and let $c$ be a color that $\rchi'$ uses in $Q$. By construction, $c$ can be (or already is) used to color a vertex in $Q$ in an extension of $\pdci$. Since $\im{\pdci} \subseteq \im{\rchi'}$, this implies that there exists an extension $\rchi$ of $\pdci$ with $|\rchi| \leq |\rchi'|$. 

Therefore, by \Cref{dctcfpt}, there exists an algorithm to solve \DC in $\ordernoinput{2^{\order{k \log{k}}}}$-time.


\subsection{\CDC Parameterized by Twin Cover Size}\label{CD-TC}

We say that a clique $Q \in \Q$ is \textit{isolated} if $\closedneighbour{G}{Q_i} \cap M = \emptyset$. We have the following simple observations regarding isolated cliques.

\begin{observation}\label{isocliques}
	Consider an isolated clique $Q \in \Q$. Then, in any \classdcol of $G$, vertices in $Q$ must get unique colors.
\end{observation}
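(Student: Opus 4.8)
The plan is to exploit the fact that an isolated clique is in fact a full connected component of $G$, so that the only vertices capable of dominating a color class meeting $Q$ must lie inside $Q$ itself. First I would unpack the definition: since $Q \in \Q$ is isolated, $\closedneighbour{G}{Q} \cap M = \emptyset$, and because $Q$ is a connected component of the cluster graph $G-M$, the absence of edges to $M$ promotes it to a connected component of all of $G$. Consequently, for every $w \in Q$ we have $\closedneighbour{G}{w} = Q$, the whole clique and nothing more. This step relies on \Cref{Twin Cover Clique Neighbourhood} to guarantee that $\closedneighbour{G}{w}$ is the same set for every $w \in Q$, so that the shorthand $\closedneighbour{G}{Q}$ is well defined and equals $Q$ once its intersection with $M$ is empty.

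Next I would fix an arbitrary \classdcol $\rchi$ of $G$ together with an arbitrary $w \in Q$, and set $c = \rchi(w)$. By the defining property of a \classdcol, there is a vertex $v \in V(G)$ that dominates the color class $c$, that is, $\rchi^{-1}(c) \subseteq \closedneighbour{G}{v}$. Since $w \in \rchi^{-1}(c)$, this forces $w \in \closedneighbour{G}{v}$, equivalently $v \in \closedneighbour{G}{w} = Q$. Thus the dominator $v$ itself lies in $Q$, whence $\closedneighbour{G}{v} = Q$ as well, and we conclude $\rchi^{-1}(c) \subseteq Q$.

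Finally I would invoke properness: as $Q$ is a clique, a proper coloring assigns pairwise distinct colors to its vertices, so $\rchi^{-1}(c) \cap Q$ contains at most one vertex. Combined with $\rchi^{-1}(c) \subseteq Q$, this yields $\rchi^{-1}(c) = \{w\}$, i.e. $w$ is the unique vertex of color $c$. Since $w$ was arbitrary, every vertex of $Q$ receives a color used nowhere else, which is exactly the claim. I do not anticipate a genuine obstacle here; the only point requiring care is the reduction from ``isolated in $G-M$'' to ``a full component of $G$'', which is precisely where the twin-cover structure is used.
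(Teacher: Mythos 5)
Your proof is correct and matches the intended argument: the paper states this as a bare observation with no written proof, and your reasoning --- an isolated clique is a full connected component of $G$, so any vertex dominating a color class that meets $Q$ must itself lie in $Q$, which confines that color class to the clique, where properness forces it to be a singleton --- is exactly the elided one-liner. Your care in noting that \Cref{Twin Cover Clique Neighbourhood} is what makes $\closedneighbour{G}{Q}$ well defined is a correct reading of the setup, and the conclusion $\rchi^{-1}(c)=\{w\}$ gives uniqueness in the strong sense (the color appears nowhere else in $G$) needed for \Cref{reduction rule iso cliques}.
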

\begin{mycorollary}\label{reduction rule iso cliques}
	Let $\Q' = \{Q \in \Q \mid Q \text{ is isolated}\}$ and $\hat{\ell} = \sum_{Q \in \Q'}|Q|$. Then, $(G, \ell) \equiv (G - \Q', \ell - \hat{\ell})$.
\end{mycorollary}
We therefore have a reduction rule -- given a \CDC instance, we can consider the equivalent instance which has no isolated cliques. \textit{We assume that graphs in the rest of this subsection do not have any isolated cliques}. We make a few small observations before defining a \pclassdcol. Note that \Cref{switching twins} is similar to \Cref{switching twins domcol}.

\begin{observation}\label{dominated by M vertices}
	Let $\rchi$ be an arbitrary \classdcol of $G$ and let $c \in \im{\rchi}$ denote a color used by $\rchi$ outside $M$. Then, $c$ is dominated by a vertex in $M$.
\end{observation}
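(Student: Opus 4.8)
The plan is to fix a color $c$ that $\rchi$ places on some vertex $w \notin M$, say $w \in Q_i$ for a clique $Q_i \in \Q$ (which is necessarily non-isolated, since isolated cliques have been removed via \Cref{reduction rule iso cliques}), and to exhibit a vertex of $M$ dominating the entire color class $\rchi^{-1}(c)$. Because $\rchi$ is a \classdcol, some $v \in V(G)$ dominates $c$, i.e. $\rchi^{-1}(c) \subseteq \closedneighbour{G}{v}$; in particular $w \in \closedneighbour{G}{v}$. I would split on the location of $v$: if $v \in M$ we are immediately done, so the work is in the case $v \notin M$.

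The first step is to pin down which clique $v$ lies in. If $v \in Q_j$ with $j \neq i$, then $w \in \closedneighbour{G}{v}$ together with $w \neq v$ forces $w$ adjacent to $v$ in $G$; but $w,v \notin M$, so by the definition of a twin cover such an edge makes $w$ and $v$ true twins, and by \Cref{Twin Cover Clique Neighbourhood} true twins outside $M$ share a clique of $G-M$, contradicting $i \neq j$. Hence $v \in Q_i$, and therefore $\rchi^{-1}(c) \subseteq \closedneighbour{G}{v} = \closedneighbour{G}{Q_i} = \closedneighbour{G}{w}$, where the equalities use that all vertices of $Q_i$ have the common closed neighbourhood $\closedneighbour{G}{Q_i}$.

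The second step, which I expect to be the crux, is to collapse the color class to a single vertex. Any $z \in \rchi^{-1}(c)$ with $z \neq w$ lies in $\closedneighbour{G}{w}$ but differs from $w$, hence belongs to $\openneighbour{G}{w}$ and is adjacent to $w$; since $z$ and $w$ both carry color $c$, this contradicts the fact that the color class of a proper coloring is independent. Thus $\rchi^{-1}(c) = \{w\}$. Finally, non-isolation of $Q_i$ gives a vertex $u \in \closedneighbour{G}{Q_i} \cap M$, and $u$ is adjacent to every vertex of $Q_i$, in particular to $w$, so $\rchi^{-1}(c) = \{w\} \subseteq \closedneighbour{G}{u}$ and the $M$-vertex $u$ dominates $c$, as required.

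The main obstacle is precisely the second step: it is tempting to stop once a dominator is located inside $Q_i$, but a clique vertex dominating $c$ does not by itself yield an $M$-vertex dominating $c$. One must first use independence of the color class against the shared-neighbourhood property of twins to force $\rchi^{-1}(c)$ to be a singleton, after which the guaranteed $M$-neighbour of the non-isolated clique trivially dominates it. The ruling-out of foreign cliques in the first step, by contrast, is a direct consequence of the twin-cover structure.
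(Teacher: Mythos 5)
Your proof is correct, but it follows a different decomposition from the paper's. The paper cases on how the color class of $c$ is spread across the graph: if $c$ appears in two cliques of $\Q$, no vertex outside $M$ can see both cliques, so the dominator guaranteed by the \classdcol must lie in $M$; if $c$ appears in exactly one clique and nowhere else, the class is a singleton inside that clique (properness within a clique) and non-isolation supplies an $M$-dominator; and if $c$ appears both in $M$ and in one clique, a hypothetical dominator in $V(\Q)$ would have to lie in that clique and be adjacent to the $M$-vertex of the class, which by the shared closed neighbourhood of clique vertices (\Cref{Twin Cover Clique Neighbourhood}) forces an edge inside the color class, a contradiction. You instead case on the location of the witness dominator $v$ and, when $v \notin M$, prove the stronger intermediate fact that $\rchi^{-1}(c)$ collapses to the singleton $\{w\}$, after which non-isolation finishes uniformly; this folds the paper's three cases into a single argument (your singleton step simultaneously disposes of the ``$c$ also used in $M$'' configuration, and your step 1 subsumes the two-cliques case). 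Both routes rest on exactly the same two ingredients --- the common closed neighbourhood $\closedneighbour{G}{Q_i}$ and the absence of isolated cliques --- so neither buys extra generality, though yours isolates a reusable claim: any color dominated only from inside a clique is used exactly once. One small repair: in ruling out $v \in Q_j$ with $j \neq i$, you invoke \Cref{Twin Cover Clique Neighbourhood} in the wrong direction (it says vertices in a common clique are true twins, not that true twins outside $M$ share a clique); you do not need twins at all here, since the edge $(w,v)$ has both endpoints outside $M$, hence survives in $G-M$ and places $w$ and $v$ in the same component of the cluster graph, which already contradicts $j \neq i$.
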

\begin{proof}
	Clearly, if $c$ is used in two cliques in $\Q$, it can only be dominated by a vertex in $M$. If $c$ is used in exactly one clique, then there must exist a vertex in $M$ which dominates it -- $G$ has no isolated cliques! Now, assume that $c$ is used both in $M$ and in exactly one clique $Q \in \Q$. Let $\rchi(u) = \rchi(v) = c$ where $u \in Q$ and $v \in M$. Assume that a $w \in V(\Q)$ dominates $c$. Then $w \in Q$ and $w \in \openneighbour{G}{v}$. However, this implies that $u \in \openneighbour{G}{v}$ (by \Cref{Twin Cover Clique Neighbourhood}), a contradiction. Thus, only a vertex in $M$ can dominate $c$. 
\end{proof}
\begin{observation}\label{switching twins}
	Let $\rchi$ be an arbitrary \classdcol of $G$ and let $u,v \in V(G)$ such that $\closedneighbour{G}{u} = \closedneighbour{G}{v}$. Then, $\rchi'$, the mapping which switches the colors assigned to $u$ and $v$ by $\rchi$ and retains the coloring of the rest of the vertices, is a \classdcol of $G$ with $|\rchi| = |\rchi'|$.
\end{observation}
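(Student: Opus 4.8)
The plan is to recognize the color swap as precomposition with a graph automorphism and then argue that the \classdcol{} property is automorphism-invariant. If $u = v$ the map $\rchi'$ equals $\rchi$ and there is nothing to prove, so assume $u \neq v$. Since $\closedneighbour{G}{u} = \closedneighbour{G}{v}$, each of $u,v$ lies in the other's closed neighborhood, so $u$ and $v$ are adjacent true twins. Consequently the transposition $\phi$ that exchanges $u$ and $v$ and fixes every other vertex is an automorphism of $G$: it preserves adjacency precisely because $u$ and $v$ are true twins. The key structural observation I would record first is that $\rchi' = \rchi \circ \phi$, since $\phi$ is an involution and swapping the two colors is exactly relabeling each vertex by its $\phi$-image before applying $\rchi$.

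Next I would dispatch the two easy claims. Properness of $\rchi'$ follows because for any edge $(a,b)$ the pair $(\phi(a),\phi(b))$ is again an edge, so $\rchi'(a) = \rchi(\phi(a)) \neq \rchi(\phi(b)) = \rchi'(b)$; in particular the edge $(u,v)$ stays bichromatic because $\rchi$ is proper and hence $\rchi(u) \neq \rchi(v)$. Since $\phi$ is a bijection of $V(G)$, we have $\im{\rchi'} = \im{\rchi}$, which gives $|\rchi'| = |\rchi|$ immediately.

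The substantive part is verifying the domination condition for every color of $\rchi'$. Fix $c \in \im{\rchi'} = \im{\rchi}$. Because $\rchi' = \rchi \circ \phi$ and $\phi$ is an involution, the color class of $c$ under $\rchi'$ is exactly $\phi(\rchi^{-1}(c))$. As $\rchi$ is a \classdcol, there is a vertex $w$ dominating $\rchi^{-1}(c)$, i.e. $\rchi^{-1}(c) \subseteq \closedneighbour{G}{w}$. Applying the automorphism $\phi$, and using that automorphisms send closed neighborhoods to closed neighborhoods ($\phi(\closedneighbour{G}{w}) = \closedneighbour{G}{\phi(w)}$), I obtain $\phi(\rchi^{-1}(c)) \subseteq \closedneighbour{G}{\phi(w)}$, so $\phi(w)$ dominates the color class $c$ with respect to $\rchi'$. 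This establishes that $\rchi'$ is a \classdcol and completes the argument.

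The main obstacle to watch for is the book-keeping for the two color classes that genuinely change, namely those of $\rchi(u)$ and $\rchi(v)$, and especially the case where the original dominator $w$ happens to be $u$ or $v$ itself; a naive direct case analysis would have to check each of these subcases by hand. The automorphism viewpoint is precisely what lets me sidestep this, since the implication $\rchi^{-1}(c) \subseteq \closedneighbour{G}{w} \Rightarrow \phi(\rchi^{-1}(c)) \subseteq \closedneighbour{G}{\phi(w)}$ holds uniformly for all colors. I would note that this is the \classdcol{} analogue of \Cref{switching twins domcol}, the only difference being that domination is required for each color class rather than for each vertex; the same automorphism argument carries over verbatim there, taking $\phi(v')$ as the preimage witness for each vertex $v'$.
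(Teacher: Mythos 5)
Your proof is correct: the transposition of the adjacent true twins $u,v$ is indeed an automorphism $\phi$ of $G$ satisfying $\rchi' = \rchi \circ \phi$, and properness, $|\rchi'| = |\rchi|$, and the domination requirement for each color class all transport along $\phi$ exactly as you argue, using $\phi(\closedneighbour{G}{w}) = \closedneighbour{G}{\phi(w)}$. The paper states \Cref{switching twins} without any proof, treating the true-twin symmetry as immediate; your automorphism packaging is the natural formalization of that implicit argument, with the added benefit (which you rightly flag) of uniformly dispatching the corner cases where the dominating vertex of a swapped color class is $u$ or $v$ itself.
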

\begin{mydefinition}
	Let $S \subseteq V(G)$. A \textit{\pclassdcol} is a mapping $\pddci \colon S \to C$ with the following properties: (i) it is a partial coloring of $G$, and (ii) every $c \in \im{\pddci}$ is dominated by some vertex in $V(G)$.
\end{mydefinition}
\begin{mydefinition}
	 A \classdcol $\rchi$ is an \textit{extension} of a \pclassdcol $\pddci$ if it agrees with $\pddci$ on $S$. An extension $\rchi$ of $\pddci$ is \textit{disjoint} if no vertex outside $S$ is colored using a $c \in \im{\pddci}$.
\end{mydefinition}
We will prove the following theorem, which establishes that an optimal disjoint extension of a \pclassdcol can be found quickly.
\begin{theorem}\label{disjoint extension in FPT time}
	Let $\pddci$ be a \pclassdcol of $G$ where $M \subseteq S$. Then, given an $\ell \in \N$, we can decide in $\ordernoinput{2^{\order{k\log{k}}}}$-time if there exists a disjoint extension $\rchi$ of $\pddci$ with $|\rchi| \leq \ell$.
\end{theorem}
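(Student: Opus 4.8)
The plan is to reduce the disjoint extension problem to an integer program in $k$ variables and then invoke Lenstra-type machinery. Since $M \subseteq S$, every uncolored vertex lies in some clique of $\Q$, and in a disjoint extension it must receive a \emph{fresh} color, i.e. a color outside $\im{\pddci}$. The first step is to observe that the old color classes are not touched: as the extension is disjoint, each class $\pddci^{-1}(c)$ with $c \in \im{\pddci}$ stays exactly as it is and hence remains dominated (since $\pddci$ is a \pclassdcol, and domination of a class depends only on that class and a dominator's neighborhood). Thus, writing $\kappa = |\im{\pddci}|$, the instance is a \yes instance if and only if the uncolored clique-vertices can be properly colored with at most $\ell - \kappa$ fresh colors so that every fresh color class is dominated.

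Next I would pin down when a fresh color class can be dominated, using \Cref{Twin Cover Clique Neighbourhood}. Because distinct cliques of $\Q$ are non-adjacent, a fresh color placed on vertices of two or more cliques can only be dominated by a vertex of $M$ adjacent to all of those cliques, whereas a fresh color confined to a single clique is automatically dominated by any of that clique's vertices. For $w \in M$ write $\mathcal{C}_w = \{Q \in \Q : w \in \closedneighbour{G}{Q}\}$. It is therefore without loss of generality to give each fresh color a \emph{home}: either a single clique (a \emph{private} color) or a vertex $w \in M$ that dominates it, in which case the color is used only inside cliques of $\mathcal{C}_w$. Crucially, a single color homed at $w$ may be used simultaneously in every clique of $\mathcal{C}_w$ (one vertex per clique) and still remains dominated by $w$.

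This yields the key reduction. For $Q \in \Q$ let $u_Q = |V(Q) \setminus S|$ be its number of uncolored vertices and $\tau(Q) = \closedneighbour{G}{Q} \cap M$. For a vector $f = (f_w)_{w \in M} \in \Z_{\ge 0}^{M}$, interpreting $f_w$ as the number of fresh colors homed at $w$, a clique $Q$ can absorb up to $\sum_{w \in \tau(Q)} f_w$ of these shared colors, so it still needs $\max\bigl(0,\ u_Q - \sum_{w \in \tau(Q)} f_w\bigr)$ private colors. I would prove, in both directions, that the minimum number of fresh colors over all valid disjoint extensions equals $\min_{f \in \Z_{\ge 0}^{M}} \Phi(f)$, where
\[
\Phi(f) = \sum_{w \in M} f_w + \sum_{Q \in \Q} \max\Bigl(0,\ u_Q - \sum_{w \in \tau(Q)} f_w\Bigr);
\]
the forward direction by homing each fresh color of a given extension at a single dominator, the reverse by realizing a minimizer $f$ as an explicit coloring (distributing the $f_w$ shared colors over the cliques of $\mathcal{C}_w$ and filling the deficit with private colors). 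Hence the instance is a \yes instance if and only if $\min_f \Phi(f) \le \ell - \kappa$.

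Finally I would solve this minimization. The objective $\Phi$ is convex: it is the sum of a linear function and, for each clique, the composition of the convex map $t \mapsto \max(0, u_Q - t)$ with the linear form $\sum_{w \in \tau(Q)} f_w$. Thus $\{f \ge 0 : \Phi(f) \le \ell - \kappa\}$ is a convex set in $\R^{M}$ admitting a polynomial-time separation oracle (given $f$ we evaluate $\Phi(f)$ and a subgradient directly), and deciding whether it contains an integer point is an instance of integer programming in the fixed dimension $k = |M|$. By Lenstra's algorithm and its extension to convex integer feasibility, this is decidable in $k^{\order{k}} \cdot n^{\order{1}} = \ordernoinput{2^{\order{k \log k}}}$ time, which is the claimed bound. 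I expect the main obstacle to be the equivalence with $\Phi$: arguing that every valid extension can be normalized so that each fresh color is homed at a single dominator without increasing the color count, and conversely that a minimizer of $\Phi$ can be turned into a genuine proper, dominated coloring of all uncolored clique-vertices.
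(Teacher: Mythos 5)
Your proposal is correct and, at its core, takes the paper's route: the paper likewise reduces the disjoint-extension question to an integer program in $k=|M|$ variables, where the $j$-th variable counts how many fresh color classes $v_j \in M$ is charged with dominating, with one covering constraint per clique ($A\mathbf{x} \geq \mathbf{b}$, where $A(i,j)=1$ iff $v_j \in \closedneighbour{G}{Q_i}$ and $b_i$ is the number of uncolored vertices of $Q_i$), minimizes $\mathbf{1}_k \cdot \mathbf{x}$, and solves the program by Lenstra--Kannan in $\ordernoinput{2^{\order{k\log{k}}}}$-time; your two equivalence directions mirror the paper's proof of \Cref{ILP equivalence} almost step for step (charging each fresh class to a dominator in $M$ using twin-cover uniformity, and conversely distributing the $f_w$ colors over the cliques of $\mathcal{C}_w$). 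The one genuine divergence is your explicit \emph{private-color} term: the paper has no $\max(0,\cdot)$ deficit term because this subsection carries the standing assumption that isolated cliques have been removed (\Cref{reduction rule iso cliques}), whence \Cref{dominated by M vertices} guarantees that \emph{every} fresh color -- including a singleton class inside a single clique -- has a dominator in $M$; each private color can thus be re-homed at an arbitrary $w \in \tau(Q) \neq \emptyset$, which collapses your piecewise-linear objective $\Phi$ to the paper's purely linear ILP (formally, $\min_f \Phi(f)$ equals the ILP optimum: one direction takes $f=\mathbf{x}$, the other re-homes each deficit color). What your formulation buys is robustness: it remains correct even without the no-isolated-cliques reduction, since a private class is dominated by its own vertex. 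What it costs is machinery: you must invoke fixed-dimension \emph{convex} integer optimization with a separation oracle, where the paper's formulation needs only plain ILP feasibility/optimization as cited from \cite{Cygan2015}; both give the claimed $\ordernoinput{2^{\order{k\log{k}}}}$ bound.
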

For each $Q_i \in \Q$, let $b_i$ denote the number of vertices in $Q_i$ that are uncolored by $\pddci$. Let $\mathbf{b} = (b_1, b_2 \dots b_q)^T$ and let $\{v_1, v_2 \dots v_k\}$ be an arbitrary ordering of $M$. Consider the matrix $A \in M(q,k)$ with $A(i,j) = 1$ if $v_j \in \closedneighbour{G}{Q_i}$ and is 0 otherwise (similar to an adjacency matrix). Let $\mathbf{1}_k$ denote the vector in $\mathbb{R}^k$ containing only 1s. Consider the following Integer Linear Program (ILP): 
\[
\text{Minimize }\{\mathbf{1}_k \cdot \mathbf{x} \mid A\mathbf{x} \geq \mathbf{b},\ \mathbf{x} \in \Z^k,\ \mathbf{x} \geq 0 \}
\]
We refer the interested reader to Chapter 6 of \cite{Cygan2015} for a discussion on the use of ILPs in the world of parameterized algorithms. We describe the intuition behind designing this ILP. The $i$\textsuperscript{th} component of $\mathbf{x}$, denoted by $x_i$, represents the number of color classes outside $\im{\pddci}$ that $v_i \in M$ is required to dominate. Consider an arbitrary disjoint extension $\rchi$ of $\pddci$. By \Cref{dominated by M vertices}, every $c \in \im{\rchi} \setminus \im{\pddci}$ is dominated by some vertex in $M$. Since no two vertices in a clique can get the same color, for each $Q_i \in \Q$ the total sum of colors that the vertices in $\closedneighbour{G}{Q_i} \cap M$ dominate must be at least $b_i$ (the number of uncolored vertices in $Q_i$). This is exactly the $i$\textsuperscript{th} constraint in the linear program. We now formalize this notion below.

Note that the above program is feasible since $n \cdot \mathbf{1}$ is a feasible solution. Let $\ell^*$ denote the optimal value of the above program and $\ell' = |\pddci|$. 
\begin{restatable}{mylemma}{DisjointExtILPEq}\label{ILP equivalence}
	There exists a disjoint extension $\rchi$ of $\pddci$ with $|\rchi| \leq \ell$ if, and only if, $\ell^*  \leq \ell - \ell'$. 
\end{restatable}
\begin{proof}
	Consider a disjoint extension $\rchi$ of $\pddci$ with $|\rchi| \leq \ell$. By \Cref{dominated by M vertices}, every $c \in \im{\rchi} \setminus \im{\pddci}$ is dominated by some vertex in $M$. Make this choice arbitrarily and let $v_c$ denote this vertex. Let $x_j$ denote the number of colors that $v_j$ was chosen to dominate -- i.e., $x_j = |\{c \mid v_c = v_j\}|$. Since vertices in a clique must get different colors, for each clique $Q_i \in \Q$, $\sum_{j \mid v_j \in \closedneighbour{G}{Q_i}\cap M}x_j \geq b_i$. Therefore, $A\mathbf{x} \geq \mathbf{b}$. Moreover, as $\mathbf{x}$ satisfies the non-negativity and the integral constraints, $\mathbf{x}$ is a feasible solution of ILP. Now,
	\[
	\mathbf{1}_k \cdot \mathbf{x} = \sum_{i=1}^{k}x_i = |\im{\rchi} \setminus \im{\pddci}| = |\im{\rchi}| - |\im{\pddci}| \leq \ell - \ell' 
	\]  
	Thus, $\ell^* \leq 	\mathbf{1}_k \cdot \mathbf{x}  \leq \ell - \ell'$.
	
	Now, assume that $\ell^* \leq \ell - \ell'$. Let $\mathbf{x}$ be a feasible solution of the ILP such that $\mathbf{1}_k \cdot \mathbf{x}  \leq \ell - \ell'$. For $1 \leq j \leq k$, let $C_j = \{c^1_j, c^2_j \dots c^{x_j}_j\}$ be a set of colors. Color the uncolored vertices of $Q_i \in \Q$ by an arbitrary set of $b_i$-many colors from $\cup_{j \mid v_j \in \closedneighbour{G}{Q_i} \cap M} C_j$. Note that this is indeed possible -- $\sum_{j \mid v_j \in \closedneighbour{G}{Q_i}\cap M}x_j \geq b_i$ by the $i$\textsuperscript{th} constraint of the ILP. Let this coloring be $\rchi$. $\rchi$ is clearly a proper coloring of $G$. Moreover, a color in $C_j$ is dominated by $v_j$ in $\rchi$. Thus, $\rchi$ is a disjoint extension of $\pddci$. Finally, 
	\[
	|\rchi| \leq |\pddci| + \sum^k_{j=1}|C_j| \leq \ell' + (\ell - \ell') \leq \ell
	\]
	This completes the proof of this lemma.
\end{proof}
Since our ILP has $k$ variables, it can be solved in $\ordernoinput{2^{\order{k \log{k}}}}$-time \cite{Cygan2015}. Thus, \Cref{disjoint extension in FPT time} follows as a corollary of \Cref{ILP equivalence}.

Consider $\Gamma$, a collection of \pclassdcol{}s, with the following properties: 
\begin{enumerate}[(i)]
	\item For all $\pddci \in \Gamma$, $M \subseteq S$.
	\item Given a \classdcol{} $\rchi'$ of $G$, there exists an disjoint extension $\rchi$ of a $\pddci \in \Gamma$ with $|\rchi| \leq |\rchi'|$.
	\item $|\Gamma| = \order{2^{\order{k \log{k}}}}$ and $\Gamma$ can be constructed in $\ordernoinput{2^{\order{k \log{k}}}}$-time.
\end{enumerate}  
\begin{theorem}\label{tcfpt}
	Assume that a $\Gamma$ satisfying (i), (ii), and (iii) exists. Then, there exists an algorithm to solve \CDC in $\ordernoinput{2^{\order{k\log{k}}}}$-time, where $k$ is the size of a twin cover of the input graph. 
\end{theorem}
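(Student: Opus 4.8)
The plan is to mirror the structure of the proof of \Cref{dctcfpt} exactly, since the statement of \Cref{tcfpt} is the \CDC analogue. Given a collection $\Gamma$ of \pclassdcol{}s satisfying (i), (ii), and (iii), the algorithm proceeds as follows. First, construct $\Gamma$ in $\ordernoinput{2^{\order{k \log k}}}$-time, which is possible by property (iii). Then, for each $\pddci \in \Gamma$, invoke \Cref{disjoint extension in FPT time} to decide in $\ordernoinput{2^{\order{k \log k}}}$-time whether there exists a disjoint extension $\rchi$ of $\pddci$ with $|\rchi| \leq \ell$. Since $|\Gamma| = \order{2^{\order{k \log k}}}$ by property (iii), the total running time of iterating over all members of $\Gamma$ and testing each remains $\ordernoinput{2^{\order{k \log k}}}$-time.

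The correctness argument rests entirely on property (ii). First I would observe that if the algorithm reports \yes, then some $\pddci \in \Gamma$ admits a disjoint extension $\rchi$ with $|\rchi| \leq \ell$; this $\rchi$ is by definition a \classdcol of $G$, so $(G,\ell)$ is genuinely a \yes instance. For the converse, suppose $(G,\ell)$ is a \yes instance, and let $\rchi'$ be a \classdcol of $G$ with $|\rchi'| \leq \ell$. By property (ii), there exists a disjoint extension $\rchi$ of some $\pddci \in \Gamma$ with $|\rchi| \leq |\rchi'| \leq \ell$. Hence \Cref{disjoint extension in FPT time}, applied to this particular $\pddci$, returns \yes, and so does the overall algorithm. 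This establishes the equivalence between the algorithm's output and the instance being a \yes instance.

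I do not expect any real obstacle in this proof, as it is a routine assembly of the two ingredients already in hand: the \fpt extension subroutine (\Cref{disjoint extension in FPT time}) and the hypothesized family $\Gamma$ with its three properties. The only point requiring a little care is the bookkeeping on the running time — one must confirm that multiplying the number $\order{2^{\order{k \log k}}}$ of members of $\Gamma$ by the per-member cost $\ordernoinput{2^{\order{k \log k}}}$ of \Cref{disjoint extension in FPT time}, plus the construction cost of $\Gamma$, still collapses to $\ordernoinput{2^{\order{k \log k}}}$; this holds because the product of two factors of the form $2^{\order{k \log k}}$ is again $2^{\order{k \log k}}$. The genuinely substantive work, namely exhibiting a $\Gamma$ that actually satisfies (i), (ii), and (iii), is deferred to the subsequent construction (as was done for \DC following \Cref{dctcfpt}), and is not part of this theorem's proof.

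Concretely, the proof would read as follows.

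\begin{proof}
	Let $\Gamma$ be a collection of \pclassdcol{}s satisfying (i), (ii), and (iii). By property (iii), $\Gamma$ can be constructed in $\ordernoinput{2^{\order{k \log k}}}$-time. By \Cref{disjoint extension in FPT time}, for each $\pddci \in \Gamma$ we can decide in $\ordernoinput{2^{\order{k \log k}}}$-time whether there exists a disjoint extension $\rchi$ of $\pddci$ with $|\rchi| \leq \ell$. Since $|\Gamma| = \order{2^{\order{k \log k}}}$ by property (iii), iterating this test over all members of $\Gamma$ takes $\ordernoinput{2^{\order{k \log k}}}$-time in total, as the product of two $2^{\order{k \log k}}$ factors is again $2^{\order{k \log k}}$.

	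It remains to argue correctness. If the test succeeds for some $\pddci \in \Gamma$, the resulting disjoint extension $\rchi$ is a \classdcol of $G$ with $|\rchi| \leq \ell$, so $(G,\ell)$ is a \yes instance. Conversely, suppose $(G,\ell)$ is a \yes instance and let $\rchi'$ be a \classdcol of $G$ with $|\rchi'| \leq \ell$. By property (ii), there exists a disjoint extension $\rchi$ of some $\pddci \in \Gamma$ with $|\rchi| \leq |\rchi'| \leq \ell$, so the test of \Cref{disjoint extension in FPT time} succeeds for that $\pddci$. Hence the algorithm correctly decides \CDC, and it does so in $\ordernoinput{2^{\order{k \log k}}}$-time.
\end{proof}
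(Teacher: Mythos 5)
Your proposal is correct and follows essentially the same route as the paper's own proof: construct $\Gamma$ via property (iii), test each member with the disjoint-extension subroutine of \Cref{disjoint extension in FPT time}, multiply the two $2^{\order{k\log{k}}}$ factors, and invoke property (ii) for correctness. Your version is merely a bit more explicit about the two directions of the correctness equivalence, which the paper compresses into a single sentence.
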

\begin{proof}
	Consider a collection of \pclassdcol{}s $\Gamma$ with the three properties elucidated above. By \Cref{disjoint extension in FPT time}, for each $\pddci \in \Gamma$ we can decide in $\ordernoinput{2^{\order{k\log{k}}}}$-time if there exists a disjoint extension $\rchi$ of $\pddci$ with $|\rchi| \leq \ell$. By the third property of $\Gamma$, we can decide in $\ordernoinput{2^{\order{k\log{k}}}}$-time if there exists a disjoint extension $\rchi$ of one of the \pclassdcol{}s in $\Gamma$. By its second property, $(G,\ell)$ is a \yes instance if, and only if, such a disjoint extension exists. 
\end{proof}
All that is left to do is to prove that such a collection exists. We do so below.
\subsubsection*{Constructing a $\Gamma$}
We construct this collection of \pclassdcol{}s in two steps. Let $\mathscr{P}$ denote the set of partitions of $M$. For each partition $\P = \{P_1, P_2 \dots P_\kappa\} \in \mathscr{P}$ ($\kappa$ is at most $k$), check if each part $P_i \in \P$ is independent and if there exists a $v \in V(G)$ which dominates $P_i$. If so, color all vertices in each $P_i$ by $c_i$. If not, reject this partition of $M$. Note that we have now obtained a collection $\Gamma'$ of $\order{2^{k \log{k}}}$-many \pclassdcol{}s of $G$. Moreover, constructing $\Gamma'$ required $\ordernoinput{2^{k \log{k}}}$-time. Note that $S = M$ for all $\pddci \in \Gamma'$ by construction. 

Consider a $\rchi^M \in \Gamma'$. Let $\mathscr{P}'$ be the set of ``partitions" of $\im{\rchi^M}$ into two sets. Partitions is in quotations here since we allow for $\im{\rchi^M}$ to be split into $\{\im{\rchi^M},\emptyset\}$. Consider a $\P' = \{P'_0,P'_1\} \in \mathscr{P}'$. We interpret colors in $P'_0$ as those that are only used in $M$ and those in $P'_1$ as those that can also be used in $G - M$. So far, our construction has been similar to \Cref{DCTCsect}. Define 
$
\Delta = \{\delta \colon P'_1 \to M \mid \delta(c) \text{ dominates } c \text{ for all }c \in P'_1\}
$.

Note that $|\Delta| = \order{2^{k \log{k}}}$. Consider a $\delta \in \Delta$. For each $c \in P'_1$, let $\Q_c$ denote the collection of cliques which are adjacent to $\delta(c)$ but not adjacent to vertices colored $c$ (refer \Cref{cdcoltwin1}). That is, 
$
\Q_c = \{Q \in \Q \mid \delta(c) \in \closedneighbour{G}{Q}, \closedneighbour{G}{Q} \cap {\rchi^M}^{-1}(c) = \emptyset\}
$. 

\begin{figure}[ht]
	\centering
	\includegraphics[width=0.67\textwidth]{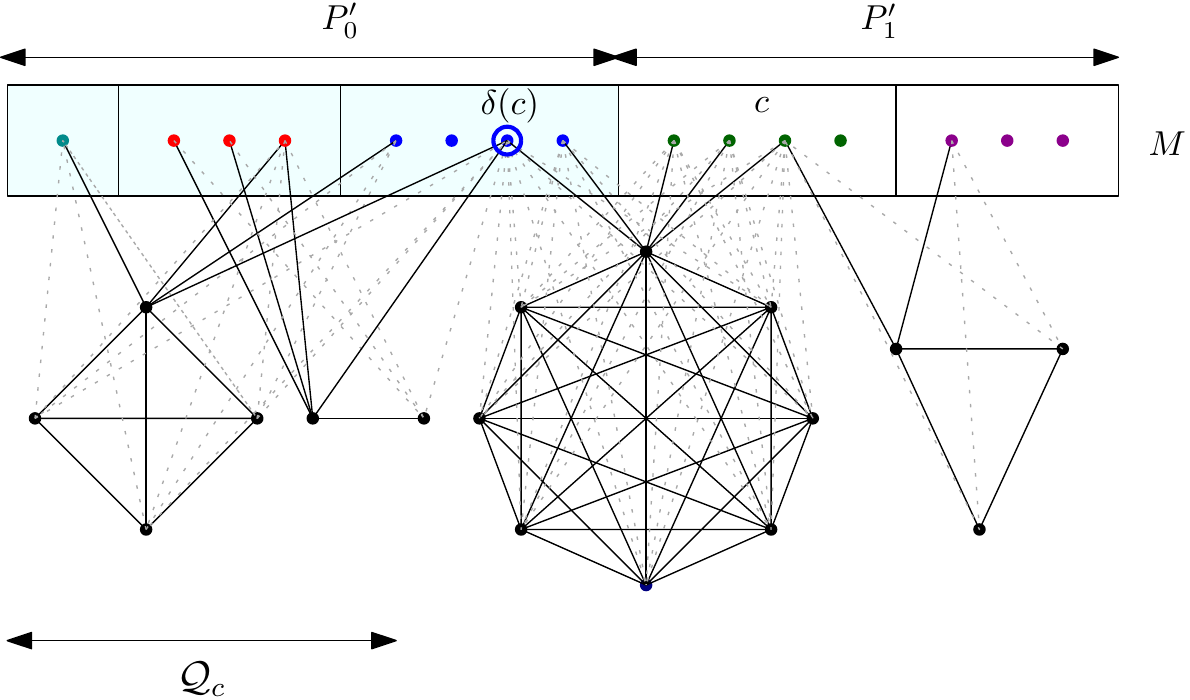}
	\caption{Given a $\rchi^M \in \Gamma'$, a $\{P'_0,P'_1\} \in \P'$, $\delta \in \Delta$, and a $c \in P'_1$ (the green vertices), we define $\Q_c$ to be the collection of cliques which are adjacent to $\delta(c)$ but not a vertex colored $c$.}
	\label{cdcoltwin1}
\end{figure}  

Now, for each $c \in P'_1$, color an arbitrary uncolored vertex (if such a vertex exists) of each $Q \in \Q_c$. Note that this indeed produces a \pclassdcol of $G$. Let $\Gamma$ be the collection of \pclassdcol{}s obtained by varying over $\pddci \in \Gamma'$, $\P' \in \mathscr{P}'$, and $\delta \in \Delta$.

\subsubsection*{$\Gamma$ Satisfies the Three Properties}

Since $|\mathscr{P}'| \leq 2^k$ and $|\Delta| \leq 2^{k \log{k}}$, $|\Gamma| \leq |\Gamma'| \cdot 2^k \cdot 2^{k \log{k}}$. Thus, $|\Gamma| \in \order{2^{\order{k \log{k}}}}$. Moreover, $\Gamma$ took $\ordernoinput{2^{\order{k \log{k}}}}$-time to construct. Note that $M \subseteq S$ for all $\pddci \in \Gamma$ by construction. To complete our result, we must prove property (ii) for $\Gamma$ -- given a \classdcol{} $\rchi'$ of $G$, there exists an disjoint extension $\rchi$ of a $\pddci \in \Gamma$ with $|\rchi| \leq |\rchi'|$. 

Consider a \classdcol $\rchi'$ of $G$. Let $\P = \{P_1, P_2 \dots P_\kappa\}$ be the partition of $M$ such that for two vertices $u$ and $v$, $\rchi'(u) = \rchi'(v)$ if, and only if, $u,v \in P_i$ for some $P_i \in \P$. We can, without loss in generality, assume that the vertices in $P_i$ are colored $c_i$. Let $\rchi^M \in \Gamma'$ be this \pclassdcol of $G$. Let $P'_1$ denote the set of colors used in both $M$ and $\Q$ by $\rchi'$. Let $P'_0$ be the rest of the colors. For each $c \in P'_1$, let $\delta(c) \in M$ denote an arbitrary vertex in $M$ that dominates $c$. Note that such a vertex exists due to \Cref{dominated by M vertices}. Consider the $\pddci \in \Gamma$ constructed due to the selection of the triplet $(\rchi^M, (P'_0,P'_1), \delta)$.

Assume there exists a clique $Q \in \Q$ such that $\rchi'(v) \in P'_1$ for all $v \in Q$. Then, by construction of $\pddci$, every vertex of $Q$ will be colored. Now, consider a clique $Q \in \Q$ such that $\rchi'(v) \in P'_1$ for all $v \in Q'$ where $\emptyset \subset Q' \subseteq Q$. Then, $\pddci$ colors at least $|Q'|$-many vertices of $Q$. By repeated application of \Cref{switching twins}, we can assume that $\pddci$ colors a superset of $Q'$. Thus, we can assume that $\rchi'$ does not use colors of $P'_1$ outside $S$. Let $\rchi$ be a mapping which agrees with $\pddci$ on $S$ and with $\rchi'$ on $V(G) \setminus S$. This is clearly a proper coloring of $G$. Moreover, a $c \in \im{\rchi} \subseteq \im{\rchi'}$ is dominated by the vertex that dominates $c$ in $\rchi'$. Thus, $\rchi$ is a disjoint extension of $\pddci$ where $|\rchi| \leq |\rchi'|$. Since $\rchi'$ was arbitrary, we have proved that $\Gamma$ satisfies (ii), the last remaining property that we were looking to prove. Thus, by \Cref{tcfpt}, we have an algorithm to solve \CDC in $\ordernoinput{2^{\order{k\log{k}}}}$-time.
\section{Parameterized by CVD Set Size}\label{cvd}
Consider a graph $G$. Recall that a subset $M$ of $V(G)$ is a cluster vertex deletion (CVD) set if $G - M$ is a cluster graph. Let $\Q = \{Q_i\}^q_{i=1}$ denote the components of the cluster graph $G - M$ in such an instance.  Note that each $Q_i \in \Q$ is a clique.  Consider an arbitrary ordering $\{v_1, v_2 \dots v_k\}$ of the vertices in $M$. For $0\leq j < 2^k$, we let $M_j$ denote the subset of $M$ which contains an $v_i \in M$ if, and only if, the $i$\textsuperscript{th} bit in the binary representation of $j$ is 1. Define, for each $i$ and $j$, $Q^j_i = \{v \in Q_i \mid \closedneighbour{G}{v} \cap M = M_j\}$. When we were discussing twin covers in \Cref{tc}, note that $Q_i = Q^j_i$ for some $j$ (\Cref{Twin Cover Clique Neighbourhood}). No such restriction exists for CVD sets. We will exploit this partitioning of $\Q$ to show that there exists an $\ordernoinput{2^{\order{2^k}}}$ and $\ordernoinput{2^{\order{2^kkq\log{q}}}}$-time algorithm which solves \DC and \CDC respectively.


\newcommand{\eN}{\mathcal{N}}
\newcommand{\V}{\mathcal{V}}
\newcommand{\E}{\mathcal{E}}

\makeatletter
\newcommand{\DESCRIPTION@original@item}{}
\let\DESCRIPTION@original@item\item
\newcommand*{\DESCRIPTION@envir}{DESCRIPTION}
\newlength{\DESCRIPTION@totalleftmargin}
\newlength{\DESCRIPTION@linewidth}
\newcommand{\DESCRIPTION@makelabel}[1]{\llap{#1}}%
\newcommand{\DESCRIPTION@item}[1][]{%
	\setlength{\@totalleftmargin}%
	{\DESCRIPTION@totalleftmargin+\widthof{\textbf{#1 }}-\leftmargin}%
	\setlength{\linewidth}
	{\DESCRIPTION@linewidth-\widthof{\textbf{#1 }}+\leftmargin}%
	\par\parshape \@ne \@totalleftmargin \linewidth
	\DESCRIPTION@original@item[\textbf{#1}]%
}
\newenvironment{DESCRIPTION}
{\list{}{\setlength{\labelwidth}{0cm}%
		\let\makelabel\DESCRIPTION@makelabel}%
	\setlength{\DESCRIPTION@totalleftmargin}{\@totalleftmargin}%
	\setlength{\DESCRIPTION@linewidth}{\linewidth}%
	\renewcommand{\item}{\ifx\@currenvir\DESCRIPTION@envir
		\expandafter\DESCRIPTION@item
		\else
		\expandafter\DESCRIPTION@original@item
		\fi}}
{\endlist}
\makeatother



\subsection{\DC Parameterized by CVD set size} \label{DC-CVD}

We associate each $Q_i \in \Q$ to a neighborhood vector $\eN_i$ of size $2^k$ as follows: $\eN_i(j)=|Q^j_i|$. That is, the entry $\eN_{i}(j)$ denotes the number of vertices of $Q_i$ with neighborhood $M_j$ in $M$. Let $\eN=\{\eN_i \mid 1\leq i\leq q\}$. Define an equivalence relation $\sim$ on $\eN$ as follows: $\eN_i \sim \eN_j$ if $\eN_i(p) \neq 0 \iff \eN_j(p) \neq 0$. Note that $\sim$ partitions $\eN$ into $\lambda \leq 2^{2^k}$ many equivalent classes. Let this partition be $\E$. We say a clique $Q_i$ is in $\E_p \in \E$ if $\eN_i \in \E_p$. For each $\E_p \in \E$, we define $\mathcal{NC}_i = \{M_j \subseteq M \mid \eN_i (j) \neq 0 \text{ for a }\eN_i \in \E_p\}$. Elements of $\mathcal{NC}_i$ are referred to as the \textit{neighborhood classes} of $\E_p$.

For an \domcol $\rchi$, denote the colors used in coloring the vertices in $M$ by $C^M(\rchi)$ and let $\rchi_M$ be the partial coloring of $M$. Without loss in generality, we can assume that $C^M(\rchi) \subseteq \{c_1,c_2 \dots c_k\}$. Let $C^0(\rchi) \subseteq C^M(\rchi)$ be the set of the colors that are only used in $M$. Consider any arbitrary set of colors $C^U(\rchi)$ such that each color in $c\in C^U(\rchi)$ is used to color only one vertex of a clique -- i.e. $\rchi^{-1}(c)=\{v\}$ for some $v\in Q_j$ and no two colors of $C^U(\rchi)$ is used to color the vertices of the same clique. With slight abuse of notation, we will drop $(\rchi)$ from the terms defined with respect to a coloring $\rchi$. The corresponding coloring will be clear from the context. We make a small observation.

\begin{observation}\label{Clique vertex domination}
	Let $v \in Q_i$ for some $Q_i \in \Q$. Then, $v$ cannot dominate a color class which is used to color a vertex $u \in Q_j$ for some $i \neq j$ in any \domcol of $G$.
\end{observation}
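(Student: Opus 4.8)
The plan is to prove this by contradiction, leaning entirely on the fact that the cliques of $\Q$ are the connected components of the cluster graph $G - M$ and hence are pairwise non-adjacent in $G$. So I would begin by assuming, for a fixed \domcol $\rchi$, that some $v \in Q_i$ does dominate a color class $c$ that is also used on a vertex $u \in Q_j$ with $i \neq j$.

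From the definition of domination, $v$ dominating the color class $c$ means $\rchi^{-1}(c) \subseteq \closedneighbour{G}{v}$. In particular $u \in \closedneighbour{G}{v}$. Since $u$ and $v$ lie in distinct cliques $Q_j$ and $Q_i$, we have $u \neq v$, and therefore $u \in \openneighbour{G}{v}$, i.e.\ $u$ and $v$ are adjacent in $G$.

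The crux of the argument is to observe that this adjacency cannot occur. Both $u$ and $v$ belong to $V(G) \setminus M$ (they are vertices of cliques in $\Q$), so the edge $(u,v)$, if present, is an edge of the induced subgraph $G - M = G[V(G)\setminus M]$. But $Q_i$ and $Q_j$ are, by definition, two distinct connected components of the cluster graph $G - M$, and there are no edges between distinct components. This contradicts the adjacency of $u$ and $v$, completing the proof.

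I do not expect any genuine obstacle here: the statement is an immediate structural consequence of $G - M$ being a cluster graph whose cliques are its components. The only point that warrants a line of care is recording that $u$ and $v$ both avoid $M$, so that their (non-)adjacency in $G$ coincides with their (non-)adjacency in $G - M$; after that the component structure finishes the argument.
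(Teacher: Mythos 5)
Your proof is correct and matches the paper's reasoning: the paper states this observation without proof, treating it as immediate (exactly as the first line of its proof of the analogous twin-cover observation, where the cross-clique case is dismissed in one sentence), and your argument simply spells out that reasoning -- $u$ must lie in $\closedneighbour{G}{v}$, yet $u$ and $v$ lie outside $M$ in distinct connected components of $G-M$, so they cannot be adjacent.
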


From \Cref{Clique vertex domination}, we know that the vertices in a clique $Q_i \in \Q$ either dominate a color in $C^U \cup C^0$ or a color which is used to color vertices only in $Q_i \cup M$ and not anywhere else. 
Consider a clique $Q_i$ such that no vertex of $Q_i$ is colored with a color in $C^U$. Then, every vertex dominates a color class in $C^M$. For such a clique $Q_i$, let $C_{Q_i}\subseteq C^M\setminus C^0$ denote a minimal set of colors such that every vertex in $Q_i$ dominates a color in $C_{Q_i} \cup C^0$. 
Let $C^1=\cup^q_{i=1} C_{Q_i}$ and define $C^2=C^M\setminus (C^0 \cup C^1)$.
Let $\alpha_\rchi$ be the number of cliques in which at least one color of $C^1$ was used.  Let the partition of $C^1$ among $\alpha_\rchi$ many cliques be $\P^1=\{\C^1_{1}, \C^1_{2} \cdots \C^1_{\alpha_\rchi}\}$. Consider the following observation which follows from the definition of the equivalence classes in $\E$.

\begin{observation} \label{Eq Classes C^0 domination}
	
	Let $\E_p \in \E$ be any equivalent class with  neighborhood classes $\mathcal{NC}_i$ and $\rchi$ be any \domcol. Then one of the two cases occur:
	\begin{description}
		\item[Case {\it (i)}] For all cliques $Q_i\in \E_p$, every vertex dominates a color class in $C^0$ 
		\item[Case {\it (ii)}] There exists a set of neighborhoods $\mathcal{NC}_i'\subseteq \mathcal{NC}_i$ such that for all $M'\in \mathcal{NC}_i'$ in each clique there exist a vertex with neighborhood $M'$ in $M$ which does not dominate a color in $C^0$. 
	\end{description}
	%
	%
\end{observation}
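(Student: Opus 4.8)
The plan is to show that whether a clique vertex dominates a color in $C^0$ is determined entirely by its neighborhood type in $M$ together with the fixed partial coloring $\rchi_M$, after which the dichotomy follows from the fact that all cliques in a class $\E_p$ share the same set of present neighborhood types.

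First I would record the controlling fact about $C^0$. Fix $c \in C^0$; since $c$ is used only in $M$, we have $\rchi^{-1}(c) = \rchi_M^{-1}(c) \subseteq M$. Thus a vertex $v$ lying in some clique $Q_i$ dominates $c$ if and only if $\rchi^{-1}(c) \subseteq \closedneighbour{G}{v} \cap M$. If $v \in Q^j_i$ then $\closedneighbour{G}{v} \cap M = M_j$, so this reads $\rchi_M^{-1}(c) \subseteq M_j$ --- a condition depending only on the neighborhood type $M_j$ and on $\rchi_M$, never on the particular vertex or clique. I would therefore call a type $M' \in \mathcal{NC}_i$ \emph{good} if some $c \in C^0$ satisfies $\rchi_M^{-1}(c) \subseteq M'$, and \emph{bad} otherwise; a clique vertex of type $M'$ then dominates a color in $C^0$ exactly when $M'$ is good.

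Next I would invoke the definition of $\sim$: two cliques are equivalent precisely when their neighborhood vectors have the same support, so every clique $Q_i \in \E_p$ presents exactly the set of types $\mathcal{NC}_i$, and every vertex of such a clique has some type in $\mathcal{NC}_i$. The dichotomy is now a one-line case split. If every type in $\mathcal{NC}_i$ is good, then every vertex of every clique in $\E_p$ dominates a color in $C^0$, which is Case \textit{(i)}. Otherwise let $\mathcal{NC}_i' \subseteq \mathcal{NC}_i$ be the nonempty set of bad types; for each $M' \in \mathcal{NC}_i'$ and each clique $Q_i \in \E_p$, the shared-support property yields a vertex $v \in Q_i$ of type $M'$, and badness of $M'$ forces $v$ to dominate no color in $C^0$ --- exactly Case \textit{(ii)}.

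The only genuine content is the first step: recognizing that dominating a $C^0$ class is a type-level property that is constant across the cliques of $\E_p$. Once that is in hand, the support-invariance baked into $\sim$ reduces the statement to asking whether the set of bad types is empty, so I do not anticipate a real obstacle. The single point deserving a remark is the degenerate case $C^0 = \emptyset$, in which every type is bad and we land in Case \textit{(ii)} with $\mathcal{NC}_i' = \mathcal{NC}_i$.
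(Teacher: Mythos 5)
Your proposal is correct and is precisely the argument the paper has in mind: the paper states this observation without proof, remarking only that it ``follows from the definition of the equivalence classes in $\E$,'' and your reduction of $C^0$-domination to the type-level condition $\rchi_M^{-1}(c) \subseteq M_j$ (valid since $\rchi^{-1}(c)\subseteq M$ for $c\in C^0$ and $\closedneighbour{G}{v}\cap M = M_j$ for $v\in Q^j_i$), combined with the shared support of neighborhood vectors within $\E_p$, is exactly the omitted reasoning. Your handling of the degenerate case $C^0=\emptyset$ is a sensible extra remark and consistent with Case \textit{(ii)}.
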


By \Cref{Eq Classes C^0 domination}, if in an $\E_i \in \E$, there exists a clique in this equivalence class which contains a vertex which does not dominate any color in $C^0$, every clique in $\E_i$ must be assigned a $\C^1_j \in \P^1$ to color its vertices or a $c \in C^U$ to color one of its vertices. For any equivalent class $\E_i$ such that every clique in $\E_i$ have at least one vertex which does not dominate any color in $C^0$, suppose we are given the following -- set of partitions $\P^1_i\subseteq \P^1(\rchi)$ assigned to $\E_i$; and for all $c \in \C^1_j$ in each partition $\C^1_j \in \P^1_i$, the neighborhood $\omega(c)$ in $M$ of the vertex that will get the color $c$. Denote such a set of neighborhoods by $N_M(c)$.

Observe that for each clique in $\E_i$ which is not assigned a partition of colors from $\P^1_i$, exactly one vertex must be colored with a color from $C^U$. We assume that we also know the neighborhood in $M$, $\M^U=\{M^U_{1},\cdots M^U_{\lambda}\}$ of the vertices in which these colors appear. Let $C=\{c_1\cdots c_{\ell}\}$ and $C^Q=C\setminus\{C^M\cup C^U\}$. Let $M_a\subseteq M$ be the set of vertices which dominate a color class in $C^0 \cup C^1$ and $M_b$ be the set which dominates a color class in $C^U$. Let $M_c = M \setminus (M_a \cup M_b)$. Without loss in generality, we can assume that vertices in $M_c$ dominate a color class in $\{c_1,c_2 \dots c_{2k}\}$. Let $M_d \subseteq M_c$ denote the vertices that dominate a $c \notin C^M$ and let $C_D\subseteq C^Q$ be the set of colors they dominate. Every color in $C_D$ must appear at least once in $G-M$. For each $c\in C_D$, assume we know the neighborhood in $M$, say $M^D(c)$, of the vertex in which it appears. Thus we can construct a $\delta \colon M\rightarrow \{c_1, c_2 \dots c_{2k}\}$ where $\delta(v)$ denotes a color that the vertex $v\in M$ dominates.

Given all this information, for each clique $Q\in \E_i$ and for any $\C^1_j$, we can determine whether it is possible to color the rest of the vertices with $C^2\cup C^Q$ where, for all $c\in \C^1_j$, one vertex with neighborhood $\omega(c)$ gets the color $c$. Similarly, we can compute, for any $Q\in \E_i$ and $M^U_{a}$, whether it is possible to color the rest of the vertices with $C^2\cup C^Q$ where one vertex with neighborhood $M^U_{a}$ gets a color from $C^U$. 

\begin{mylemma}\label{lemma:cliquecoloring}
	For any clique $Q_i$; given the set of colors $\C^1_j\subseteq C^1$, and, for all $c\in \C^1_j$, the neighborhood $\omega(c)$ in $M$ of the vertex that will get the color $c$, we can determine in polynomial time whether it is possible to color one vertex with neighborhood $\omega(c)$ with color $c$ for all $c\in \C^1_j$ and the rest of the vertices with colors from $C^2\cup C^Q$ such that for any vertex $m_a\in M$, the color $\delta(m_a)$ appears in the neighborhood of $m_a$.
\end{mylemma}

\begin{proof}
	Assign every color $c\in \C^1_j$ to one arbitrary vertex with neighborhood $\omega(c)$. If every vertex in $Q_i$ does not dominate at least one color from $C^0\cup \C^1_j$, report \no. Otherwise, create a bipartite graph $G_i$ with vertex bipartition $Q_i$ and $\C^1_j\cup C^2\cup C^Q$. For any color $c\in C^Q$, put an edge between $c$ and every vertex in $Q_i$. For $c\in \C^1_j$, let $M_c\subseteq M$ be the set of vertices which are assigned the color $c$ and $M_d\subseteq M$ be the set of vertices $v$ with $\delta(v)=c$. Put an edge between $c$ and $v$ if $v\notin \cup_{m_a\in M_c}\openneighbour{G}{m_a}$ and $v\in \cap_{m_d\in M_d}\openneighbour{G}{m_d}$. Return \yes if there exists a matching saturating $Q_i$ in $G_i$.
\end{proof}

Observe that there are some colors $C_{\epsilon}\subseteq C_D$ which must appear in the coloring of the vertices of the cliques to satisfy the domination requirement of some vertex in $M$. Assume that for each such color $c$, we know the neighborhood of the vertex, $\mu(c)$ which will be assigned the color. For any subset of colors $C_{\epsilon}'\subseteq C_{\epsilon}$, we can modify the construction of $G_i$ in \Cref{lemma:cliquecoloring} to determine whether there exists a coloring which satisfies the its premise and, in addition, every color of  $C_{\epsilon}'$ is used to color one vertex from $Q_i$. The modification is as follows: for each vertex $v$ with neighborhood $\mu(c)$, delete all edges incident on $v$ except the edge $(v,c)$. 

Consider any equivalent class $\E_i$ such that every clique in $\E_i$ has at least one vertex which does not dominate any color in $C^0$. For such a class, assume that we are given the set of partitions $\P^1_i\subseteq \P^1$ assigned to $\E_i$ and, for each partition $\C^1_j\in \P^1_i$ for all $c\in \C^1_j(\rchi)$, the neighborhood $\omega(c)$ in $M$ of the vertex that will get the color $c$. Denote such a set of neighborhoods by $N_M(c)$. We also have a subset $C_{\epsilon}'\subset C_{\epsilon}$. We would like to decide whether the cliques in $\E_i$ can be colored with $C^2\cup C^Q$ when each clique is assigned either a partition from $\P^1_i$ or a color from $C^U$. Moreover, we require that each color from $C_{\epsilon}'$ should be used to color at least one vertex.  From \Cref{lemma:cliquecoloring}, we know that for an arbitrary clique, an arbitrary part in $\P^1_i$, and any subset $C_{\epsilon}''\subseteq C_{\epsilon}'$; we can evaluate whether there exists a feasible coloring of $G$. In \fpt time we can guess the partitioning of $C_{\epsilon}'$ over different cliques and for each part $C_x$ in such partitioning, whether it will appear in a part of $\P^1_i$, or if it will appear in a clique which is assigned a color from $C^U$.

\begin{mylemma}
	For any equivalent class $\E_i$, given a partition $\P^1_i\subseteq \P^1$, a subset $C_u\subseteq C^U$ of cardinality $|\E_i|-|\P^1_i|$, a set of subset of  colors $\nu=\{\nu_1\cdots \nu_a\}$ where $\nu_j\subseteq C_{\epsilon}$ for $1\leq j\leq a$, a function $\alpha:\nu\rightarrow \P^1_i\cup C_u$ and for color $c\in \nu$ a function $\beta(c)\subseteq M$; we can determine, in polynomial time, if there exist a \pdc of $G$ such that all the vertices in $\E_i$ are colored with colors in $\P^1_i\cup C_u\cup C^2\cup C^Q$ and
	\begin{itemize}
		\item each color in $c\in \C^1_j\in \P^1_i$ appears in the same clique in a vertex with neighbourhood $\omega(c)$,
		\item each color $c\in \nu_j$ should appear in the same clique where $\alpha(c)$ appears in a vertex with neighborhood $\beta(c)$.
	\end{itemize} 
\end{mylemma}
\begin{proof}
	As in \Cref{lemma:cliquecoloring}, we can create a bipartite graph with bipartition $\E_i$ and $\{\P^1_i\cup C_u\}\times \nu$. That is, we create a vertex for each clique on one side. In the other side we have a vertex for each $(\zeta,\gamma)$  where $\zeta\in \{\P^1_i\cup C_u\}$ and either $\gamma$ is empty or $c\in \C^1_j\in \P^1_i$ where $\alpha(c)=\zeta$. We create an edge between $Q_c$ and $(\zeta,\gamma)$ if $(\zeta,\gamma)$ is feasible with respect to $Q_c$ (Lemma~\ref{lemma:cliquecoloring}). Report \yes if there exist a perfect matching in this graph report and \no otherwise.
\end{proof}

Therefore by trying all possible values for the following parameters, the rest of the problems can be solved in polynomial time.

\begin{tcolorbox}[colback=gray!5!white,colframe=gray!75!black]
	\begin{list}{}{
			\setlength{\leftmargin}{1.5cm}
			\setlength{\itemindent}{0cm}
			\setlength{\labelwidth}{\leftmargin-\labelsep}
		}
		\item[$\rchi_M:$] Coloring of the modulator
		\item[$C^0(\rchi):$] Colors that have only appeared in $M$ 
		\item[$C^1(\rchi):$] Minimal set of colors $c$ which have appeared in $M$ and in a clique $Q_i$ such that there exist at least one vertex $v\in Q_i$ such that $c$ is the only color class that $v$ dominates
		\item[$\P^1(\rchi):$]  Partition of $C^1(\rchi)$ over different equivalent classes
		\item[$M_c(\rchi):$]  Neighborhood of the vertex $v$ in $G-M$ with $\rchi(v) = c$ for $c\in C^1(\rchi)$
		\item[$\V_i(\rchi):$]  Neighborhood vector of the equivalent class in which $\C^1_{i}(\rchi)$ appeared
		\item[$\eN_U^i(\rchi):$]  Neighborhood classes of $\E_i$ which are adjacent to a vertex colored $c \in C^U(\rchi)$
		\item[$\delta(\rchi):$]  Domination function for vertices which do not dominate a color class in $C^0(\rchi)\cup C^1(\rchi)\cup C^U(\rchi)$
		\item[$\gamma_\rchi:$]  Tuple with entries neighborhood of $v \in Q_i$ in $M$ and the neighborhood vector of $Q_i$ where $v \in M$ that dominates a color not in $C^M(\rchi) \cup C^U(\rchi)$
	\end{list}
	
\end{tcolorbox}

Therefore, we have the following theorem.

\begin{theorem}
	\DCCVD is \fpt. There exists an algorithm that solves it in $\ordernoinput{2^{\order{2^k}}}$-time.
\end{theorem}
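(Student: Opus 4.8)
The plan is to read the final theorem as the assembly of the preceding structural observations and matching lemmas into a single guess-and-verify procedure. Fix the CVD set $M$ with $|M|=k$, the cliques $\Q$ of $G-M$, and the partition $\E$ of $\Q$ into the at most $2^{2^k}$ neighborhood-support equivalence classes. The governing principle, from \Cref{Clique vertex domination}, is that a clique vertex can dominate only color classes confined to its own clique together with $M$; combined with the $C^0/C^1/C^U/C^2$ decomposition of the colors of a hypothetical optimal \domcol, this makes the coloring of each clique a local, matching-solvable task. So I would enumerate, over a domain bounded by a function of $k$, the full ``signature'' of an optimal solution --- exactly the parameters $\rchi_M$, $C^0$, $C^1$, $\P^1$, the neighborhoods $M_c$ and $\omega(c)$, the equivalence-class data $\V_i$ and $\eN_U^i$, the domination map $\delta$, and the tuple $\gamma_\rchi$ listed in the parameter box --- and for each guess decide in polynomial time whether it completes to a valid \domcol using at most $\ell$ colors.

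For a fixed guess I would process each class $\E_i$ separately. \Cref{Eq Classes C^0 domination} supplies the dichotomy: either every vertex of every clique in $\E_i$ is dominated through $C^0$ (nothing to do), or a prescribed set of neighborhood classes must be served by colors of $C^1$ or by unique colors of $C^U$. In the latter case I would first apply \Cref{lemma:cliquecoloring} at the level of a single clique, building the bipartite graph between the clique's vertices and $\C^1_j \cup C^2 \cup C^Q$, with edges engineered so that a saturating matching exists iff the clique can be properly colored while placing each forced $c\in\C^1_j$ on a vertex of the guessed neighborhood $\omega(c)$ and meeting the domination demands $\delta$ of the modulator vertices. I would then lift this to the whole class via the second matching lemma, whose bipartition is $\E_i$ against $\{\P^1_i \cup C_u\}\times\nu$ and whose perfect matching certifies a globally consistent assignment of the parts of $\P^1_i$, the unique colors, and the forced colors of $C_\epsilon$ to distinct cliques. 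A guess is accepted iff all classes pass and the induced color count is at most $\ell$.

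For correctness I would argue both directions. If a \domcol $\rchi$ with $|\rchi|\le\ell$ exists, reading off its induced signature gives one of the enumerated guesses, and \Cref{Clique vertex domination,Eq Classes C^0 domination} guarantee that the corresponding single-clique and class-level matchings are feasible, so the guess is accepted. Conversely, an accepted guess yields an explicit proper coloring in which, by the matching edges, every clique vertex dominates the color class promised to it and every modulator vertex dominates $\delta(\cdot)$; hence it is a genuine \domcol, and the check $|\rchi|\le\ell$ certifies its size. For the running time, each guessed parameter ranges over a set bounded by a function of $k$ --- the dominant contributions being the data indexed by the $\le 2^{2^k}$ equivalence classes, namely the placements $\V_i$ of the $C^1$-parts and the neighborhood-class guesses --- so the number of guesses is $\ordernoinput{2^{\order{2^k}}}$; since $q$ and $\ell$ are polynomial in $n$ and each matching is computed in polynomial time, the overall bound follows.

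The hard part, I expect, is precisely this two-level matching. The delicate point is encoding in a single bipartite instance the simultaneous constraints of properness, the forced placement of a $C^1$-color (which is shared between $M$ and exactly one clique and is the sole color that some clique vertex dominates), and the modulator-domination map $\delta$; and then correctly aggregating the single-clique feasibility tests into one consistent coloring of an entire equivalence class without over- or under-counting where each $C^1$- and $C_\epsilon$-color is realized. Getting the guessed signature rich enough for completeness yet small enough to keep the count at $2^{\order{2^k}}$ is the balance the whole argument must strike.
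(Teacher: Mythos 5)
Your proposal follows essentially the same route as the paper: you enumerate exactly the signature parameters the paper lists (the coloring $\rchi_M$ of $M$, the color sets $C^0, C^1, C^U$, the partition $\P^1$, the neighborhood data, and the domination map $\delta$), and verify each guess by the same two-level matching scheme --- the per-clique bipartite matching of \Cref{lemma:cliquecoloring} lifted to a perfect matching over each equivalence class $\E_i$ --- with correctness resting on \Cref{Clique vertex domination} and \Cref{Eq Classes C^0 domination} just as in the paper. Indeed, your write-up spells out the two-directional correctness argument and the $\ordernoinput{2^{\order{2^k}}}$ guess count somewhat more explicitly than the paper's own compressed presentation does.
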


\subsection{\CDC Parameterized by CVD Set Size}
We now discuss our much simpler algorithm for \CDC. By \Cref{isocliques} and \Cref{reduction rule iso cliques}, we can assume that there exists a $v \in Q_i$, for arbitrary $Q_i \in \Q$, with $\closedneighbour{G}{v} \cap M \neq \emptyset$. Pick one arbitrary vertex from each non-empty $Q^j_i$ and call this collection $M'$. Note that $|M \cup M'| \in 2^{\order{2^kq}}$. We have the following observation with the same flavor as \Cref{dominated by M vertices}.
\begin{observation}\label{dominated by M U M' vertices}
	Let $\rchi$ be an arbitrary \classdcol of $G$ and let $c \in \im{\rchi}$ denote a color used by $\rchi$ outside $M$. Then, $c$ is dominated by a vertex in $M \cup M'$.
\end{observation}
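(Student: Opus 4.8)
The plan is to invoke the definition of a \classdcol to produce \emph{some} vertex dominating the color class $c$, and then to argue that this vertex can always be traded for a vertex of $M \cup M'$ without destroying the domination property. Concretely, since $\rchi$ is a \classdcol, there is a vertex $v \in V(G)$ dominating $c$, that is, $\rchi^{-1}(c) \subseteq \closedneighbour{G}{v}$. If $v \in M$ we are immediately done, so the only genuine case is $v \in V(G) \setminus M$, which forces $v$ to lie in some clique $Q_i \in \Q$ of the cluster graph $G - M$.

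In that case I would first pin down the neighborhood type of $v$: let $j$ be the unique index with $\closedneighbour{G}{v} \cap M = M_j$, so that $v \in Q^j_i$. Because $Q^j_i$ is non-empty (it contains $v$), the set $M'$ was constructed to contain a representative $v' \in Q^j_i$. The crux of the argument is the claim that $v$ and $v'$ are true twins, i.e.\ $\closedneighbour{G}{v} = \closedneighbour{G}{v'}$. Granting this, $\rchi^{-1}(c) \subseteq \closedneighbour{G}{v} = \closedneighbour{G}{v'}$ shows that the vertex $v' \in M'$ dominates $c$, which is exactly what we want.

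The remaining (and essentially the only) step is therefore to establish the twin property, which I would verify by splitting each closed neighborhood into its $G-M$ part and its $M$ part. Since the cliques $Q_i$ are precisely the connected components of the cluster graph $G-M$, within $G-M$ each of $v$ and $v'$ is adjacent exactly to the other vertices of $Q_i$ and to no vertex of any other component; hence the $G-M$ part of each closed neighborhood equals $Q_i$. Into $M$, both $v$ and $v'$ have neighborhood exactly $M_j$ by the defining property of $Q^j_i$. Combining the two parts gives $\closedneighbour{G}{v} = Q_i \cup M_j = \closedneighbour{G}{v'}$, as required. I expect the only subtlety to be the bookkeeping that no edges run between distinct cliques of $G-M$ (so a clique vertex can dominate nothing beyond its own clique together with $M$), which is immediate from $G-M$ being a cluster graph. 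I would finally remark that, exactly as for \Cref{dominated by M vertices}, it is the hypothesis that $c$ is used outside $M$ that makes this localization useful in the algorithm, even though the twin-replacement step above is in fact valid for every color class.
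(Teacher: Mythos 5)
Your proof is correct and is essentially the argument the paper intends: the paper states this observation without an explicit proof, saying only that it has the same flavor as \Cref{dominated by M vertices}, and the whole point of picking one representative per non-empty $Q^j_i$ when forming $M'$ is precisely your twin-replacement step, since any two vertices of $Q^j_i$ have closed neighborhood $Q_i \cup M_j$ and are therefore interchangeable as dominating vertices. Your closing remark is also accurate -- unlike the twin-cover analogue, the swap needs neither the no-isolated-cliques assumption nor the hypothesis that $c$ is used outside $M$.
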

Note that the proof of \Cref{disjoint extension in FPT time} and the construction of \pclassdcol{}s that followed hinged entirely on \Cref{dominated by M vertices} and not the fact that $M$ was a twin cover. Thus, using  \Cref{dominated by M U M' vertices}, along with the ideas in \Cref{CD-TC}, we have the following theorem.
\begin{theorem}
	There exists an algorithm to solve \CDCol in $\ordernoinput{2^{\order{2^kkq\log{q}}}}$-time, where $k = |M|$, where M is a CVD set of the input graph and $q$ is number of connected components in $G - M$. 
\end{theorem}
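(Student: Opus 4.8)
The plan is to reuse the twin-cover machinery of \Cref{CD-TC} almost verbatim, with the enlarged modulator $M \cup M'$ playing the role that the twin cover played there. First I would preprocess the instance using \Cref{isocliques} and \Cref{reduction rule iso cliques} to delete every isolated clique, so that each $Q_i \in \Q$ contains a vertex adjacent to $M$. I would then form $M'$ by selecting one representative vertex from each non-empty block $Q^j_i$. Since every clique $Q_i$ splits into at most $2^k$ blocks and there are $q$ cliques, $|M'| \le q 2^k$, and hence $k' := |M \cup M'| = \order{q 2^k}$. The subgraph $G - (M \cup M')$ is still a cluster graph, so the setting of \Cref{CD-TC} is syntactically in place.

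The structural heart of the argument is \Cref{dominated by M U M' vertices}, the exact analogue of \Cref{dominated by M vertices} for the new modulator. I would argue that any color $c$ used outside $M$ is dominated by a vertex of $M \cup M'$: if $c$ appears in two distinct cliques, its dominator must be adjacent to vertices of both and therefore lies in $M$; if $c$ appears in a single clique $Q_i$, its dominator is either already in $M$, or is a clique vertex $w \in Q^j_i$, in which case its block representative $w' \in M' \cap Q^j_i$ satisfies $\closedneighbour{G}{w'} = \closedneighbour{G}{w}$ (both are adjacent to all of $Q_i$ and have $M$-neighborhood $M_j$) and hence dominates $c$ as well. This is the sole property of the modulator that the \CDC{} twin-cover pipeline consumes.

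With \Cref{dominated by M U M' vertices} in hand I would lift the algorithm: as noted just before the statement, both \Cref{disjoint extension in FPT time} (the ILP-based disjoint-extension step, via \Cref{ILP equivalence}) and the subsequent $\Gamma$-construction only used that externally-used colors are dominated by a modulator vertex, not the twin-cover hypothesis itself. So I would re-run that entire pipeline with modulator $M \cup M'$: enumerate the proper partial colorings of $M \cup M'$ together with their auxiliary data to build a collection $\Gamma$ of \pclassdcol{}s, and for each candidate solve the disjoint-extension ILP with $k'$ variables in $\ordernoinput{2^{\order{k' \log k'}}}$-time. Both $|\Gamma|$ and the per-element cost are $2^{\order{k' \log k'}}$; substituting $k' = \order{q 2^k}$ gives $k' \log k' = \order{q 2^k (k + \log q)} = \order{2^k k q \log q}$, which yields the claimed $\ordernoinput{2^{\order{2^k k q \log q}}}$ bound, and correctness follows from the analogue of \Cref{tcfpt}.

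The main obstacle I would have to discharge carefully is that $M \cup M'$ is \emph{not} literally a twin cover: the components of $G - (M \cup M')$ are the cliques $Q_i$ with one vertex deleted per block, and these still contain vertices of differing $M$-neighborhoods. The twin-cover proof referenced $\closedneighbour{G}{Q_i}$ in the ILP constraints and used \Cref{switching twins}, both of which presuppose that a whole clique is a single true-twin class. I would therefore reformulate the ILP at the granularity of the blocks $Q^j_i$ (each of which \emph{is} a true-twin class, with well-defined closed neighborhood $Q_i \cup M_j$) rather than whole cliques, while enforcing the additional constraint that the blocks comprising one clique receive pairwise disjoint colors. I expect verifying that \Cref{ILP equivalence} and the $\Gamma$-correctness argument survive this refinement — in particular that per-block twin-swapping (\Cref{switching twins} applied within each $Q^j_i$) still normalizes an arbitrary optimal \classdcol{} into a disjoint extension of some member of $\Gamma$ — to be the most delicate part of the proof.
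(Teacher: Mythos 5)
Your proposal is correct and follows essentially the same route as the paper: the paper likewise removes isolated cliques via \Cref{reduction rule iso cliques}, forms $M'$ by taking one representative from each non-empty block $Q^j_i$, establishes \Cref{dominated by M U M' vertices} as the analogue of \Cref{dominated by M vertices}, and re-runs the \Cref{CD-TC} pipeline with the enlarged modulator $M \cup M'$ of size $\order{q2^k}$ to obtain the $\ordernoinput{2^{\order{2^kkq\log{q}}}}$ bound. Your closing paragraph on refining the ILP and the twin-swapping argument to the granularity of the true-twin blocks $Q^j_i$ is in fact \emph{more} careful than the paper, which dismisses this issue with the single remark that the earlier proofs ``hinged entirely on \Cref{dominated by M vertices} and not the fact that $M$ was a twin cover'' --- though note that simple per-block covering constraints alone would not suffice (a single dominator adjacent to two blocks of one clique must supply distinct colors to both), so the within-clique disjointness you flag genuinely requires a Hall-type condition over subsets of blocks, exactly the delicate verification you anticipate.
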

\section{Lower Bounds}\label{lowerbounds}
In this section, we establish some lower bounds for \DC and \CDC with the parameters discussed in the previous sections. Consider a graph $G$. Construct a graph $G'$ from $G$ as follows: $V(G') = V(G) \cup \{v_0\}$ for a new vertex $v_0$ and $E(G') = E(G) \cup \{(v_0,v) \mid v \in V(G)\}$. By \cite{Gera2006,Merouane2015} we have the following.
\begin{observation}
	$G$ has a proper coloring using $\ell$-many colors if, and only if, $G'$ has a \domcol or a \classdcol using $(\ell+1)$-many colors.
\end{observation}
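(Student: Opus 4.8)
The plan is to exploit the single structural feature of $G'$ that matters here: $v_0$ is a \emph{universal} vertex, i.e. $\closedneighbour{G'}{v_0} = V(G')$, so in any proper coloring of $G'$ the color of $v_0$ differs from that of every other vertex. I would prove the two directions separately, and in fact exhibit one construction that works simultaneously for the \domcol{} and the \classdcol{} versions, so that the claimed equivalence holds for both problems at once.

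For the forward direction, suppose $G$ admits a proper coloring $\rchi$ with $\ell$ colors. I extend $\rchi$ to $G'$ by giving $v_0$ a fresh color $c_{\ell+1}$; the result is proper since the neighbors of $v_0$ are exactly the vertices of $G$, all colored from the first $\ell$ colors. I then verify both constraints. Since $v_0$ is universal it dominates every color class, which immediately gives that each color class is dominated, so the extension is a \classdcol{}. For the \domcol{} property, the class of $c_{\ell+1}$ is the singleton $\{v_0\}$, and every $v \in V(G')$ satisfies $v_0 \in \closedneighbour{G'}{v}$, so every vertex dominates the class $\{v_0\}$. Hence the extended coloring is simultaneously a \domcol{} and a \classdcol{} of $G'$ using $\ell+1$ colors.

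For the backward direction I would use only that a \domcol{} (respectively a \classdcol{}) is in particular a proper coloring. Suppose $\rchi'$ is such a coloring of $G'$ with at most $\ell+1$ colors. Because $v_0$ is adjacent to every vertex of $G$, its color is distinct from all others and is therefore used solely on $v_0$; restricting $\rchi'$ to $V(G)$ then yields a proper coloring of $G$ that avoids the color $\rchi'(v_0)$ and hence uses at most $\ell$ colors. There is no real obstacle in this argument; the only point deserving care is that the statement bundles the two coloring notions together, so I would make explicit that the forward construction is valid for both \DC{} and \CDC{} at once, while the backward step is oblivious to which of the two domination constraints is imposed. This uniformity is precisely what lets the single universal vertex $v_0$ transfer the \np-hardness of \Col{} to both \DC{} and \CDC.
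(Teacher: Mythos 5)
Your proof is correct and takes essentially the intended route: the paper states this observation without proof, citing \cite{Gera2006,Merouane2015}, and the standard argument there is precisely your universal-vertex construction. Both directions are sound --- the fresh color on $v_0$ creates the singleton class $\{v_0\}$ that every vertex dominates (giving the \domcol{} property) while the universal $v_0$ dominates every class (giving the \classdcol{} property), and the backward direction correctly uses only that both notions refine proper coloring and that $v_0$'s color appears nowhere else.
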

Note that $V(G)$ is a clique modulator and a twin cover in $G'$. Therefore, if \DC and \CDC can be solved in $\ordernoinput{f(k)}$-time, where $k$ is the size of the modulator and $f$ is some computable function, then \Col can be solved in $\order{f(n)}$-time where $n$ is the number of vertices in an instance of this problem. It is known, under the Exponential-Time Hypothesis (\textsf{ETH}), that \Col cannot be solved in $2^{o(n)}$-time (refer \cite{Fomin2010,Cygan2015}). We state this discussion below as a lemma.
\begin{mylemma}\label{ETH lowerbound}
	\DC and \CDC do not admit $\ordernoinput{2^{o(k)}}$-time algorithms unless \textup{\textsf{ETH}} fails. Here, $k$ is the size of a clique modulator or a twin cover of the input graph.
\end{mylemma}
The proofs of the following results utilize a relationship between \HS and \DC that we now establish. 
\begin{restatable}{theorem}{SCC}\label{SCCDC}
	Unless the Set Cover Conjecture is false,  \DC does not admit a $\ordernoinput{{(2-\epsilon)}^{k}}$-time algorithm where $k$ is the size of a CVD set.
\end{restatable}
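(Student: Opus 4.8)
The plan is to prove the bound by a polynomial-time reduction from \HS to \DC whose output graph has a cluster vertex deletion set of size $|\F|+1$. First I recall the form of the Set Cover Conjecture I will use: an instance $\HSinstance$ of \HS is equivalent to the Set Cover instance with universe $\F$ and one set $\{F\in\F : u\in F\}$ per $u\in U$, so a hitting set of $(U,\F)$ is exactly a set cover of that instance. Hence the Set Cover Conjecture implies that \HS admits no $\ordernoinput{(2-\epsilon)^{|\F|}}$-time algorithm for any $\epsilon>0$. It therefore suffices to build, from $\HSinstance$, a \DC instance $(G,\ell)$ together with a CVD set $M$ with $|M|=|\F|+1$ and $\ell=\kappa+O(1)$, such that $(G,\ell)$ is a \yes instance iff $(U,\F)$ has a hitting set of size at most $\kappa$: a $\ordernoinput{(2-\epsilon)^{k}}$ algorithm for \DC would then solve \HS in time $\ordernoinput{(2-\epsilon)^{|\F|+1}}=\ordernoinput{(2-\epsilon)^{|\F|}}$, contradicting the conjecture.

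For the construction I would put one modulator vertex $s_j$ in $M$ for each set $F_j\in\F$ and one auxiliary vertex $t$, keeping $M$ independent; outside $M$ I put one vertex $x_i$ for each $u_i\in U$ with no edges among the $x_i$'s, so $G-M$ is edgeless (a cluster graph) and $|M|=|\F|+1$. The cut edges are $x_is_j$ whenever $u_i\in F_j$, and $x_it$ for every $i$. The intended dictionary is that a hitting set $H$ is recorded by the element-vertices receiving a private (singleton) colour: such an $x_i$ is a one-element colour class contained in $\closedneighbour{G}{s_j}$ for every $F_j\ni u_i$, so it witnesses domination of all those set-vertices. The sole purpose of $t$, which is adjacent to every element-vertex but to no $s_j$, is to hand each $x_i$ a cheap dominated class $\{t\}$ using only one extra colour overall, so that element-vertices meet the domination requirement without helping the set-vertices; note that $t$ is deliberately \emph{not} universal, so no $s_j$ gets free domination.

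The easy direction turns a hitting set $H$ with $|H|\le\kappa$ into a \domcol: colour $t$ privately, all of $M\setminus\{t\}$ with one shared colour (legal as $M$ is independent), each $x_i$ with $u_i\in H$ privately, and all remaining element-vertices with one further shared colour. This is proper; every element-vertex dominates $\{t\}$, every selected $x_i$ also dominates $\{x_i\}$, and each $s_j$ dominates the singleton class of some selected $x_i\in F_j$ because $H$ hits $F_j$. The colour count is $|H|+3\le\kappa+3$, so I take $\ell=\kappa+3$. For the converse I extract a hitting set from an arbitrary \domcol: because $M$ is independent and $t\notin\closedneighbour{G}{s_j}$, the colour class dominated by a set-vertex $s_j$ is contained in $\{x_i:u_i\in F_j\}$; choosing one element-representative from the class dominated by each $s_j$ yields a hitting set of size at most the number of distinct ``serving'' colours (those appearing only on element-vertices and realised as some such dominated class). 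A short argument that ``covering the sets by contained element-subsets'' is never cheaper than hitting them shows, in particular, that this number is at least $\mu$, the minimum hitting-set size.

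The main obstacle is to pin the additive constant exactly, i.e.\ to prove $\mathrm{minDomCol}(G)=\mu+3$, which is what makes the equivalence hold at the single value $\ell=\kappa+3$. The easy direction already yields ``$\le$''; the matching ``$\ge$'' requires showing that every \domcol spends at least three colours that can never be a dominated class of a set-vertex. Any such class avoids $t$ and all of $M$, so the colour of $t$ and every colour appearing on the set-vertices are non-serving; to force a third, genuinely distinct non-serving colour I would pad $U$ with a few extra elements inserted into an existing set (this leaves both $\mu$ and $|\F|$, hence $k$, unchanged) so that in any colouring using at most $\ell$ colours some in-a-set element-vertex must remain non-private and contribute a leftover colour which, being adjacent to a $c_M$-coloured set-vertex, cannot coincide with $t$'s colour. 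A normalization/exchange argument in the spirit of \Cref{switching twins domcol} then lets me assume an optimal \domcol gives $t$ a private colour, a single shared colour to the set-vertices, and private-or-shared colours to the element-vertices, after which the count is exactly $\mu+3$. I expect this rigidity analysis, rather than the reduction itself, to demand the most care.
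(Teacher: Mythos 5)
You have correctly identified where the difficulty lies (``the rigidity analysis \ldots to demand the most care''), but that analysis cannot be completed for your gadget, because the gadget itself is broken. Your converse rests on the claim that the colour class dominated by a set-vertex $s_j$ is contained in $\{x_i : u_i\in F_j\}$. Domination, however, is with respect to \emph{closed} neighbourhoods: if $s_j$ receives a private colour, then $\{s_j\}$ is a colour class contained in $\closedneighbour{G}{s_j}$, so $s_j$ dominates itself for free -- and, worse, every $x_i$ with $u_i\in F_j$ then also dominates the singleton class $\{s_j\}$, so such $x_i$ no longer need $t$ or a private colour. This does not merely complicate the proof of $\min$-\domcol$(G)=\mu+3$; it falsifies the reduction. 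Concretely, let $\F$ consist of $m\ge 2$ pairwise disjoint nonempty sets covering $U$ and take $\kappa=m-1$, a \no instance of \HS. In your graph, colour each $s_j$ privately, colour $t$ privately, and give all element-vertices one shared colour: this is a proper colouring with $m+2=\kappa+3$ colours in which each $s_j$ and $t$ dominates its own singleton class and each $x_i$ dominates $\{s_{j}\}$ for the set containing $u_i$. Hence $(G,\kappa+3)$ is a \yes instance of \DC while $\HSinstance$ is a \no instance. More generally, your graph admits hybrid strategies -- privately colour a subfamily $A\subseteq\F$ whose sets cover $U$ and hit only $\F\setminus A$ -- of cost $|A|+\mu(\F\setminus A)+2$, which can undercut $\mu+3$. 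Your proposed padding (extra elements inserted into an existing set) targets the third non-serving colour, but it does nothing against these self-domination and cover cheats; the root cause is that the budget $\ell=\kappa+3$ leaves slack that the colouring can spend on $M$ itself.

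The paper's reduction is designed precisely to remove this slack by tying $\ell$ to a clique rather than to $\kappa$. There, $G-M$ consists of a clique $Q_1$ on the $n=|U|$ element-vertices and a second clique $Q_2$ on $n-\kappa$ filler vertices, with two vertices $m_1,m_2$ adjacent to each other and to all of $Q_1\cup Q_2$, and $\ell=n+2$, which is exactly the size of the clique $Q_1\cup\{m_1,m_2\}$. Since every one of the $\ell$ colours already appears on this clique, a vertex $m_F$ coloured with a fresh private colour is unaffordable; $Q_2$ must reuse $n-\kappa$ of $Q_1$'s colours (it sees the colours of $m_1,m_2$), so at most $\kappa$ vertices of $Q_1$ carry singleton colours; and a separate normalization lemma shows one may assume all $m_F$ share $m_1$'s colour, forcing each $m_F$ to dominate a singleton colour class inside $Q_1$ -- those at most $\kappa$ vertices are the hitting set. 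Your Set Cover Conjecture bookkeeping via the dual instance is fine and matches the paper's (its CVD set has size $|\F|+2$ rather than your $|\F|+1$, which is immaterial), but to salvage your sparse, edgeless-cluster gadget you would need an analogous saturation mechanism making private colours on the modulator impossible, and at that point you have essentially reconstructed the paper's construction.
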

\begin{restatable}{theorem}{nopolykernel}\label{NPKDC}
	Unless $\npsubconp$ \DC, parameterized by the size of a CVD set, does not admit a polynomial kernel.
\end{restatable}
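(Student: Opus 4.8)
The plan is to establish the kernelization lower bound by reusing the \HS-to-\DC reduction underlying \Cref{SCCDC} as a \emph{polynomial parameter transformation} (PPT), and then combining it with a known incompressibility result for \HS via the standard transitivity of kernel lower bounds under PPTs. The starting point is that \HS parameterized by the size of its universe admits no polynomial kernel unless $\npsubconp$; this follows from the cross-composition (``colors and IDs'') machinery and is covered in \cite{Cygan2015}. The reduction establishing the relationship between \HS and \DC already turns a \HS instance into a \DC instance in which the CVD set essentially plays the role of the universe of the \HS instance, so the CVD parameter is bounded by a polynomial -- in fact linearly -- in the \HS parameter. This is exactly the feature a PPT requires.

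Concretely, I would proceed in three steps. First, I would verify that the reduction from \HS to \DC is a PPT: it runs in polynomial time, a \HS instance is a \yes instance if and only if the constructed \DC instance is a \yes instance, and the size $k$ of the produced CVD set satisfies $k \le \mathrm{poly}(p)$, where $p$ is the parameter of the source \HS instance. The answer-preservation is precisely what is proved for \Cref{SCCDC}; the only additional bookkeeping is to record that the CVD set size stays polynomially bounded in $p$. Second, I would check the two mild hypotheses needed for the transitivity theorem of Bodlaender--Thomass\'e--Yeo, namely that the classical version of \HS is \nph{} and that the classical version of \DC lies in \np{} (the latter is immediate, since a claimed \domcol together with, for each vertex, a color class it dominates is a polynomial-size certificate checkable in polynomial time). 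Third, I would conclude by contraposition: if \DC parameterized by CVD set size admitted a polynomial kernel, then composing the PPT with that kernel and with a polynomial-time reduction witnessing that \HS is \nph{} would yield a polynomial kernel for \HS under its parameter, contradicting the incompressibility result above. Hence \DC parameterized by CVD set size admits no polynomial kernel unless $\npsubconp$.

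The main obstacle is the first step: confirming that the \Cref{SCCDC} reduction is a bona fide PPT in the kernelization sense. For \Cref{SCCDC} one only needs the reduction to preserve answers and to blow up the parameter by a factor $1 + o(1)$ to transfer a fine-grained bound; here I must instead be careful about exactly which \HS parameterization is invoked and confirm that the CVD set size is bounded by a \emph{polynomial} in that parameter (a linear bound suffices). A secondary, easily-handled point is that the \HS parameterization whose incompressibility I cite must be the same one that the reduction consumes; once the reduction is arranged so that the CVD set realizes the \HS universe, this alignment is automatic and the argument closes.
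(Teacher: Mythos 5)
Your high-level route---treat the \HS-to-\DC construction as a polynomial parameter transformation (PPT), invoke a known incompressibility result for \HS, and transfer the lower bound via the standard framework after checking that \HS is \nph and \DC is in \np---is exactly how the paper derives \Cref{NPKDC} from \Cref{Equivalence between DCCVD and HS}. However, you have the parameter alignment backwards, and since your argument explicitly rests on it (``the CVD set realizes the \HS universe \dots this alignment is automatic''), this is a genuine gap rather than bookkeeping. In the paper's construction the universe $U$ does \emph{not} go into the modulator: it becomes the clique $Q_1$ (together with the padding clique $Q_2$) that survives deletion, while the CVD set is $M = M_{\F} \cup \{m_1,m_2\}$, with one modulator vertex $m_F$ per set $F \in \F$. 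Hence the produced parameter is $k = |\F| + 2$, i.e., the \emph{number of sets}, and the reduction is a PPT from \HS parameterized by $|\F|$, not by $|U|$. With your choice of source parameterization the PPT condition fails outright: $|\F|$, and hence $k$, need not be bounded by any polynomial in $|U|$.

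The repair is to invoke the incompressibility of \HS parameterized by the number of sets (equivalently, \textsc{Set Cover} parameterized by universe size, or \textsc{Red-Blue Dominating Set} parameterized by the side to be dominated), which is what Dom, Lokshtanov, and Saurabh prove and what the paper cites; the universe-size result you quote is a different theorem from the same line of work and is of no use with this reduction. This alignment is also forced by the companion bound \Cref{SCCDC}: the $(2-\epsilon)^k$ lower bound under the Set Cover Conjecture only makes sense because $k \approx |\F|$, as the conjecture speaks about the dual universe. Your fallback---rearranging the reduction so that the CVD set realizes $U$---is not something you can get for free: it would be an entirely new gadget requiring its own analogue of \Cref{Equivalence between DCCVD and HS}, which your proposal does not supply. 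With the citation switched to the number-of-sets parameterization, the remainder of your three-step argument (PPT verification, \np membership of \DC, contraposition) goes through exactly as in the paper.
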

Let $U$ be a set of elements and $\F \subseteq \powerset{U}$. A set $S \subseteq U$ is called a \textit{hitting set} of $\F$ if, for all $F \in \F$, $F \cap S \neq \emptyset$. We define the well known \HS problem below.
\defproblem{\HS}
{A set of elements $U$ and $\F \subseteq \powerset{U}$; an integer $\kappa$}
{Does there exist a hitting set $S \subseteq U$ of $\F$ with $|S| \leq \kappa$?}
We let $\HSinstance$ denote an instance of \HS and assume that $|U| = n$. We will construct an instance $\DCinstance$ of \DC from an instance $\HSinstance$ of \HS which has the following property: $\HSinstance$ is a \yes instance if, and only if, $\DCinstance$ is a \yes instance where $\ell = n+2$.

For each $F \in \F$, we have a vertex $m_F \in V(G)$ and for each $u \in U$, we have a corresponding vertex $v_u \in V(G)$. Moreover, in $V(G)$, we have a set $Q_2$ of $(n-\kappa)$-many vertices and two special vertices $m_1$ and $m_2$. We now define the edges of $G$. All the vertices of $Q_1 = \{v_u \mid u \in U\}$ are connected to each other as are all vertices in $Q_2$. The vertices $m_1$ and $m_2$ are adjacent to all vertices in $Q_1 \cup Q_2$ and to each other. For an $F \in \F$, $\openneighbour{G}{m_F} = \{v_u \in Q_1 \mid u \in F\}$. Let $M_{\F} = \{m_F \mid F \in \F\}$. Refer to \Cref{HS to DC Figure} for a pictorial representation of $G$. 

Note that $M = M_{\F} \cup \{m_1,m_2\}$ is a cluster vertex deletion set of $G$ and $G - M$ consists of the two cliques $Q_1$ and $Q_2$. Let $\ell = n+2$ and $k = |M| = |\F| + 2$. First, note that $G$ cannot be colored with fewer than $(n+2)$-many colors since $Q_1 \cup \{w_1,w_2\}$ is a clique of size $n+2$. The main result of this section is the following.

\begin{theorem} \label{Equivalence between DCCVD and HS}
	$\HSinstance$ is a \yes instance of \HS if, and only if, $\DCinstance$ is a \yes instance of \DC.
\end{theorem}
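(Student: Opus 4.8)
The plan is to prove the two directions by tracking which of the $n$ colors appearing on the clique $Q_1$ occur \emph{only} on a single vertex of $G-M$; these ``private'' colors will be in bijection with a hitting set. The first step, common to both directions, is a structural observation about any proper coloring $\rchi$ of $G$ using at most $\ell=n+2$ colors. Since $Q_1\cup\{m_1,m_2\}$ is a clique on $n+2$ vertices, $\rchi$ must use exactly $n+2$ colors, all distinct on this clique; in particular every color of $\im{\rchi}$ already occurs on $Q_1\cup\{m_1,m_2\}$. Because $Q_2$ is a clique of size $n-\kappa$ whose vertices are all adjacent to $m_1$ and $m_2$, its $n-\kappa$ (necessarily distinct) colors avoid $\rchi(m_1),\rchi(m_2)$ and hence lie among the $n$ colors of $Q_1$. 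Consequently exactly $\kappa$ of the $Q_1$-colors do \emph{not} reappear on $Q_2$; I would set $S=\{u\in U \mid \rchi(v_u)\text{ is not used on }Q_2\}$ and record $|S|=\kappa$.

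For the forward direction I would assume a hitting set of size at most $\kappa$, pad it (using $\kappa\le n$) to a set $S$ with $|S|=\kappa$, and build an explicit \domcol. I would give $Q_1$ the colors $c_1,\dots,c_n$ (one per $v_u$), set $\rchi(m_1)=c_{n+1}$ and $\rchi(m_2)=c_{n+2}$, color the $n-\kappa$ vertices of $Q_2$ bijectively with $\{\rchi(v_u)\mid u\notin S\}$, and color every $m_F$ with $c_{n+1}$. Properness is routine: $m_F$ is adjacent only to $\{v_u\mid u\in F\}$, none of which uses $c_{n+1}$. For domination I would note that $\rchi(m_2)=c_{n+2}$ is a singleton, so every vertex adjacent to $m_2$ -- all of $Q_1\cup Q_2\cup\{m_1\}$ -- dominates it, while $m_2$ dominates its own class; finally, for each $F$ the hitting-set property yields a $u^*\in F\cap S$, and $\rchi(v_{u^*})$ is a singleton lying in $\closedneighbour{G}{m_F}$, so $m_F$ dominates it.

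For the reverse direction I would start from a \domcol $\rchi$ with $\le n+2$ colors, take $S$ as above (so $|S|=\kappa$), and show $S$ hits every $F\in\F$. The crux is to analyse $m_F$, whose closed neighbourhood is $\{m_F\}\cup\{v_u\mid u\in F\}$. I would first argue that $m_F$ cannot dominate its own color class: since every color occurs on $Q_1\cup\{m_1,m_2\}$, the color $\rchi(m_F)$ is anchored either on $m_1$, on $m_2$, or on some $v_{u'}$ with $u'\notin F$ (properness forbids $u'\in F$), and each such anchor lies outside $\closedneighbour{G}{m_F}$. Hence $m_F$ dominates some class $c\neq\rchi(m_F)$, which must therefore be contained in $\{v_u\mid u\in F\}$. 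As $Q_1$ is a clique this class is a single vertex $v_u$ with $u\in F$, and -- being contained in $\{v_{u'}\mid u'\in F\}\subseteq Q_1$ -- its color does not appear on $Q_2$, so $u\in S$. Thus $u\in F\cap S$ and $S$ is a hitting set of size $\kappa$.

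I expect the main obstacle to be the reverse direction, specifically the ``anchoring'' argument that $m_F$ can never dominate its own color class; everything else (the counting that pins $|S|=\kappa$ and the explicit construction in the forward direction) is routine once that fact is in place. A secondary point to handle carefully is the bookkeeping ensuring $|S|$ is \emph{exactly} $\kappa$ in both directions -- padding a smaller hitting set forward, and the clique-size counting backward -- since the value $\ell=n+2$ and the size $n-\kappa$ of $Q_2$ must line up precisely for the equivalence to hold.
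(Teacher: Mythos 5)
Your proposal is correct, and the construction, the forward coloring, and the counting all coincide with the paper's; the difference is in how the backward direction is organized, and your route is a genuine (and arguably cleaner) variant. The paper factors \Cref{Equivalence between DCCVD and HS} through a canonical form via two lemmas: first, $\HSinstance$ is a \yes instance iff there is a \domcol $\rchi$ with $|\rchi|=n+2$ in which $\rchi(m_1)=\rchi(m_F)$ for all $m_F\in M_{\F}$; second, an arbitrary optimal \domcol $\overline{\rchi}$ can be \emph{normalized} to this form by recoloring every $m_F$ with $\overline{\rchi}(m_1)$ and, for each $m_F$ that dominated its own class, recoloring a neighbor of $m_F$ with $\overline{\rchi}(m_F)$. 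You bypass the normalization entirely: your anchoring observation -- any proper coloring with at most $n+2$ colors uses exactly $n+2$, so every color class meets the $(n+2)$-clique $Q_1\cup\{m_1,m_2\}$, and since the anchor of $\rchi(m_F)$ is $m_1$, $m_2$, or some $v_{u'}$ with $u'\notin F$, it lies outside $\closedneighbour{G}{m_F}$ -- shows $m_F$ can never dominate its own class, forcing it to dominate a singleton class $\{v_u\}$ with $u\in F$ whose color is private to $Q_1$, which yields the hitting set directly from an \emph{arbitrary} \domcol. Your argument in fact shows that the own-class case the paper's second lemma guards against is vacuous for colorings with exactly $n+2$ colors, so the recoloring step can be dispensed with; it also silently handles the degenerate case $F=\emptyset$ (then $m_F$ can dominate no class at all, matching the fact that no hitting set exists). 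What the paper's decomposition buys is an explicit canonical-coloring statement that cleanly separates the combinatorics of the gadget from the \HS equivalence; what your direct route buys is brevity and the sharper structural fact about own-class domination. The remaining bookkeeping in your write-up (exactly $\kappa$ private $Q_1$-colors because $Q_2$'s $n-\kappa$ distinct colors must avoid $\rchi(m_1),\rchi(m_2)$; padding a smaller hitting set to size exactly $\kappa$ in the forward direction) is accurate and matches the paper's counting.
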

It is known, due to \cite{Dom2014}, that \HS does not admit a polynomial kernel unless $\npsubconp$. Moreover this also implies, under the Set Cover Conjecture \cite{Cygan2016}, that \DC does not admit any $\ordernoinput{{(2-\epsilon)}^{k}}$-time algorithm where $k$ is the size of a CVD set. Therefore, \Cref{NPKDC,SCCDC} follow from \Cref{Equivalence between DCCVD and HS}. We prove this theorem in two steps: first we show that $\HSinstance$ is a \yes instance of \HS if, and only if, there exists a \domcol $\rchi$ of $G$ with $|\rchi| = n+2$ such that all vertices in $M \setminus \{m_2\}$ get the same color and then show that there exists a \domcol with this property of optimal size. We prove these two results as the final two lemmas of the paper below.
\begin{figure}[ht]
	\centering
	\includegraphics[width=0.48\linewidth]{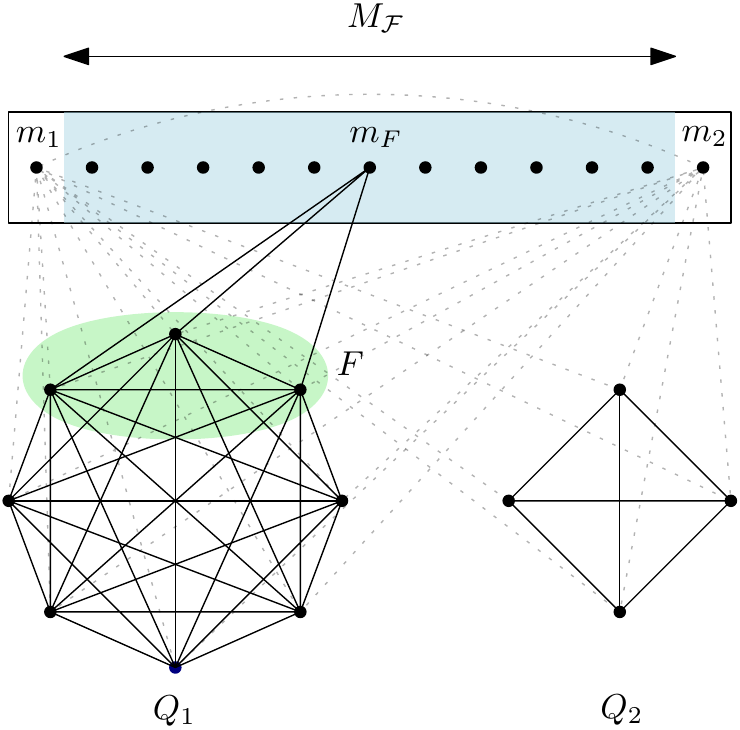}
	\caption{Illustration of construction of $\DCinstance$ from $\HSinstance$.}
	\label{HS to DC Figure}
\end{figure}
\begin{mylemma}
	$\HSinstance$ is a \yes instance of \HS if, and only if, there exists a \domcol $\rchi$ of $G$ with $|\rchi| = n+2$ such that $\rchi(m_1) = \rchi(m_F)$ for all $m_F \in M_{\F}$.
\end{mylemma}
\begin{proof}
	Assume that $\HSinstance$ is a \yes instance of \HS. Then, there exists a set $S \subseteq U$ with $|S| \leq \kappa$ which hits every set of $\F$. We define a coloring $\rchi$ of $G$ as follows. Color each vertex in $Q_1 \cup \{m_1,m_2\}$ a different color and let $\rchi(m_F) = \rchi(m_1)$ for all $m_F \in M_{\F}$. Let $Q^S_1 = \{v_u \in Q_1\mid u \in S\}$. Then, $|Q^S_1| = |S| \leq \kappa$ and hence $|Q_1 \setminus Q^S_1| \geq n - \kappa$. Color the $(n-\kappa)$-many vertices in $Q_2$ using the colors of in $\rchi(Q_1 \setminus Q^S_1)$. 
	
	Note that $\rchi$ is indeed a proper coloring of $G$ -- for if $\rchi(u) = \rchi(v)$ for some $u$ and $v$, then either $u,v \in M \setminus \{m_2\}$ or (without loss in generality) $u \in Q_1$ and $v \in Q_2$. In either case, $(u,v) \notin E(G)$. Moreover, vertices in $Q_1 \cup Q_2 \cup \{m_1,m_2\}$ dominate the color class $\rchi(m_2)$. Consider a vertex $m_F \in M_{\F}$. Since there exists a $s \in S \cap F$, $x_F$ is adjacent to $v_s \in Q_1$. Since $\rchi(v_s)$ is not used to color any other vertex in the graph, $m_F$ dominates this color class. Since $|\rchi| = n+2$ and $\rchi(m_1) = \rchi(m_F)$ for all $m_F \in M_{\F}$, we have the forward direction of this proof.
	
	Now, let $\rchi$ be a \domcol of $G$ with $|\rchi| = n+2$ where $\rchi(m_1) = \rchi(m_F)$ for all $m_F \in M_{\F}$. Since $Q_1 \cup \{m_1,m_2\}$ is a clique of $G$, $|\rchi(Q_1 \cup \{m_1,m_2\})| = n+2$. Moreover, the $(n-\kappa)$-many vertices in $Q_2$ cannot be colored with either $\rchi(m_1)$ or $\rchi(m_2)$ as $\closedneighbour{G}{m_1} = \closedneighbour{G}{m_2} \supseteq Q_2$. Thus, $\rchi$ must use $(n-\kappa)$-many colors of $\rchi(Q_1)$ to color the vertices of $Q_2$. 
	
	Moreover as every vertex in $M_{\F}$ is colored with $\rchi(m_1)$, they must dominate a color class that is only used in $Q_1$. By the final sentence of the previous paragraph, there are at most $\kappa$-many such vertices in $Q_1$. Let $Q^S_1$ denote this set of vertices and $S = \{u \in U \mid v_u \in Q^S_1\}$. Each vertex in $M_{\F}$ must be adjacent to at least one vertex in $Q^1_S$. By construction of $G$ this implies that $S \cap F \neq \emptyset$ for all $F \in \F$. Since $|S| \leq \kappa$, $\HSinstance$ is a \yes instance of \HS.
\end{proof}
\begin{mylemma}
	$\DCinstance$ is a \yes instance of \DC if, and only if, there exists a \domcol $\rchi$ of $G$ with $|\rchi| = n+2$ such that $\rchi(m_1) = \rchi(m_F)$ for all $m_F \in M_{\F}$.
\end{mylemma}
\begin{proof}
	Assume that $\DCinstance$ is a \yes instance of \DC. Then, there exists a \domcol $\overline{\rchi}$ of $G$ with $|\overline{\rchi}| = n+2$. By construction of $G$, a vertex $m_F \in M_{\F}$ must either dominate the color class $\overline{\rchi}(m_F)$ or $\rchi(v_u)$ for some $v_u \in Q_1$. Note that, in either case, exactly one vertex is colored with the color $m_F$ dominates. Let $\delta(m_F)$ denote that vertex. Let $M_a = \{m_F \in M_{\F} \mid \delta(m_F) = m_F\}$. We construct a coloring $\rchi$ of $G$ from $\overline{\rchi}$ as follows: color every vertex in $M_{\F}$ $\overline{\rchi}(m_1)$, for each $m_F \in M_a$ color an arbitrary vertex in $\openneighbour{G}{m_F}$ with $\overline{\rchi}(m_F)$ (if $\openneighbour{G}{m_F} = \emptyset$, then $|\overline{\rchi}| > n+2$), and color the rest of the vertices with the same color that was used by $\overline{\rchi}$.
	
	Note that the only color that is used multiple times by $\rchi$ but not by $\overline{\rchi}$ is $\rchi(m_1) = \overline{\rchi}(m_1)$. Since $\rchi^{-1}_d(m_1) = M_{\F} \cup \{m_1\}$ is an independent set, $\rchi$ is a proper coloring of $G$. Moreover, note that every vertex in $V(G) \setminus M_{\F}$ dominates $\rchi(m_2)$ as $m_2$ is the only vertex colored with $\rchi(m_2)$ by $\rchi$. As every vertex $m_F \in M_{\F}$ has a $v \in \openneighbour{G}{m_F}$ which gets a unique color in $\rchi$, they dominate a color class as well. Hence, $\rchi$ is a \domcol of $G$ with $|\rchi| = n+2$ such that $\rchi(m_1) = \rchi(m_F)$ for all $m_F \in M_{\F}$. Since the other side of the claim is trivial, this completes the proof of this lemma.
\end{proof}
This proves \Cref{Equivalence between DCCVD and HS} and therefore, \Cref{SCCDC,NPKDC}.

\section*{Conclusion and Future Work}
In this paper, we introduced the study of structural parameterizations of \DC and \CDC. There were three parameters of interest: clique modulator size, twin cover size, and CVD set size. We designed randomized algorithms, based on an inclusion-exclusion based polynomial sieving method, running in $\ordernoinput{c^k}$-time (for $c = 16$ for \DC and $c = 2$ for \CDC),  where $k$ was the clique modulator size. For twin cover size, we had $\ordernoinput{2^{\order{k\log{k}}}}$-time algorithms for both problems. These algorithms hinged on solving \PreCol-type problems quickly for \domcol{}s and \classdcol{}s. Finally, for the smallest parameter, CVD set size, we described an involved a double exponential-time branching algorithm for \DC. For \CDC, our (much simpler) algorithm also depended on the number of cliques in the cluster graph obtained by deleting a CVD set. We list a few interesting open questions below:

\begin{enumerate}[(i)]
	\item We have shown that under ETH, we cannot do better than a $\ordernoinput{2^k}$-time algorithm for our two problems when $k$ is the clique modulator size. While we have achieved this bound for \CDC, our algorithm for \DC is (possibly) suboptimal. Crucially, the current bottleneck in our algorithm's running time is the exact $\ordernopoly{4^n}$-time algorithm for \DC we designed in \Cref{exactalg}. If this can be bettered to $\ordernopoly{2^n}$ (which, we believe, is possible), our algorithm will run in $\ordernoinput{4^k}$-time.
	\item Does a faster \fpt algorithm for \DC parameterized by CVD set size exist? We have shown that \CDC, parameterized by CVD set size and the number of cliques which remain, is \fpt. Is there a branching algorithm, similar to the one in \Cref{DC-CVD} for \DC, which is \fpt with respect to \textit{only} CVD set size? Is \CDC, instead, \wih with respect to this parameter?
	\item While we have shown strong positive results in this work, most lower bounds we designed are (possibly) loose. The constructions we use in \Cref{lowerbounds} all use a few (at most two) cliques -- we believe that more involved constructions can give much tighter bounds, especially for parameterization with CVD set size.
\end{enumerate}
\section*{Acknowledgments}
The authors would like to thank Shravan Muthukumar for their preliminary discussions on \DC and \CDC. They are grateful to the anonymous reviewers of the 40th International Symposium on Theoretical Aspects of Computer Science (STACS 2023) for their in-depth comments on the first version of this work which only discussed \DC. 
\section*{On Rolf Niedermeier}

The last author fondly remembers his long association with Rolf Niedermeier as one of the early converts to parameterized complexity. His first meeting was in the first Dagstuhl on parameterized complexity and he was glad to know that Rolf was inspired by his early survey in the area. Since then both have had multiple meetings including a joint Indo-German project when Rolf was in Jena. He fondly remembers his interesting visits to Jena and Rolf's visit to Chennai. He is also particularly inspired by Rolf's constant quest for application areas where parameterized complexity can be applied.

\bibliographystyle{splncs04}
\bibliography{Bibliography.bib}

\end{document}